\documentclass[a4paper]{article}

	\usepackage[standard,thmmarks]{ntheorem}
	\usepackage{amsmath,xspace,hyperref,cleveref,a4,xcolor}
	\usepackage{todonotes,pifont,caption,subcaption,authblk}
        \usepackage[all]{xy}
	\usepackage[british]{babel}
	\renewtheorem{theorem}{Theorem}[section]
	\renewtheorem{proposition}[theorem]{Proposition}
	\renewtheorem{lemma}[theorem]{Lemma}
	\renewtheorem{corollary}[theorem]{Corollary}
	\theorembodyfont{\upshape}
	\renewtheorem{example}[theorem]{Example}
	\renewtheorem{remark}[theorem]{Remark}
	\renewtheorem{definition}[theorem]{Definition}
	\newcommand{\parspace}{parametrised space}
	\newcommand{\sdone}{\texttt1}
	\newcommand{\sdzero}{\texttt0}
	\newcommand{\XX}{\mathbf X}
	\newcommand{\YY}{\mathbf Y}
	\newcommand{\ZZ}{\mathbf Z}
	\newcommand{\Z}{\mathbb Z}
	\newcommand{\RR}{\mathbb R}
	\newcommand{\lb}{\operatorname{lb}}
	\newcommand{\irram}{\texttt{iRRAM}\xspace}
	\newcommand{\bigo}[1]{\mathcal O \left(#1\right)}
	\newcommand{\demph}{\textbf}
	\newcommand{\str}{\mathbf}
	\newcommand{\abs}[1]{\left| #1\right|}
	\newcommand{\length}{\abs}
	\newcommand{\B}{\mathcal B}
	\newcommand{\DD}{\mathbb D}
	\newcommand{\ID}{\mathrm I\mathbb D}
	\newcommand{\NN}{\mathbb N}
	\newcommand{\mto}{\rightrightarrows}
	\newcommand{\sop}{\mathbf{SOP}}
	\newcommand{\Set}[2]{\left\{ #1 \mid #2 \right\}}
	
	\newcommand{\reg}{\mathrm{LM}}
	\newcommand{\KCrep}{Kawamura-Cook representation\xspace}
	\newcommand{\KCspace}{Ka\-wa\-mu\-ra-Cook space}
	\newcommand{\norm}[1]{\left\|#1\right\|}
	\renewcommand{\mod}{\operatorname{mod}}
	\newcommand{\ie}{\textit{i.e.}, }
	\DeclareMathOperator{\dom}{dom}
	\DeclareMathOperator{\Time}{time}
	\DeclareMathOperator{\diam}{diam}
	\DeclareMathOperator{\eval}{eval}

\date{}
\bibliographystyle{alphaurl}
\title{Parametrised second-order complexity theory\\ with applications to the study of interval computation\textsuperscript{\small1}}
\author{Eike Neumann}
\affil{Aston University \\
  School of Engineering \& Applied Science\\
  Aston Triangle\\
  Birmingham B4 7ET}
\author{Florian Steinberg\textsuperscript{\small2}}
\affil{INRIA Saclay \\
  Toccata Team\\
  B\^ at 650, Rue Noetzlin\\
  91190 Gif-sur-Yvette
}

\begin{document}
	\maketitle
	\abstract
	{
		We extend the framework for complexity of operators in analysis devised by Kawamura and Cook (2012)
		to allow for the treatment of a wider class of representations.
		The main novelty is to endow represented spaces of interest with an additional function on names,
		called a parameter,	which measures the complexity of a given name.
		This parameter generalises the size function which is usually used in
		second-order complexity theory and therefore also central to the framework of Kawamura and Cook.
		The complexity of an algorithm is measured in terms of its running time as a second-order
		function in the parameter,
		as well as in terms of how much it increases the complexity of a given name,
		as measured by the parameters on the input and output side.

		As an application we develop a rigorous computational complexity theory for interval computation.
		In the framework of Kawamura and Cook the representation of real numbers based on nested interval enclosures
		does not yield a reasonable complexity theory.
		In our new framework this representation	is polytime equivalent to
		the usual Cauchy representation based on dyadic rational approximation.
		By contrast, the representation of continuous real functions based on interval enclosures is
		strictly smaller in the polytime reducibility lattice
		than the usual representation, which encodes a modulus of continuity.
		Furthermore, the function space representation based on interval enclosures is optimal
		in the sense that it contains the minimal amount of information
		amongst those representations which render evaluation polytime computable.
	}
	\footnotetext[1]{\includegraphics[scale=0.04]{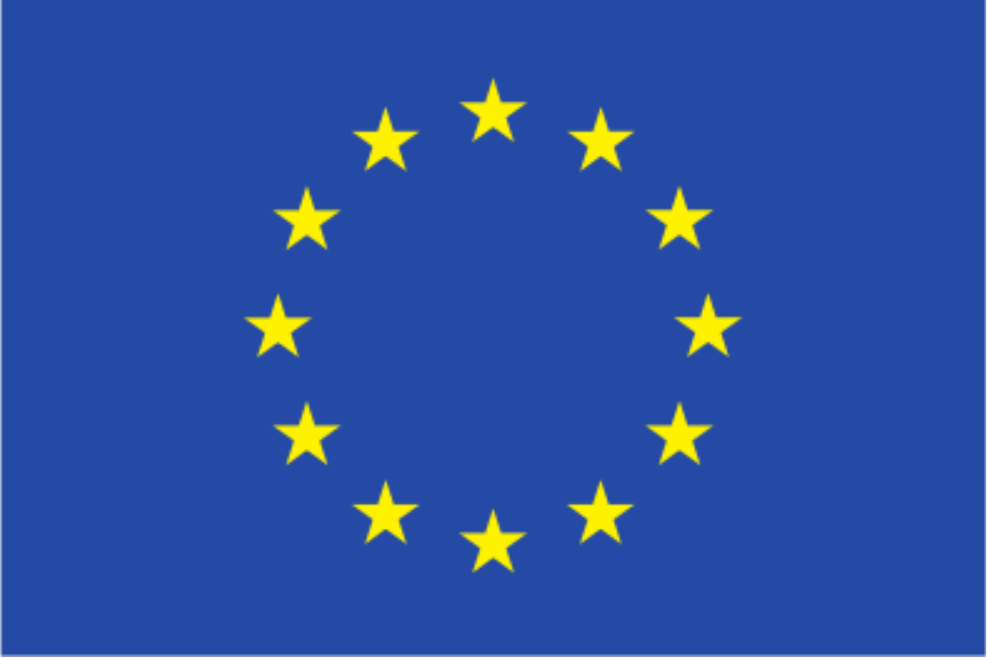} This project has received funding from the European Union’s Horizon 2020 research and innovation programme under the Marie Sk\l{}odowska-Curie grant agreement No 731143.}
	\footnotetext[2]{
	The second author was supported by the ANR project \emph{FastRelax}(ANR-14-CE25-0018-01) of the French National Agency for Research.}
	\tableofcontents
	\newpage
	\section{Introduction}\label{sec:introduction}

	Computable analysis is an extension of the theory of computation over the natural numbers
	to continuous data, such as real numbers and real functions, based on the Turing machine
	model of computation.
	Computability of real numbers is studied already in Turing's paper \cite{turing1936computable} on the halting problem.
	Computability of real functions was introduced by
	Grzegorczyk \cite{Grzegorczyk, Grzegorczyk2, Grzegorczyk3} and Lacombe \cite{MR0072080}.
 	Kreitz and	 Weihrauch \cite{KreitzWeihrauch, MR1795407} introduced a general theory of computation on second-countable $T_0$-spaces.
 	This was further generalised by Schr\"oder \cite{SchroederPhD, MR1923914} to $T_0$ quotients of countably based spaces,
  which constitute in a certain sense the largest class of topological spaces which can be endowed with a reasonable
  computability structure \cite[Theorem 13]{MR1923914}.

	One of the goals of computable analysis is to provide mathematically rigorous semantics for computation over
	continuous data structures.
	Algorithms in numerical analysis are usually described using the real number model,
	where real numbers are regarded as atomic entities.
	A widely used mathematically rigorous formalisation of this idea
	is the Blum-Shub-Smale \cite{BSS, BSS2} machine.
	Such algorithms cannot be implemented directly on a digital computer,
	as real numbers cannot be encoded with a finite number of bits.
	The usual substitution for real numbers are floating point numbers,
	which behave quite differently.
	For instance, addition and multiplication on floating point numbers are not even associative.
	Thus,	the behaviour of floating point algorithms depends on phenomena that are absent in the
	real number model, such as numerical stability	and error propagation.
	These issues have to be studied separately, which usually requires a substantial amount of additional effort.
	Even then, the precise contract that an implementation fulfils is usually not fully specified
	and the semantics of the implementation remain vague.
	As a consequence the semantics of an algorithm can differ considerably from the semantics
	of its implementation and different implementations may well have different semantics.

	By contrast, any algorithm based on computable analysis can be implemented directly on a physical computer.
	It consists of a rigorous specification of input and output, so that it precisely describes the steps
	that have to be taken to obtain the desired result to a given accuracy.
	Software packages based on computable analysis include
	\irram \cite{iRRAM}, Ariadne \cite{Ariadne}, AERN \cite{AERN},
	and RealLib \cite{MR2275414}.

	For the study of practical algorithms it is clear that computational complexity should play a
	central role.
	Whilst the notion of computability over continuous data is robust, well understood,
	and universally accepted in the computable analysis community,
	computational complexity in analysis is far less developed
	and even some basic definitions are the subject of ongoing debate.
	The study of computational complexity in analysis was initiated by Friedman and Ko \cite{KoFriedman}.
	They defined the computational complexity of real numbers and of real functions on compact intervals
	and proved some famous hardness results for problems such as integration, maximisation, or solving
	ordinary differential equations
	(see \cite{FriedmanMaxInt, KoODE}, cf.~also \cite{KawamuraODE}).
	This line of research is summarised nicely in Ko's book \cite{MR1137517}.
	The main gap in the work of Friedman and Ko is that,
	while	their definition of computational complexity for real functions carries over to functions on
	compact metric spaces, it does not generalise easily to functions on non-compact spaces.
	In practice one is most interested in the study of operators on infinite dimensional vector spaces,
	such as spaces of continuous functions, $L^p$-spaces or Sobolev spaces.
	The aforementioned hardness results concern operators of this kind,
	but they are non-uniform in the sense that they establish
	that the operators in question map certain feasibly computable functions to functions of
	conjecturally
	\footnote{assuming standard conjectures in computational complexity such as P $\neq$ NP.}
	high complexity.
  While such non-uniform lower bounds are sufficient to show that an operator is infeasible,
	it remained unclear which operators should be considered feasible.

	One of the main reasons for such a notion not being available was
	the lack of an accepted notion of feasibility for second-order functionals.
	A candidate solution had been proposed by Mehlhorn already in 1975 \cite{MR0411947}:
	The class of \demph{basic feasible functionals}, which he defined
	by means of a generalisation of a limited recursion scheme
	that leads to a  well-known characterisation of the polytime
	computable functions on the natural numbers.
	However, it remained a point of debate for a long time
	to which extent this class fully captures the intuitive notion of
	feasibility \cite{Cook1992}.
	Further investigations into this topic revealed the type-two basic feasible functionals
	to be a very stable class that became established as the foundation of second-order
	complexity theory \cite{Pezzoli1998,irwin_royer_kapron_2001}.
	An important step in this process, and something that opened the field up for
	applications, was the characterization of the basic feasible functionals by
	means of resource bounded oracle Turing machines due to Kapron and Cook \cite{MR1374053}.
	Based on this characterization,
	Kawamura and Cook introduced a framework for complexity of operators in analysis \cite{Kawamura:2012:CTO:2189778.2189780}
	that generalises the definition of feasibly computable functions
	of Friedman and Ko to a wider class of spaces, including the aforementioned examples.
	This kicked off a series of investigations \cite[and many more]{MR3239272,postive,lmcs:3924,10.1109/LICS.2017.8005139}.

	However, there remains a gap between theory and practice.
	Within the framework of Kawamura and Cook it is impossible to model the behaviour of software based on
	computable analysis such as the libraries mentioned above.
	The reason for this is that all these implementations are based on interval arithmetic
	or extensions thereof, such as Taylor models.
	The representations which underlie these approaches are known to exhibit highly pathological behaviour
	within the framework of Kawamura and Cook \cite{MR3219039,SteinbergPhD}
	and
	those representations which can be used within their framework do not always seem to be an appropriate
	substitute.
	For instance, in the Kawamura-Cook model any representation of the space of continuous functions
	which renders evaluation polytime computable
	also allows for the computation of some modulus of continuity of a given function in polynomial time.
	In \irram this requires an exponential exhaustive search, see \cite{Aminimal}.

	The present work is an attempt to bridge this gap by extending the framework of Kawamura and Cook
	in order to develop a meaningful complexity theory for a broader class of representations.

	We do so by endowing a represented space $(X, \xi)$ with an additional function
	$\mu \colon \dom (\xi) \to \NN^{\NN}$, called the \demph{parameter},
	which is intended to measure the complexity of the names of elements of $X$.
	These parameters are a generalisation of the size function which is used to measure complexity
	of string functions in second-order complexity theory.
	The pair $(\xi, \mu)$ is called a \demph{parametrised representation}
	and the triple $(X, \xi, \mu)$ is called a \demph{preparametrised space}.
	The complexity of an algorithm for computing a function $f \colon X \to Y$ between
	preparametrised spaces is measured by the
	dependence of the running time of the algorithm in terms of the parameter of the input name
	and by the growth in size of the parameter of the output name compared with the parameter of the input name.
	As in the Kawamura-Cook model, polytime computability is defined using second-order polynomials:
	An algorithm runs in polynomial time if and only if both its running time and the parameter of the output name
  are bounded by a second-order polynomial in the parameter of the input name.
	A preparametrised space is called a \demph{\parspace} if its identity function is polynomial
	time computable.

	The pathological behaviour of the representation of real numbers based
	on interval enclosures is eliminated by the natural choice of a parameter for this representation.
	The resulting parametrised representation is polytime equivalent to the usual Cauchy representation
	with the size function as parameter	(\Cref{resu:equivalence to the Cauchy representation}).
	In particular, a real function is polytime computable with respect to the parametrised interval 
	representation if and only if it is polytime computable in the usual sense.
	While this result might suggest that nothing much is gained from this new definition,
	we show that the natural uniform complexity structure on the space of real functions
	viewed as a parametrised space is different from the complexity structure induced by the Kawamura-Cook 
	representation,
	and that the complexity induced by the natural parametrised space structure 
	corresponds more closely to the complexity of operators in practical implementations.
	We define a parametrised space $C(I)_i$ of continuous functions based on interval
	enclosures, and show that this space most precisely reflects the behaviour of functions
	expected from implementations:
	On one hand, it leads to the right notion of polytime computability of functions
	(\Cref{resu: polytime computability of functions}), function evaluation is polytime
	computable (\Cref{resu:evaluation}) and the same is	true for many of the usual operations
	one wants to compute quickly (\Cref{resu:operations}, \Cref{resu: more operations}).
	On the other hand, finding a modulus of continuity, which is notoriously slow in
	practice, is provably not polytime computable (\Cref{resu:modulus}).
        It is proved in the appendix that this space is isomorphic to a natural model of the space of
        real functions used in \irram (\Cref{resu: irram funs and interval funs}).

	We investigate the \parspace\ of interval functions further and prove
	that any other parametrised representation of this space
	such that evaluation is polytime computable can be translated to it (\Cref{resu:minimality}).
	There are two reasons why we consider this result to be especially important:
	Firstly it resembles a result that Kawamura and Cook proved about a representation they introduced,
	which is currently considered to be the standard representation for the continuous
	functions on the unit interval for this reason.
	We compare their representation to the representation using interval enclosures and show that it sits
	strictly higher	in the lattice of polytime translatability (\Cref{resu: interval function < Kawamura-Cook function}).
	This reflects the fact that the minimality result by Kawamura and Cook ranges over a restricted class
	of \parspace s.
	We characterise the spaces they consider as essentially those
	that have a polytime computable parameter (\Cref{resu: Characterisation of Kawamura-Cook space}).
	The second reason why we consider the minimality result for interval functions
	to be important is that it includes a quantification over all parametrised representations.
	This demonstrates that the definition of a \parspace\ is not chosen too general to allow for meaningful results.
	Throughout the paper we provide more support for our belief that \parspace s are a good general framework
	for complexity considerations in computable analysis
	(for instance \Cref{resu: polytime function on interval reals has first-order time bound}).

\paragraph{Related work.}

	Parameters and parametrised complexity in our sense are present
	in the work of Rettinger \cite{RettingerCCA} and Lambov \cite{LambovParameter}.
	Rettinger works in a different setting, avoiding second-order polynomials and we
	significantly add to and modify Rettinger's ideas.
	Lambov's work includes a good part of our results on interval representations in a different language.
	However, Lambov does not attempt to build a general framework of parametrised complexity in analysis,
	which requires further ideas that are not present in his work.
	We hence believe that the present work extends his results considerably.

	For a restricted case some of the core definitions proposed in this paper are also present
	in the work of Kawamura, M\"uller, R\"osnick and Ziegler \cite{arxiv1211.4974}.
	The authors introduce parameter functions with integer values.
	This covers only a very special case of pre\parspace s,
	namely those where the value of the parameter on a name is a constant function.
	More significantly, their applications all make the value of the parameter accessible to the algorithm
	and therefore allow	for formulation in the framework of Kawamura and Cook by modification of the representations involved.
	For their applications this is unavoidable,
	as the functions they consider fail to be feasibly computable
	unless the parameter is provided as extra advice.
	Most of the content of the preprint \cite{arxiv1211.4974} was published in \cite{KAWAMURA2015689}.
	Unfortunately, in the published version the authors decided to
	further restrict the definition of parameters, by making it a
	function on the represented space, rather than a function
	on the domain of the representation.
	They have to recover the desired behaviour by first introducing suitable
	covering spaces of the spaces of interest.
	This makes it a lot more difficult to see the connection between their work and ours.
	Similar applications of parametrised complexity are found
    for instance in \cite{Thies1, Thies2, Pouly}.

	We also consider work by Schr\"oder to be related \cite{MR2090390}.
	There are two ways in which his results relate to the contents of this paper:
	The first connection is that he equips a representation with an additional
	integer-valued size function that can, just like in the previous paragraph,
	be considered a special case of a parameter.
	The second connection is more interesting but also more difficult to make:
	Schr\"oder provides conditions for represented spaces
	under which every machine that computes a function
	between these spaces has a well-defined first-order running time.
	This can be interpreted as devising a pair of a representation and a parameter for
	the computable functions between two spaces such that evaluation is polytime computable:
	The representation takes an index of a machine computing a realiser of a function to
	that function and the parameter assigns to such an index the running time of the machine it encodes.
	The running time of the evaluation operation is the overhead needed for simulating a machine.
	However, one should note that these observations are not explicitly stated in \cite{MR2090390}.
	Also, Schr\"oder does not consider polytime computability but only reasons
	about the existence of time bounds in general.

	Kawamura and Pauly \cite{MR3219039} study exponential objects in the category
	of polytime computable mappings.
	They consider one of the standard function space constructions from computable
	analysis, which is obtained by encoding a continuous function by an index
	of a Turing machine together with an oracle, such that the machine computes
	the function relative to the oracle.
	They show that well-behaved function spaces can be constructed for a class
	of spaces which they call ``effectively polynomially bounded''.
	This class of spaces is very similar to the one considered by Schr\"oder.
	Their function space construction can be viewed as an extension of the
	construction sketched in the previous paragraph by adding arbitrary oracles.
	Like Schr\"oder they measure the size of objects in effectively polynomially
	bounded spaces by means of a parameter function with values in the natural
	numbers.
	Their work is also the only example in the literature that we are aware of
	that discusses the issue of polytime computability of the identity function
	on a \parspace.
	Curiously, in the published version of the paper
	the connections to parametrised complexity are significantly obscured.
	The connection to our work is much more visible in an early preprint
	\cite{arxiv1401.2861v1}.

\paragraph{Outline of the paper.}

	Section \ref{sec:introduction} recalls the most important concepts from second-order complexity theory.
	The approach via resource bounded oracle Turing machines is chosen, and this is essential for rest of the paper.
	The discussion of the framework of Kawamura and Cook
	, which imposes an additional condition on the names,
	is postponed to the later Section \ref{sec:comparison to represented spaces}.
	The second part \ref{sec:notations and complexity on the reals} recalls the
	Cauchy and the interval representation of the real numbers and discusses
	how second-order complexity theory can be applied
	to find a well behaved notion of complexity of the former
	and why the same approach fails for the latter.

	Section \ref{sec:parametrised spaces} introduces the general concept of \parspace.
	Polytime computable functions are introduced using second-order polynomials.
	It is shown that they are closed under composition.
	Section \ref{sec: interval reals} applies this to the representation of real numbers based on sequences of nested intervals.
	It is shown that the \parspace\ of interval reals is polytime isomorphic to
	the \parspace\ of Cauchy reals with the size function as parameter.
	In particular, the polytime computable points of the interval representation are the usual polytime computable real numbers.

	Section \ref{sec: parspace of functions} introduces a parameter
	for the space of continuous functions on the unit interval, where a function is represented by an
	interval enclosure.
	It is shown that this choice of representation and parameter is optimal
	in the sense that the resulting structure of \parspace\ on the space of continuous
	functions	is minimal amongst those structures which render evaluation polytime computable.
	In particular, the polytime computable points of this representation are the usual polytime computable functions.

	Section \ref{sec:comparison to represented spaces} compares the approach presented here to the one of Kawamura and Cook.
	\Cref{resu: Characterisation of Kawamura-Cook space} characterises those \parspace{s} which admit a polytime equivalent
	length-monotone representation and thus can be treated within the framework of Kawamura and Cook.
	Section \ref{sec:comparison} shows that the interval representation of continuous functions
	is not of this kind.
	Hence, the interval representation of continuous functions
	contains strictly less information than the
  function space representation which is used by Kawamura and Cook.
	This result mainly relies on an auxiliary result due to Brau\ss e and Steinberg \cite{Aminimal}.

	\Cref{sec: non-monotone enclosures} discusses some alternative choices of representations
	to demonstrate the robustness of our definitions.
	The representation for real numbers used in \Cref{sec: non-monotone enclosures}
	 also coincides with the one used by M\"uller to model the behaviour of
	 \irram \cite{Mueller2001}.

	\subsection{Second-order complexity theory}\label{sec:second-order complexity theory}
		Fix a non-empty alphabet $\Sigma$.
		Let $M^?$ be an oracle machine,
		that is, a Turing machine with two designated tapes called the \lq oracle query\rq\ and the \lq oracle answer\rq\ tape and a special state called the \lq oracle state\rq.
		An oracle for such an oracle machine is an element of Baire space $\B:=\Sigma^{*\Sigma^*}$.
		The oracle machine $M^?$ can be executed on a given string $\str a \in \Sigma^*$ with a given oracle $\varphi \in \B$.
		It is executed like a regular Turing machine, except that whenever it enters the oracle state,
		the current content of the oracle answer tape is replaced with $\varphi(\str b)$,
		where $\str b$ is the current content of the oracle query tape.
		If the oracle machine $M^?$ terminates on input $\str a$ with oracle $\varphi$,
		the string that is written on its output tape after termination is denoted by $M^\varphi(\str a)$.
		This defines a partial function $M^\varphi$ which maps a string $\str a$
		such that $M^?$ terminates on $\str a$ and with oracle $\varphi$ to the string $M^\varphi(\str a)$.
		Every oracle machine $M^?$ computes a partial operator $F_{M^?}\colon{\subseteq\B\to\B}$ on Baire space:
		The domain of $F_{M^?}$ consists of all oracles $\varphi\in\B$ such that $M^\varphi$ is total.
		If $\varphi$ is an element of the domain of $F$ then
		the value of $F$ in $\varphi$ is given by $F_{M^?}(\varphi):= M^\varphi$.

		This is slightly different from the definition of oracle Turing machine in classical computability theory,
		where the oracles are subsets of $\Sigma^*$.
		While this is not important for computability considerations, it does make a difference when it comes to computational complexity.
		To be able to reason about complexity in this extended setting it is necessary to fix a convention for how to count oracle calls for the time consumption of a machine.
		We adapt the most common convention:
		An oracle call is considered to take one time step in which the entire answer appears on the oracle answer tape.
		Essentially this means that the machine is not forced to read the whole oracle answer.
		Another detail is the position of the read/write heads.
		We assume that the head position does not change during the oracle interaction.
		Denote the number of steps the oracle machine $M^?$ takes on input $\str a$ and with oracle $\varphi$ by $\Time_{M^?}(\varphi,\str a)$.
		Of course, one of the sanity checks for any reasonable computational model is that the details fixed above should be irrelevant.
		One should end up with the same computational complexity classes when,
		for instance,
		the machine returns its head to the beginning of the oracle query tape during the oracle interaction.

		Since the oracle $\varphi$ is considered an input of the computation, a function bounding the time consumption of an oracle machine should be allowed to depend on the size of the oracle in addition to the size of the input string.
		The most common way to measure the size of a string function is to use the worst-case length-increase from input to output.
		That is, for a string function $\varphi$ let its \demph{size} $\length{\varphi}:\NN\to\NN$ be defined by
		\begin{equation*}
			\tag{s}\label{eq:length} \length{\varphi}(n):= \max_{\length{\str a}\leq n}\{\length{\varphi(\str a)}\}.
		\end{equation*}
		Since a time bound for an oracle machine should produce a bound on the number of steps the execution takes from a size of the oracle and a size of the input it should be of type $\NN^\NN\times\NN\to\NN$.
		To talk about polytime computability it is necessary to find a subclass of functions of this type that is considered
		to have ``polynomial'' growth.
		\begin{definition}
			The class of \demph{second-order polynomials} $\sop \subseteq \NN^{\NN^\NN\times \NN}$ is the smallest class such that the following conditions hold:
			\begin{itemize}
				\item For all $p\in\NN[X]$ we have $(l,n)\mapsto p(n)\in\sop$.
				\item Whenever $P\in\sop$ then also $(l,n)\mapsto l(P(l,n))\in\sop$.
				\item Whenever $P,Q\in\sop$ then both the point-wise sum $P+Q$ and the point-wise product $P\cdot Q$ are contained in $\sop$.
			\end{itemize}
		\end{definition}
		Since second-order polynomials are used as running time bounds,
		only the values on functions that turn up as sizes of string functions are relevant.
		These are exactly the non-decreasing functions, \ie functions $l$ satisfying
		$ l(n+1)\geq  l(n)$.
		This restriction is important, as the following lemma fails for more general arguments:
		\begin{lemma}[Monotonicity]\label{resu:monotonicity}
			Let $P$ be a second-order polynomial and let $l,k:\NN\to\NN$ be non-decreasing functions such that $l$ is point-wise bigger than $k$.
			Then
			\[ \forall n\in\NN\colon P(l,n)\geq P(k,n). \]
		\end{lemma}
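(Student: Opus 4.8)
The plan is to proceed by structural induction on the second-order polynomial $P$, following the three clauses in the definition of $\sop$. Both hypotheses---that $l$ and $k$ are non-decreasing, and that $l$ dominates $k$ pointwise---will be carried through unchanged, and it is precisely the interaction of these two facts that makes the argument close.

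For the base case, suppose $P(l,n)=p(n)$ for some $p\in\NN[X]$. Such a $P$ does not depend on its first argument at all, so $P(l,n)=p(n)=P(k,n)$ and the inequality holds trivially. The sum and product clauses are routine: for $P=Q+R$ or $P=Q\cdot R$ the induction hypotheses $Q(l,n)\geq Q(k,n)$ and $R(l,n)\geq R(k,n)$ yield the claim directly, using in the product case that all values lie in $\NN$ and are hence non-negative, so that the two inequalities may be multiplied.

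The crucial step, and the only one where I expect any difficulty, is the oracle-application clause $P(l,n)=l(Q(l,n))$ with $Q\in\sop$. Here I would chain two estimates. First, since $l$ dominates $k$ pointwise, $l(Q(l,n))\geq k(Q(l,n))$. Second, the induction hypothesis gives $Q(l,n)\geq Q(k,n)$, and because $k$ is non-decreasing this yields $k(Q(l,n))\geq k(Q(k,n))=P(k,n)$. Concatenating the two gives $P(l,n)\geq P(k,n)$, which completes the induction.

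The hard part is thus entirely contained in this last case, and it clarifies the role of the non-decrease hypothesis. Pointwise domination only lets one compare $l$ and $k$ at a \emph{common} argument; but the induction forces a comparison of $k$ at the two \emph{distinct} values $Q(l,n)$ and $Q(k,n)$. Monotonicity of $k$ is exactly what bridges this gap, and its absence is what causes the statement to fail for general arguments, as remarked before the lemma.
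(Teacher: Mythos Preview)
Your proof is correct and follows exactly the approach the paper indicates: a straightforward structural induction on the formation rules of $\sop$. The paper does not spell out the details, but your treatment of the oracle-application clause---using pointwise domination to pass from $l$ to $k$ at the same argument, then monotonicity of $k$ together with the induction hypothesis to shift the argument---is precisely the intended argument.
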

		The proof is a straightforward induction.

		\begin{definition}\label{def:polytime functionals}
			We call a partial operator $F\colon{\subseteq \B\to\B}$ \demph{polytime computable}, if there is an oracle machine $M^?$ that computes $F$ and a second-order polynomial $P$ such that
			\[ \forall \varphi\in\dom(F),\forall \str a\in\Sigma^*\colon \Time_{M^?}(\varphi,\str a)\leq P(\length{\varphi},\length{\str a}). \]
		\end{definition}
		It was proved by Kapron and Cook \cite{MR1374053} that
		a total operator $F:\B\to \B$ is polytime computable if and only
		if the corresponding functional $(\varphi,\str a)\mapsto F(\varphi)(\str a)$
		is basic feasible in the sense of Mehlhorn \cite{MR0411947}.
		The generalisation to partial operators adds an additional choice:
		By our choice the machine is only required to comply with the time bound in the case where the oracle is in the domain of the operator.
		Another approach would be to require the existence of a total extension that runs in polynomial time.
		This corresponds to replacing the quantification over $\dom(F)$ by a quantification over all of Baire space.
		It is possible to prove that this does indeed lead to a more restrictive notion of polytime computability \cite{kawamura_et_al:LIPIcs:2017:7737}.

		To show closure of the class of polytime computable operators under composition, one needs monotonicity from \Cref{resu:monotonicity} and the following closure property of the second-order polynomials:
		\begin{proposition}\label{resu:closure sop}
			Whenever $P$ and $Q$ are second-order polynomials, then so are the following mappings:
			\[ (l,n)\mapsto P(l,Q(l,n)) \quad\text{and}\quad P(Q(l,.),n). \]
		\end{proposition}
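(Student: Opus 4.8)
The plan is to prove both statements by structural induction on the second-order polynomial $P$, and to treat the first statement first, since it will be needed in the inductive step for the second. Throughout I use that the constant maps $(l,n)\mapsto c$ lie in $\sop$ (take the constant polynomial $c\in\NN[X]$ in the first defining clause), so that $\sop$ contains scalars and, via repeated application of the product clause, all powers of a given second-order polynomial.

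For the first statement, $(l,n)\mapsto P(l,Q(l,n))$, I induct on the construction of $P$. In the base case $P(l,n)=p(n)$ for some $p\in\NN[X]$, and the point is that $\sop$ is closed under substituting a second-order polynomial into an ordinary polynomial: writing $p(X)=\sum_i a_i X^i$, the map $(l,n)\mapsto p(Q(l,n))=\sum_i a_i\,Q(l,n)^i$ is assembled from the constants $(l,n)\mapsto a_i$ and the powers $Q^i$ using the sum and product clauses. For the step where $P(l,n)=l(P'(l,n))$, the induction hypothesis yields that $R(l,n):=P'(l,Q(l,n))$ is a second-order polynomial, whence $P(l,Q(l,n))=l(R(l,n))\in\sop$ by the second defining clause. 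The cases $P=P_1+P_2$ and $P=P_1\cdot P_2$ are immediate, since composition in the second argument distributes over pointwise sums and products.

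For the second statement, $(l,n)\mapsto P(Q(l,\cdot),n)$ where $Q(l,\cdot)$ denotes the function $m\mapsto Q(l,m)$, I again induct on $P$. The base case $P(l,n)=p(n)$ is trivial, as the value does not depend on the function argument. Sums and products are once more handled by distributivity. The one case requiring care — and the crux of the whole proposition — is $P(l,n)=l(P'(l,n))$: substituting $Q(l,\cdot)$ for the function argument gives
\[ P(Q(l,\cdot),n)=Q\bigl(l,\,P'(Q(l,\cdot),n)\bigr). \]
By the induction hypothesis the inner expression $R(l,n):=P'(Q(l,\cdot),n)$ is a second-order polynomial, so it remains to recognise $(l,n)\mapsto Q(l,R(l,n))$ as a second-order polynomial, which is exactly an instance of the first statement applied to $Q$ and $R$.

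The main obstacle is precisely this interaction between the two statements. Substituting $Q(l,\cdot)$ into the function argument ``peels off'' an outer application $l(\cdot)$ into an application of $Q$ to an inner second-order polynomial, and closure under this operation is not directly available from the defining clauses but only through the first statement. Hence establishing the first statement before the second, and invoking it inside the $l(\cdot)$-case of the induction for the second, is what makes the argument go through.
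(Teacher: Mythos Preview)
Your proof is correct and follows exactly the approach the paper indicates: the paper does not spell out the argument but merely states that the proposition ``can be proven by a straightforward induction on the structure of second-order polynomials,'' and your structural induction on $P$---including the key observation that the $l(\cdot)$-case of the second statement reduces to an instance of the first---is precisely such an induction.
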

		Just like \Cref{resu:monotonicity}, \Cref{resu:closure sop} can be proven by a straightforward induction on the structure of second-order polynomials.
		\begin{theorem}[Composition]
			Whenever $F$ and $G$ are polytime computable, then so is $F\circ G$.
		\end{theorem}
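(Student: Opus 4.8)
The plan is to compose the two machines in the obvious way and then verify that the composite runs within a second-order polynomial time bound. Let $M^?$ and $N^?$ be oracle machines computing $F$ and $G$ together with second-order polynomials $P$ and $Q$ witnessing polytime computability, so that $\Time_{M^?}(\psi,\str a)\le P(\length\psi,\length{\str a})$ for all $\psi\in\dom(F)$ and $\Time_{N^?}(\varphi,\str a)\le Q(\length\varphi,\length{\str a})$ for all $\varphi\in\dom(G)$. I would build an oracle machine $K^?$ that, on oracle $\varphi$ and input $\str a$, simulates $M$ on $\str a$ and, whenever $M$ enters the oracle state with a query $\str b$ on its oracle query tape, computes the answer $N^\varphi(\str b)$ by simulating $N$ with the actual oracle $\varphi$ and writes it onto $M$'s oracle answer tape. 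Correctness is immediate: for $\varphi\in\dom(F\circ G)$ we have $\varphi\in\dom(G)$, so every simulated call $N^\varphi(\str b)$ halts and equals $G(\varphi)(\str b)$; hence $K$ simulates $M$ with oracle $G(\varphi)$, and since $G(\varphi)\in\dom(F)$ this halts and outputs $F(G(\varphi))(\str a)$.

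The crux is the time analysis, and the key estimate I would isolate first bounds the size of the intermediate oracle. Because an oracle machine writes at most one output symbol per step, $\length{N^\varphi(\str b)}\le\Time_{N^?}(\varphi,\str b)\le Q(\length\varphi,\length{\str b})$, and taking the maximum over $\length{\str b}\le n$ together with monotonicity of $Q$ in its second argument gives the pointwise estimate $\length{G(\varphi)}\le Q(\length\varphi,\cdot)$. Both sides are non-decreasing, so \Cref{resu:monotonicity} applies with any second-order polynomial placed in the first slot.

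Now I would bound the running time of $K$. The outer simulation runs for $S:=\Time_{M^?}(G(\varphi),\str a)\le P(\length{G(\varphi)},\length{\str a})$ steps of $M$; in particular $M$ issues at most $S$ oracle queries, each of length at most $S$. Answering one query $\str b$ costs at most $Q(\length\varphi,\length{\str b})\le Q(\length\varphi,S)$ simulation steps, plus overhead polynomial in the lengths involved for copying the query and the answer. Hence $\Time_{K^?}(\varphi,\str a)$ is bounded by a fixed ordinary polynomial in $S$ and $Q(\length\varphi,S)$. Applying \Cref{resu:monotonicity} to $\length{G(\varphi)}\le Q(\length\varphi,\cdot)$ yields $S\le P(Q(\length\varphi,\cdot),\length{\str a})$, and by \Cref{resu:closure sop} the map $(l,n)\mapsto P(Q(l,\cdot),n)$ lies in $\sop$; likewise $(l,n)\mapsto Q(l,P(Q(l,\cdot),n))$ lies in $\sop$ and bounds $Q(\length\varphi,S)$. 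Since $\sop$ is closed under pointwise sums and products, and hence under post-composition with ordinary polynomials, the entire expression is a second-order polynomial in $\length\varphi$ and $\length{\str a}$, which is the required time bound for $K^?$.

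I expect the main obstacle to be the passage from a time bound phrased in terms of $\length{G(\varphi)}$ to one phrased in terms of $\length\varphi$: this is exactly where the output-length estimate $\length{G(\varphi)}\le Q(\length\varphi,\cdot)$, the monotonicity lemma, and the closure proposition combine, and keeping the roles of the two slots of the second-order polynomials straight is the only delicate point. The remaining bookkeeping—counting queries, accounting for the universal-simulation overhead, and checking that ordinary-polynomial post-composition stays within $\sop$—is routine.
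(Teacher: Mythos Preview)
Your proposal is correct and follows essentially the same approach as the paper: build the composite machine by replacing $M$'s oracle calls with subroutine runs of $N$, bound the size of the intermediate oracle $G(\varphi)$ by $Q(\length\varphi,\cdot)$, apply \Cref{resu:monotonicity} to pass from a bound in $\length{G(\varphi)}$ to one in $\length\varphi$, and conclude via the closure properties of \Cref{resu:closure sop}. The paper actually defers this to the more general \Cref{resu:closure under composition} for \parspace s, whose proof specialises to exactly your argument once both parameters are the standard size function (the parameter-blowup condition then being automatic).
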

		This is proven in a more general setting in \Cref{resu:closure under composition} and we refrain from restating the proof here.

		To compute on more general spaces representations are used:
		\begin{definition}
			Let $X$ be a set.
			A \demph{representation $\xi$ of $X$} is a partial surjective function $\xi\colon{\subseteq \B\to X}$.
			A \demph{represented space} is a pair $\XX=(X,\xi)$ of a set and a representation of that set.
		\end{definition}
		The elements of $\xi^{-1}(x)$ are called the \demph{names} of $x$.
		An element of a represented space is called \demph{computable} if it has a computable name.

		Computability of functions between represented spaces can be defined via realisers.
		\begin{definition}\label{def: realiser}
			Let $f\colon\XX\to\YY$ be a function between represented spaces.
			A function $F:\subseteq \B\to \B$ is called a \demph{realiser} of $f$ if it translates names of the input to names of the output, that is if
			\[ \varphi \in\dom(\xi_{\XX}) \Rightarrow \xi_{\YY}(F(\varphi))= f(\xi_{\XX}(\varphi)). \]
		\end{definition}
                \begin{figure}
                  \hfill
                  \begin{subfigure}[b]{.2\textwidth}
                    \begin{xy}
                    \xymatrix{
                      \XX \ar[r]^f & \YY \\
                      \B \ar[u]^{\xi_{\XX}} \ar[r]_F & \B \ar[u]_{\xi_{\YY}}}
                    \end{xy}
                    \caption{$F$ realises $f$}\label{fig: diagram}
                  \end{subfigure}
                  \hfill
                  \begin{subfigure}[b]{.65\textwidth}
                    \begin{tikzpicture}
                      \draw (0,0) ellipse (.75cm and 1cm);
                      \node at (0,.5) {$\dom(g)$};
                      \node at (-.2,0) {$x$};
                      \draw[fill=black] (0,0) circle (.02cm);
                      \draw[dotted] (0,.01) -- (2.9,.6);
                      \draw[dotted] (0,-.01) -- (2.9,-.2);
                      \node at (-.2,-.5) {$y$};
                      \draw[fill=black] (0,-.5) circle (.02cm);
                      \draw[dotted] (0,-.49) -- (2.6,-.35);
                      \draw[dotted] (0,-.51) -- (2.5,-1.345);
                      \node at (1.25,-.25) {$g$};
                      \node at (4.5, -.5) {$f$};
                      
                      \draw (3,0) ellipse (1cm and 1.5cm);
                      \node at (3,1) {$\dom(f)$};
                      \draw (3,.2) ellipse (.5cm and .4cm);
                      \node at (3,.2) {$g(x)$};
                      \draw[dotted] (3,.6) -- (5.8,1.2);
                      \draw[dotted] (3,-.2) -- (5.8,-.2);
                      \draw (2.6,-.85) ellipse (.4cm and .5cm);
                      \node at (2.6,-.85) {$g(y)$};

                      \draw (6,.5) ellipse (1.1cm and .7cm);
                      \node at (6,.5) {$(f \circ g) (x)$};
                      \node at (6, -1) {$(f \circ g)(y) = \emptyset$};
                    \end{tikzpicture}
                    \caption{Multifunction composition visualized}
                  \end{subfigure}
                  \hfill
                  \caption{
                    The diagram (a) need not commute. The function $F$ realises $f$ if $\xi_\YY\circ F$ extends $f\circ\xi_\XX$ (or tightens it if $f$ is multivalued).}
                \end{figure}
                (also compare \Cref{fig: diagram}.)
	        A function between represented spaces is called \demph{computable} if it has a computable realiser.
		It is called \demph{polytime computable} if it has a polytime computable realiser.

		The later parts of this paper need a slight generalisation of the above to
		multi-valued functions.
		Recall that a \demph{multifunction} $f\colon X\mto Y$ assigns to each element $x\in X$ a subset $f(x)\subseteq Y$.
		Its \demph{domain} $\dom(f)$ consists of all $x \in X$ whose image under $f$ is nonempty.
		A partial single-valued function $g\colon \subseteq X \to Y$ can be identified
		with the multifunction 
		which sends elements $x \in \dom (g)$ to the singleton $\{g(x)\}$ and elements
		outside of the domain of $g$ to the empty set.
		The definition of computability using realisers generalises to multifunctions in a straightforward way:
		A function $F \colon {\subseteq \B \to \B}$ is called a realiser of $f$
		if $\xi_{\YY}(F(\varphi)) \in f(\xi_{\XX}(\varphi))$ for all
		$\varphi \in \dom(\xi_{\XX})$.
		The elements of $f(x)$ are thus interpreted as
		``acceptable return values'' 
		for an algorithm which computes $f$.
		If $f \colon X \mto Y$ and
		$g \colon Y \mto Z$ are multifunctions then their composition
		$g \circ f$ is the multifunction with
		\[\dom (g \circ f) = \left\{x \in X \mid f(x) \subseteq \dom (g) \right\}\]
		and
		\[g \circ f (x) = \left\{z \in g(y)  \mid y \in f(x) \right\}. \]
		This definition ensures that the composition of two realisers is a
		realiser of the composition.
		Note that while multivalued functions can formally be identified with relations,
		from conceptual point of view it is better not do do so.
		For instance, the composition defined above is different from the natural notion of composition for relations.
		Multifunctions are a standard tool in computable analysis to avoid
		certain kinds of continuity issues and are needed in \Cref{sec:comparison} for this exact reason.

	\subsection{Notations and complexity on the reals}\label{sec:notations and complexity on the reals}
		A name of an element of a represented space should be understood as a
		black box that provides on-demand information about the object it encodes.
		For real numbers, for instance, a reasonable query to such a black box
		could be \lq provide me with a $2^{-n}$ approximation to the real number\rq, and the answer that the name provides should be such an approximation.
		The input and the output of the name are finite binary strings, and questions like the above can be formulated by encoding elements of discrete structures like the integers and the rational numbers.

		A \demph{notation of a space $X$} is a partial surjective mapping $\nu_X\colon{\subseteq\Sigma^*\to X}$.
		Fix the following standard notations:
		Let $\nu_{\Z}$ be the mapping defined on a string $\str a = a_0\sdone a_2\ldots a_{\length{\str a}}$ whose second digit is a $\sdone$ by
		\[ \nu_{\Z}(\str a) =(-1)^{a_0} \sum_{i=1}^{\length{\str a}-1} a_i 2^{\length{\str a}-i} \]
		and zero on strings that do not have a second digit.
		A dyadic rational is a rational number of the form $\frac r{2^n}$ for some $r\in\Z$ and $n\in \NN$.
		The set of these numbers is denoted by $\DD$.
		The reason for their use is that they are a good model for machine numbers,
		as they are precisely those rational numbers which have a finite binary expansion.
		Encode a dyadic number as its unique finite binary expansion starting in a code of an integer followed by a separator symbol that is used to mark the position of the decimal point.
		To avoid confusion with unary and binary notations, we do not specify a notation for the natural numbers.
		Instead of working on $n\in\NN$ directly, we use the integer $2^n$.
		This means that implicitly use the unary encoding of natural numbers while we use the binary encoding for the integers.

		To also be able to accept or return pairs of integers or dyadic numbers use a pairing function for strings.
		For technical reasons that become apparent in \Cref{sec: interval reals}, we choose to use a very specific pairing function.
		For two strings $\str a$ and $\str b$ let the pairing $\langle\str a,\str b\rangle$ be the string which is constructed as follows:
		Let $\str c$ be the string that that starts in the digit $\sdone$, repeated $\min\{\length{\str a},\length{\str b}\}$ times, followed by a $\sdzero$, then a bit indicating which of $\str a$ and $\str b$ is longer and finally enough $\sdzero$'s to make it as long as the longer of the two strings.
		Then pad the strings $\str a$ and $\str b$ to the length of the longer of the two strings by adding zeros to the end.
		$\langle \str a,\str b\rangle$ is the string whose bits alternate between the digits of $\str c$, $\str a$ and $\str b$.
		It is important for this paper that the $n$ initial segment of $\str a$ and $\str b$ can be read from a $3(n+2)$ initial segment of $\langle \str a,\str b\rangle$.
                \begin{figure}
                  \centering
                  \begin{tikzpicture}
                    \node at (0, 1) {$\str a$};
                    \node at (1.25,1) {$\sdzero$};
                    \draw[->] (1.25,.25) -- (1.25,.75);
                    \draw[->, dotted] (2.25,.25) -- (1.5,.75);
                    \node[green] at (2,.75) {\ding{51}};

                    \node at (2.75,1) {$\sdzero$};
                    \draw[->] (2.75,.25) -- (2.75,.75);
                    \draw[->, dotted] (3.75,.25) -- (3,.75);
                    \node[green] at (3.5,.75) {\ding{51}};
                    
                    \node at (4.25,1) {$\sdzero$};
                    \draw[->] (4.25,.25) -- (4.25,.75);
                    \draw[->, dotted] (5.25,.25) -- (4.5,.75);
                    \node[green] at (5,.75) {\ding{51}};

                    \node at (5.75,1) {$\sdzero$};
                    \draw[->] (5.75,.25) -- (5.75,.75);
                    \draw[->, dotted] (6.75,.25) -- (6,.75);
                    \node[orange] at (6.5,.75) {?};

                    \node[red] at (5.75,1) {\ding{55}};
                    \draw[->, dotted] (8.25,.25) -- (8.25,1) -- (6,1);
                    \node[red] at (7.5,.75) {\ding{55}};

                    \node at (0, 0) {$\langle \str a, \str b\rangle$};
                    \draw (1,.25) -- (10,.25) -- (10,-.25) -- (1,-.25) -- (1,.25);
                    \draw[thin,gray] (1.5,.25) -- (1.5,-.25);
                    \draw[thin,gray] (2,.25) -- (2,-.25);
                    \draw (2.5,.25) -- (2.5,-.25);
                    \draw[thin,gray] (3,.25) -- (3,-.25);
                    \draw[thin,gray] (3.5,.25) -- (3.5,-.25);
                    \draw (4,.25) -- (4,-.25);
                    \draw[thin,gray] (4.5,.25) -- (4.5,-.25);
                    \draw[thin,gray] (5,.25) -- (5,-.25);
                    \draw (5.5,.25) -- (5.5,-.25);
                    \draw[thin,gray] (6,.25) -- (6,-.25);
                    \draw[thin,gray] (6.5,.25) -- (6.5,-.25);
                    \draw (7,.25) -- (7,-.25);
                    \draw[thin,gray] (7.5,.25) -- (7.5,-.25);
                    \draw[thin,gray] (8,.25) -- (8,-.25);
                    \draw (8.5,.25) -- (8.5,-.25);
                    \draw[thin,gray] (9,.25) -- (9,-.25);
                    \draw[thin,gray] (9.5,.25) -- (9.5,-.25);

                    \node at (1.25,0) {$\sdzero$};                    
                    \node at (1.75,0) {$\sdone$};
                    \node at (2.25,0) {$\sdone$};
                    \node at (2.75,0) {$\sdzero$};
                    \node at (3.25,0) {$\sdzero$};
                    \node at (3.75,0) {$\sdone$};
                    \node at (4.25,0) {$\sdzero$};
                    \node at (4.75,0) {$\sdzero$};
                    \node at (5.25,0) {$\sdone$};
                    \node at (5.75,0) {$\sdzero$};
                    \node at (6.25,0) {$\sdone$};
                    \node at (6.75,0) {$\sdzero$};
                    \node at (7.25,0) {$\sdzero$};
                    \node at (7.75,0) {$\sdone$};
                    \node at (8.25,0) {$\sdzero$};
                    \node at (8.75,0) {$\sdzero$};
                    \node at (9.25,0) {$\sdzero$};
                    \node at (9.75,0) {$\sdzero$};

                    \node at (0, -1) {$\str b$};
                    \node at (1.75,-1) {$\sdone$};
                    \draw[->] (1.75,-.25) -- (1.75,-.75);
                    \draw[->, dotted] (2.25,-.25) -- (1.9,-.75);
                    \node[green] at (2.25,-.65) {\ding{51}};

                    \node at (3.25,-1) {$\sdzero$};
                    \node at (3.25, -.5) {$\cdots$};
                    \node at (4.75,-1) {$\sdzero$};
                    \node at (6.25,-1) {$\sdone$};
                    \draw[->] (6.25,-.25) -- (6.25,-.75);
                    \draw[->, dotted] (6.75,-.25) -- (6.4,-.75);
                    \node[orange] at (6.75,-.65) {?};
                    \node at (7.75,-1) {$\sdone$};
                    \draw[->, dotted] (8.25,-.25) -- (7,-.6);
                    \node[red] at (6.75,-.65) {\ding{55}};
                    \node[green] at (6.85,-.4) {\ding{51}};
                    \node at (9.25,-1) {$\sdzero$};                   
                  \end{tikzpicture}
                  \caption{The pairing $\langle \str a, \str b\rangle$ of the strings $\str a := \sdzero\sdzero\sdzero$ and $\str b := \sdone\sdzero\sdzero\sdone\sdone\sdzero$ as an example}
                \end{figure}

		In the following we use these encodings to identify Baire space with the space of functions between the encoded structures.
		For instance the statement \lq$\varphi:\NN\to\DD$ is a name of an element\rq\ is used as an abbreviation of the statement \lq any function $\psi:\Sigma^*\to \Sigma^*$ such that $\nu_{\Z}(\str a) = 2^n$ implies $\nu_{\DD}(\psi(\str a))=\varphi(n)$ is a name of the element\rq.

		\begin{definition}\label{def:cauchy reals}
			Define the \demph{Cauchy representation $\xi_{\RR_c}$ of $\RR$} as follows:
			A function $\varphi:\NN \to \DD$ is a name of a real number $x$ if and only if $\abs{\varphi(n) - x} \leq 2^{-n}$ holds for all $n\in\NN$.
		\end{definition}
		This adopts the widespread convention used in real complexity theory to provide accuracy requirements as natural numbers in unary.
		It would have equivalently been possible to provide a natural number $n$ in binary and require the return value to be a $\frac1{n+1}$-approximation or to provide a dyadic rational $\varepsilon$ and require the return value to be an $\varepsilon$-approximation.
		We refer to the space $\RR_{c}:=(\RR,\xi_{\RR_c})$, as the \demph{represented space of Cauchy reals}.
		The representation $\xi_{\RR_c}$ is used throughout literature with great confidence that it induces the right notion of complexity for real numbers and there are many results supporting this:
		The functions that have a polytime computable realiser are exactly those that
		are polytime computable in the sense of Ko \cite{MR1137517}
		as proved by Lambov \cite{MR2275414}.
		It is well known that Ko's notion can be reproduced in Weihrauch's type two theory of effectivity \cite{MR1795407}.

		While the Cauchy representation is in principle straightforward to realise on a physical computer,
		the inherent laziness of the datatype leads to undesirable memory overheads.
                
		To illustrate this, consider the task of computing the iterations of the logistic map, \ie the 
		$n^\text{th}$ number of the sequence recursively defined by
        \[
        x_0  := \tilde x, \quad\text{and}\quad x_{i+1}:= r x_i (1 - x_i),
		\]
		where $r$ and $\tilde x$ are real numbers.
		An algorithm for computing $x_n$ can be written in imperative-style pseudo-code as follows:
		\begin{align*}
			&x \leftarrow \tilde x \\
			&\text{for }  i \text{ in } 1 \text{ .. } n: \\
			&\hspace{1em}x \leftarrow r x (1 - x)
		\end{align*}
		If $x$ is taken to be, say, a floating point number or an interval with fixed precision endpoints,
		then the memory consumption of this program is essentially constant in $n$
		and linear in the number of bits used to encode $x$.
		
		Now imagine that $x$ is implemented as a Cauchy real instead.
		Then $x$ is given as a function and therefore cannot be encoded with finitely many bits.
		Hence, the straightforward way to execute the above program in this case would be to first build the computation tree (compare \Cref{fig: lazy exponential}).
        The evaluation algorithm would then proceed by 
        propagating accuracy requirements from the root to the leaves,
        computing an approximation of each leaf to the respective required accuracy,
        and finally evaluate the tree bottom-up using these approximations.
        Not taking into account the extensive recomputation of approximations that the second step may lead to, 
        the construction of the tree alone can lead to exponential memory consumption in $n$ when done naively.
		This exponential overhead can be avoided by identifying identical subtrees,
		making the tree into a Directed Acyclic Graph (DAG) (compare \Cref{fig: lazy linear})
		of linear size in $n$.
		Implementing this identification of subtrees is non-trivial in general,
		and it still leaves us with a linear memory overhead.
		For this reason, implementations like \irram use a different evaluation strategy:
		Guess an accuracy requirement and approximate all real numbers involved in the program to that accuracy.
		Use interval arithmetic to evaluate the program.
		If the end result is not sufficiently accurate, rerun the program with higher accuracy.
		This entirely avoids the memory overhead incurred from constructing the DAG.
		
		While this approach comes with its own drawbacks, such as recomputation and frequent overestimation of the needed precision, software based on interval computation is empirically superior in speed and memory consumption to implementations based on the Cauchy representation.
		
		\begin{figure}
                  \hfill
	          \begin{subfigure}[b]{.6\textwidth}
		    \centering
	            \begin{tikzpicture}
	  	      \node at (0,0) {$\times$};
		      \draw[-] (0.2,-0.2) --  ( 1.8, -0.8);
		      \draw[-] (-0.2,-0.2) -- (-2.8, -0.8);
	  	      
		      \node at (-3,-1) {$-$};
		      \draw[-] (-3.2,-1.2) -- (-3.8,-1.8);
		      \draw[-] (-2.8,-1.2) -- (-2.2,-1.8);
		      \node at (-4,-2) {$1$};
		      \node at (-2,-2) {$\times$};
		      \draw[-] (-1.8,-2.2) -- (-1.2,-2.8);
		      \draw[-] (-2.2,-2.2) -- (-2.8,-2.8);
		      \node at (-1,-3) {$x_0$};
		      \node at (-3,-3) {$-$};
		      \draw[-] (-3.2,-3.2) -- (-3.8,-3.8);
		      \draw[-] (-2.8,-3.2) -- (-2.2,-3.8);
		      \node at (-4, -4) {$1$};
		      \node at (-2, -4) {$x_0$};
		      
		      \node at (2,-1) {$\times$};
		      \draw[-] (2.2,-1.2) -- (2.8,-1.8);
		      \draw[-] (1.8,-1.2) -- (1.2,-1.8);
		      \node at (3,-2) {$x_0$};
		      \node at (1,-2) {$-$};
		      \draw[-] (0.8,-2.2) -- (0.2,-2.8);
		      \draw[-] (1.2,-2.2) -- (1.8,-2.8);
		      \node at (0,-3) {$1$};
		      \node at (2,-3) {$x_0$};
	            \end{tikzpicture}
	            \caption{The entire computation tree has exponential size}
	  	    \label{fig: lazy exponential}
	          \end{subfigure}
                  \hfill
	          \begin{subfigure}[b]{.375\textwidth}
	            \centering
	            \begin{tikzpicture}
	              \node at (0,0) {$\times$};
	              \node at (-1,-1) {$-$};
	              \node at (-2,-2) {$1$};
	              \node at (0,-2) {$\times$};
	              \node at (-1,-3) {$-$};
	              \node at (-2,-4) {$1$};
	              \node at (0,-4) {$x_0$};
	              
	              \draw[-] (-0.2,-0.2) -- (-0.8,-0.8);
	              \draw[-] (0,-0.2) -- ( 0,-1.8);
	              \draw[-] (-1.2,-1.2) -- (-1.8,-1.8);
	              \draw[-] (-0.8,-1.2) -- ( -0.2,-1.8);
	              \draw[-] (-0.2,-2.2) --  (-0.8,-2.8);
	              \draw[-] (0,-2.2) --  ( 0,-3.8);
	              \draw[-] (-1.2,-3.2) -- (-1.8,-3.8);
	              \draw[-] (-0.8,-3.2) -- ( -0.2,-3.8);
	            \end{tikzpicture}
	            \caption{Aliasing of identical branches leads to a DAG of linear size.}
	            \label{fig: lazy linear}
	          \end{subfigure}
                  \hfill
	          \caption{Computing a DAG for two iterations of the logistic map with $r = 1$.}
		\end{figure}
		
		This leads to a different choice of real number representation:
		Let $\ID$ denote the set of finite dyadic intervals together with the infinite interval $[-\infty,\infty]$.
		We use the abbreviation
		\[ [r\pm\varepsilon] := [r-\varepsilon,r+\varepsilon]. \]
		Any finite dyadic interval can be written in this form and we encode such an interval as the pair of a code of $r$ and a code of $\varepsilon$ as dyadic numbers.
		Denote the length of a dyadic interval by $\diam([r\pm\varepsilon]) := 2\varepsilon$.

		\begin{definition}\label{def: interval representation}
			A function $\varphi\colon \NN \to \ID$ is a $\xi_{\RR_i}$-name of $x \in \RR$
			if $(\varphi(n))_n$ is a nested sequence of intervals with $\{x\} = \bigcap_{n \in \NN} \varphi(n)$.
		\end{definition}

		In certain implementations, notably in \irram \cite{iRRAM},
		the monotone convergence assumption of \Cref{def: interval representation} is relaxed to convergence in the Hausdorff metric.
		A more thorough discussion of this can be found in \Cref{sec: non-monotone enclosures}, where a proof is given that this choice makes no difference up to polytime equivalence.
		From a theoretical point of view it is much more convenient to work with monotone sequences of intervals.

		The use of the interval representation is avoided in real complexity theory since it does not seem to lead to a good notion of complexity:
		Every real number has names that keep the sequence of intervals constant for an arbitrary long time before decreasing the size of the next interval and these names are of slowly increasing size.
		As a consequence, the represented space has very pathological complexity theoretical properties:
		On the one hand a function operating on names of this kind may need to read a very long initial segment before having any information about the encoded object available, while not being granted any time due to the small size of the input.
		As a consequence there are usually very few polytime computable functions
		whose domain is the space of real numbers endowed with the interval representation.
		On the other hand, a function that has to produce an interval name of a real number may delay the time until it returns information about the function value indefinitely.
		Consequentially, all computable functions with values in the real numbers with the interval representation are computable in linear time \cite{MR2090390,MR3219039,SteinbergPhD}.

		The goal of the next section is to give a definition of computational complexity
		for spaces like the space of real numbers equipped with the interval representation which avoids such pathological behaviour.

	\section{Parametrised spaces}\label{sec:parametrised spaces}
		\begin{definition}
			Let $\xi:\B\to X$ be a representation.
			A \demph{parameter for $\xi$} is a single-valued total map $\mu$ from $\dom(\xi)$ to $\NN^\NN$ such that
			\[ \forall\varphi\in\dom(\xi),\forall n\in\NN\colon \mu(\varphi)(n+1)\geq \mu(\varphi)(n). \]
			The pair $(\xi, \mu)$ is called a \demph{parametrised representation} of $X$.
			The triple $(X,\xi,\mu)$ is called a \demph{pre\parspace}.
		\end{definition}
		The monotonicity assumption guarantees that the second-order polynomials behave as expected,
		\ie it makes it possible to use the monotonicity of second-order polynomials from \Cref{resu:monotonicity}.

		We do not make any assumptions about the computability or even continuity of the parameter here.
		This is for a few reasons:
		The first is that no assumptions of this kind are needed for the content of this paper.
		Results like the minimality from \Cref{resu:minimality} do provide a construction that works without this assumption.
		Being more restrictive in the definitions would make these results less general.
		Another reason is that we do encounter discontinuous parameters in situations we consider to be of practical relevance.
		An example is discussed in more detail in \Cref{sec: non-monotone enclosures}.
		While in this example an isomorphic space with continuous parameter can be found,
		it is easy to construct examples where this is not the case.
		Allowing discontinuous parameters hence allows us to investigate spaces
		that could otherwise not be equipped with a meaningful complexity notion.

		A familiar class of parameters are restrictions of the size function
		\[ \length{\varphi}(n) := \max\left\{\length{\varphi(\str a)}\,\big\vert\, \length{\str a}\leq n\right\}. \]
		Since the function $\length\cdot:\B\to\NN^\NN$ is total and all its values are non-decreasing, it can be used as a parameter for any representation.
		Its restriction to the domain of a representation is called the \demph{standard parameter} for the representation.
		In principle, any represented space can be made into a pre\parspace\ by equipping it with the standard parameter of its representation.
		
		We now come to the definition of computational complexity.
		We will specialise all definitions immediately to second-order polynomial time.
		While it is in principle straightforward to consider other classes of second-order 
		resource bounds to obtain different complexity classes,
		second-order polynomial time computability is arguably the only higher-order
		complexity class that is sufficiently well-established in the literature.

		\begin{definition}\label{def:parameter polytime}
			Let $(X,\xi_X,\mu_X)$ and $(Y,\xi_Y,\mu_Y)$ be pre\parspace s.
			A function $f\colon X\to Y$ is called \demph{computable in polynomial time} if there is a machine $M^?$ that computes a realiser of $f$ and two second-order polynomials $P$ and $Q$ such that the following conditions are satisfied:
			\begin{itemize}
				\item $P$ bounds the running time of $M^?$ in terms of the parameter,
				\ie for all oracles $\varphi\in\dom(\xi_X)$ and strings $\str a$ we have
				\[ \Time_{M^?}(\varphi,\str a) \leq P(\mu_X(\varphi),\length{\str  a}). \]
				\item $Q$ bounds the parameter blowup of $M^?$, that is for all $\varphi \in \dom(\xi_X)$ and all $n\in\NN$ we have
				\[ \mu_Y(M^\varphi)(n) \leq Q(\mu_X(\varphi),n). \]
			\end{itemize}
			We say that $M^?$ has polynomial running time and polynomially bounded parameter blowup with respect to the parameters.
		\end{definition}
		We often conflate $P$ and $Q$ to a single polynomial which bounds both the running time and the parameter blowup.
		Let us call a realiser $F \colon \subseteq \B \to \B$ computed by a machine as in \Cref{def:parameter polytime}
		a \demph{witness for the polytime computability of $f$}.
		In the case where both spaces come with the standard parameter
		a realiser $F$ is a witness for the polytime computability of the
		function $f$ if and only if it is polytime computable in the usual
		sense:
		The first condition coincides with the usual running time restriction and
		the second condition is automatic, as writing the output counts towards the total time consumption of the machine.

		As we make no assumption on the relation between the parameter of a preparametrised space
		and the size function, Definition \ref{def:parameter polytime} does not guarantee that
		the identity on a preparametrised space is polytime computable.
		If it is, the identity on Baire space need not be a witness for the polytime computability
		of the identity on the space.
		We hence need the following additional definition:

		\begin{definition}\label{def: parspace}
			A pre\parspace\ $\XX = (X,\xi_{\XX},\mu_{\XX})$ is called a \demph{\parspace}, if the identity function $\mathrm{id}_\XX: \XX\to\XX, x\mapsto x$ is polytime computable.
		\end{definition}
		In the case where the parameter is the standard parameter
		polytime computability of the identity is automatic
		as the identity on Baire space is a witness for its polytime computability.
		In general the parameter might not provide enough time to read all of an oracle answer
		and proving polytime computability of the identity function usually boils down
		to proving that limited information can be read from a beginning segment of
		the result of an oracle query.
		An example of this is discussed in \Cref{resu:the parametrised space of irram reals}.

		While \Cref{def: parspace} might look innocent, its implications should not be underestimated.
		It implicitly connects the parameter of the space to the size function:
		While for an arbitrary name there need not be any relation,
		the time constraint imposed on a witness of polytime computability of
		the identity function forces that the size of the name it returns
		is bounded by a second-order polynomial in the parameter of the input name.
		The application of such a witness hence constitutes a normalisation procedure
		which reduces the size of excessively large names.
		This connection is in particular important as it guarantees the stability under
		small changes in the model of computation as discussed in
		\Cref{sec:second-order complexity theory}.
		One could alternatively require from the beginning that the parameter
		be point-wise bigger than the size function,
		or that the identity on Baire space be a witness of polytime
		computability of the identity.
		While these alternatives are slightly more restrictive than our chosen
		definition,
		all three choices are essentially equivalent.

		Why we chose the above definition over these alternatives is a subtle point.
		An obvious drawback of our choice is that the stability under changes to
		the	model of computation is only true once a normalisation procedure has been applied.
		The proof that a pre\parspace\ is a \parspace\ usually relies on the details
		of the computational model and the details of the encodings of the discrete structures and pairs.
		Once this fact has been established a space can be specified that is stable
		under changes of the model and isomorphic with respect to the present model.
		Despite this somewhat peculiar property, our chosen approach
		has the advantage
		that it allows for the most natural definition of
		both the representations and the parameters that we are interested in.
		The other definitions usually force that either the size function shows up in the definition of
		the parameter or the normalisation procedure is hard-coded into the definition of the representation.
		The former can sometimes lead to waste of resources, as it allows for
		wasteful encodings,
		while the latter usually includes a non-canonical choice.
		The proof of \Cref{resu:the parametrised space of irram reals} is an
		instructive illustration of this.

		\begin{theorem}[Composition]\label{resu:closure under composition}
			Let $\XX$, $\YY$, and $\ZZ$ be \parspace s.
			If $f:\XX\to\YY$ and $g:\YY\to \ZZ$ are computable in polynomial time,
			then their composition $g\circ f:\XX\to \ZZ$ is also computable in polynomial time.
		\end{theorem}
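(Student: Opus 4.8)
The plan is to realise $g\circ f$ by the oracle machine $M^?$ which, on oracle $\varphi$ and input $\str a$, simulates the machine $M_g^?$ witnessing the polytime computability of $g$ on input $\str a$ and answers every oracle query $\str b$ posed by $M_g^?$ by running the machine $M_f^?$ witnessing the polytime computability of $f$ on oracle $\varphi$ and input $\str b$. Let $P_f,Q_f$ and $P_g,Q_g$ be the associated second-order polynomials. For $\varphi\in\dom(\xi_{\XX})$ write $\psi:=M_f^\varphi$; then $M^\varphi=M_g^\psi$. Since $M_f^?$ computes a realiser of $f$ we have $\psi\in\dom(\xi_{\YY})$ and $\xi_{\YY}(\psi)=f(\xi_{\XX}(\varphi))$, so $\xi_{\ZZ}(M^\varphi)=g(f(\xi_{\XX}(\varphi)))$ and $M^?$ computes a realiser of $g\circ f$. (The argument below only uses the pre\parspace\ structure.) It remains to produce the two bounding polynomials.

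The step that does all the work is that the parameter-blowup bound for $f$ controls the parameter of the intermediate name: $\mu_{\YY}(\psi)(m)=\mu_{\YY}(M_f^\varphi)(m)\leq Q_f(\mu_{\XX}(\varphi),m)$ for all $m$. In the framework of Kawamura and Cook the analogous step instead bounds the \emph{size} $\length{\psi}$ of the intermediate name; here this role is taken over by the parameter-blowup condition, which is exactly why \Cref{def:parameter polytime} requires two polynomials rather than one. From this the parameter blowup of the composition is immediate: since $\mu_{\YY}(\psi)$ and $Q_f(\mu_{\XX}(\varphi),\cdot)$ are non-decreasing, \Cref{resu:monotonicity} together with the blowup bound for $g$ gives $\mu_{\ZZ}(M^\varphi)(n)=\mu_{\ZZ}(M_g^\psi)(n)\leq Q_g(\mu_{\YY}(\psi),n)\leq Q_g(Q_f(\mu_{\XX}(\varphi),\cdot),n)$, and by \Cref{resu:closure sop} the map $(l,n)\mapsto Q_g(Q_f(l,\cdot),n)$ is a second-order polynomial, which I take as $Q$.

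For the running time I would argue as follows. Because the simulated answers are exactly $\psi(\str b)$, the machine $M_g^?$ behaves as on the genuine oracle $\psi$, hence performs at most $\Time_{M_g^?}(\psi,\str a)\leq P_g(\mu_{\YY}(\psi),\length{\str a})\leq P_g(Q_f(\mu_{\XX}(\varphi),\cdot),\length{\str a})=:R(\mu_{\XX}(\varphi),\length{\str a})$ steps, where the inequality uses \Cref{resu:monotonicity} and $R\in\sop$ by \Cref{resu:closure sop}. In particular $M_g^?$ poses at most $R(\mu_{\XX}(\varphi),\length{\str a})$ queries, each of length at most $R(\mu_{\XX}(\varphi),\length{\str a})$. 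Answering a query $\str b$ costs $\Time_{M_f^?}(\varphi,\str b)\leq P_f(\mu_{\XX}(\varphi),\length{\str b})$ steps, and since the output of $M_f^?$ is no longer than its running time, writing the answer onto the oracle answer tape is already accounted for. Using that $P_f(\mu_{\XX}(\varphi),\cdot)$ is non-decreasing and summing over all queries, the total running time of $M^?$ is bounded, up to the constant factor of the simulation overhead, by $R(\mu_{\XX}(\varphi),\length{\str a})+R(\mu_{\XX}(\varphi),\length{\str a})\cdot P_f\!\left(\mu_{\XX}(\varphi),R(\mu_{\XX}(\varphi),\length{\str a})\right)$, which is a second-order polynomial $P$ by \Cref{resu:closure sop} and closure of $\sop$ under sums and products.

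I expect the running-time bookkeeping to be the main obstacle: one must bound simultaneously the number of queries posed by $M_g^?$, the length of each query, and the cost of answering it through $M_f^?$, and then collapse all of this into a single second-order polynomial. This is precisely what \Cref{resu:monotonicity} and \Cref{resu:closure sop} are for. Conceptually, the one genuinely new ingredient compared with the first-order composition argument is the use of the parameter-blowup bound $Q_f$ of the inner function in place of a bound on the size of the intermediate name.
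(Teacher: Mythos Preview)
Your proof is correct and follows essentially the same approach as the paper: construct the composite machine by replacing each oracle call of the outer machine with a subroutine running the inner one, then bound the parameter blowup via $Q_g(Q_f(\cdot,\cdot),\cdot)$ using \Cref{resu:monotonicity}, and bound the running time by counting the (at most $R$) queries of length at most $R$ answered at cost $P_f(\mu_{\XX}(\varphi),R)$ each, invoking \Cref{resu:closure sop} throughout. Your observation that only the pre\parspace\ structure is used is correct and slightly sharper than the paper's statement.
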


		\begin{proof}
			Let $\mu_{\XX}, \mu_{\YY}$ and $\mu_{\ZZ}$ be the parameters of $\XX$, $\YY$ and $\ZZ$.
			Let $M^?$ be a machine that computes a realiser $G$ of $g$ in time and with parameter blow-up bounded by $P$.
			Let $N^?$ be a machine that computes a realiser $F$ of $f$ in time and with blow-up bounded by $Q$.
			A machine $_MN^?$ for computing $G\circ F$ can be obtained by replacing each oracle call of the machine $M^?$ with a subroutine that carries out the operations that $N^?$ would perform.
			To estimate the time this machine takes to run on input $\str a$ with oracle $\varphi$, first note that the steps the machine takes can be divided into the steps it takes when executing the commands from $M^?$ and the ones it takes when executing commands from $N^?$.

			The number of steps that are taken while executing the commands from $M^?$ is the same as the number of steps that $M^?$ would take on input $N^\varphi = F(\varphi)$ and therefore bounded by $P(\mu_{\YY}(F(\varphi)),\length{\str a})$.
			By the second condition of \Cref{def:parameter polytime} we have $\mu_{\YY}(F(\varphi))\leq Q(\mu_{\XX}(\varphi),.)$.
			Therefore, by the monotonicity of second-order polynomials from \Cref{resu:monotonicity} we have
			\[ \Time_{M^?}({F(\varphi)},\str a) \leq P(Q(\mu_{\XX}(\varphi),.),\length{\str a}). \]

			The number of steps $_MN^?$ takes with each execution of $N^?$ is bounded by
			$Q(\mu_{\XX}(\varphi),\length{\str b})$,
			where $\str b$ is the content of the tape that replaces the oracle query tape of $M^?$.
			Due to the limited time available to $M^?$ to write this query, we have $\length{\str b}\leq P(Q(\mu_{\XX}(\varphi),.),\length{\str a})$.
			Thus,
			\[ \Time_{N^?}(\varphi,\str b) \leq Q(\mu_{\XX}(\varphi),P(Q(\mu_{\XX}(\varphi),.),\length{\str a})). \]

			The number of times the oracle is called in the computation of $M^?$ with oracle $N^\varphi$ and on input $\str a$ is also bounded by the time of steps the machine $M^?$ may take.
			Thus, a bound on the total number of steps that $_MN^?$ takes on input $\str a$ with oracle $\varphi$ can be obtained by multiplying the two time bounds above.
			This can be seen to be a second-order polynomial in $\mu_{\XX}(\varphi)$ and $\length{\str a}$ using the closure properties of second-order polynomials from \Cref{resu:closure sop}.

			Finally, to obtain the bound on the output parameter, note that
			\[ \mu_{\ZZ}(_MN^\varphi)(n) = \mu_{\ZZ}(M^{F(\varphi)})(n)\leq P(\mu_{\YY}(N^\varphi),n)\leq P(Q(\mu_{\XX}(\varphi),.),n). \]
			This completes the proof that $_MN^?$ computes $G\circ F$ in polynomial time.
			Since $G\circ F$ is a realiser of $g\circ f$ it follows that $g\circ f$ is polytime computable.
		\end{proof}
		\Cref{resu:closure under composition} shows that \parspace s form a category with polytime computable mappings as morphisms.
		It includes the closure of second-order polytime computable operators under composition as a special case.
		The proof of \Cref{resu:closure under composition} is considerably more uniform than its statement:
		a polytime algorithm for computing $g \circ f$ is obtained by composing
		any two polytime algorithms for $f$ and $g$ in the natural way.

		The rest of this section introduces some basic notions and constructions
		that are needed for reasoning about \parspace s throughout the paper.
		We use straightforward adaptations from the theory of represented spaces.
		The correctness of our choices is supported	by category theory in the sense that they
		are the \lq usual\rq\ ones in the category of \parspace s with polytime computable functions as morphisms.

		Real complexity theory has a history of non-uniformity and as a result the point-wise
		complexity structure is often known in more detail than the uniform structure.
		This makes it desirable to be able to reason about points of \parspace s.
		We arrive at the following notion:
		\begin{definition}\label{def:polytime points}
			An element of a pre\parspace\ is called \demph{computable in polynomial time}
			if it has a polytime computable name whose parameter is bounded by a polynomial.
		\end{definition}
		Another way of thinking about a polytime computable point in a \parspace\ is
		as a polytime computable map from the one-point space.
		Here, the one-point space is equipped with the unique total representation and the constant
		zero parameter and is the terminal object of the category of \parspace s.
		A polytime computable point is therefore what is referred to as \lq global element\rq\ in
		category theory.
		We obtain the following corollary:

		\begin{corollary}\label{resu: polytime computable function preserves polytime computable points}
			Polytime computable functions between \parspace s take polytime computable points to polytime computable points.
		\end{corollary}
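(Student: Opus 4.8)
The natural plan is to use the categorical reading of polytime points indicated just before the statement: a polytime computable point of a \parspace\ $\XX$ is the same as a polytime computable morphism $x\colon \mathbf 1\to\XX$ out of the terminal \parspace\ $\mathbf 1$, which carries the total representation and the constant-zero parameter. Granting this identification the corollary is immediate. Given a polytime point $x$ of $\XX$, viewed as a polytime morphism $\mathbf 1\to\XX$, and a polytime computable $f\colon\XX\to\YY$, the Composition Theorem \ref{resu:closure under composition} yields that $f\circ x\colon\mathbf 1\to\YY$ is polytime computable; here one uses that $\mathbf 1$ is a \parspace, its identity being witnessed by the identity on Baire space. But $f\circ x$ is exactly a polytime point $f(x)$ of $\YY$, as desired.

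The step that actually needs justification is the identification of the two notions of polytime point, and its heart is the observation that substituting the constant-zero function (or any first-order polynomial) into a second-order polynomial collapses it to a first-order polynomial. For one direction, starting from a name $\varphi$ witnessing Definition \ref{def:polytime points} — so $\varphi$ is polytime computable as an element of $\B$ and $\mu_\XX(\varphi)$ is dominated by a polynomial $p$ — I would realise the morphism $\mathbf 1\to\XX$ by a machine that ignores its oracle and prints $\varphi$: its running time is first-order polynomial in the input length, hence of the form $P(\mathbf 0,\length{\str a})$, and its parameter blowup satisfies $\mu_\XX(\varphi)(n)\le p(n)=Q(\mathbf 0,n)$. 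For the converse I would feed in any oracle $\psi$, note $\mu_{\mathbf 1}(\psi)=\mathbf 0$, and use that $P(\mathbf 0,\cdot)$ and $Q(\mathbf 0,\cdot)$ are first-order polynomials — a one-line structural induction in which the application clause $l(P(l,n))$ becomes $\mathbf 0(P(\mathbf 0,n))=0$.

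If one prefers to avoid the categorical language entirely, the same content can be delivered directly. Let $\varphi$ witness that $x$ is a polytime point with $\mu_\XX(\varphi)\le p$ for a polynomial $p$, and let $M^?$ together with second-order polynomials $P,Q$ witness polytime computability of $f$. Put $\psi:=M^\varphi$; since $M^?$ realises $f$, $\psi$ is a name of $f(x)$. Monotonicity (Lemma \ref{resu:monotonicity}) gives $\mu_\YY(\psi)(n)\le Q(\mu_\XX(\varphi),n)\le Q(p,n)$, and $Q(p,\cdot)$ is a first-order polynomial by the structural induction above, now using closure of polynomials under composition. Likewise $\Time_{M^?}(\varphi,\str a)\le P(p,\length{\str a})$ is first-order polynomial in $\length{\str a}$, and simulating $M^?$ while answering each oracle query by running the polytime machine for $\varphi$ shows $\psi$ is polytime computable as an element of $\B$.

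The main obstacle I anticipate is only the bookkeeping in this last simulation: one must check that both the number of oracle queries and the length of every query are bounded by the first-order quantity $P(p,\length{\str a})$, so that answering all of them with the polytime machine for $\varphi$ keeps the total cost polynomial in $\length{\str a}$. Beyond that, everything rests on the collapse of second-order polynomials under a first-order substitution, which is the same flavour of induction already used for Lemma \ref{resu:monotonicity} and Proposition \ref{resu:closure sop}.
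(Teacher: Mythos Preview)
Your proposal is correct and follows exactly the approach the paper takes: the paper simply remarks that a polytime point is a polytime morphism from the terminal \parspace\ $\mathbf 1$ and then states the corollary without further argument, leaving the composition theorem to do the work. You have in fact supplied more detail than the paper does, by spelling out both directions of the identification between Definition~\ref{def:polytime points} and global elements and by giving the direct unwinding; this extra care is sound but not something the paper itself carries out.
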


		The usual construction of the product of two represented spaces can be extended to define a product of \parspace s.
		Define the \demph{pairing function} $\langle .,.\rangle:\B\times \B\to \B$ by
		\[ \langle\varphi,\psi\rangle(\str a) := \begin{cases} \epsilon &\text{if } \str a = \epsilon \\ \varphi(\str b) &\text{if } \str a = 0 \str b \\ \psi(\str b) &\text{if } \str a = 1\str b \end{cases} \]
		\begin{definition}
			Let $\XX=(X,\xi_{\XX},\mu_{\XX})$ and $\YY=(Y,\xi_{\YY},\mu_{\YY})$ be \parspace s.
			Equip the product $X\times Y$ with the representation
			\[ \xi_{\XX\times\YY}(\varphi) = (x,y) \quad\Leftrightarrow\quad \exists \psi,\tilde\psi: \varphi = \langle \psi,\tilde \psi\rangle \text{ and } \xi_{\XX}(\psi) = x \text{ and } \xi_{\YY}(\tilde\psi) = y, \]
			\ie a name of a pair is a pair of names of the components.
			Furthermore, equip this space with the parameter $\mu_{\XX\times\YY}$ defined by
			\[ \mu_{\XX\times \YY}(\langle\psi,\tilde \psi\rangle)(n) := \max\{\mu_{\XX}(\psi)(n),\mu_{\YY}(\tilde \psi)(n)\} \]
		\end{definition}
		The triple $(X\times Y,\xi_{\XX\times\YY},\mu_{\XX\times \YY})$ is denoted by $\XX\times \YY$.
		It is straightforward to see that $\XX\times \YY$ is indeed the product of $\XX$ and $\YY$ in the category of \parspace s and polytime computable functions.

		In the theory of representations the notion of reduction plays a central role.
		It generalises easily to parametrised representations:
		\begin{definition}\label{def: polytime translatability}
			Let $X$ be a set and let $(\xi_0, \mu_0)$ and $(\xi_1, \mu_1)$ be
			parametrised representations of $X$.
			We say that $(\xi_0, \mu_0)$ is \demph{polytime translatable} to
			$(\xi_1, \mu_1)$ if the map
			$(X, \xi_0, \mu_0) \to (X, \xi_1, \mu_1), \; x \mapsto x$
			is polytime computable.
			If $(\xi_0, \mu_0)$ is polytime translatable to $(\xi_1, \mu_1)$
			and $(\xi_1, \mu_1)$ is polytime translatable to $(\xi_0, \mu_0)$,
			we say that $(\xi_0, \mu_0)$ and $(\xi_1, \mu_1)$ are
			\demph{polytime	equivalent}.
		\end{definition}

		We chose the word ``translatable'' over the more common
		term ``reducible'' as this leads to less confusion about the
		direction of the translations.
		If a parametrised representation $(\xi_0, \mu_0)$ is polytime
		translatable to a parametrised representation $(\xi_1, \mu_1)$
		we also say that $(\xi_1, \mu_1)$
		\demph{contains less information than} $(\xi_0, \mu_0)$.
		This is a slight abuse of language as ``information content''
		is more appropriately measured by topological or computable translatability.
		
		If
		$(\xi_0, \mu_0)$ and $(\xi_1, \mu_1)$ 
		are polytime equivalent parametrised representations
		of $X$
		then the pre\parspace s
		$(X, \xi_0, \mu_0)$ and 
		$(X, \xi_1, \mu_1)$
		are polytime isomorphic,
		the underlying map of the isomorphism being the identity on $X$.
		As usual, we call pre\parspace s $A$ and $B$ \demph{isomorphic}
		if there exist polytime computable maps $\alpha \colon A \to B$
		and $\beta \colon B \to A$ with 
		$\alpha \circ \beta = \operatorname{id}_{B}$
		and 
		$\beta \circ \alpha = \operatorname{id}_{A}$.
		Note that polytime isomorphic pre\parspace s with the same underlying set
		have polytime equivalent parametrised representations up to renaming 
		of elements.
		We prefer to state our results in terms of isomorphism
		of \parspace s rather than in terms of equivalence
		of parametrised representations,
		although all isomorphisms we construct in this paper
		will have the identity as underlying map.

	\subsection{A \parspace\ of real numbers}\label{sec: interval reals}

	Recall the interval representation $\xi_{\RR_i}$ of the real numbers from \Cref{def: interval representation}:
	\begin{quote}
		A function $\varphi\colon \NN \to \ID$ is a $\xi_{\RR_i}$-name of $x \in \RR$
		if $(\varphi(n))_n$ is a nested sequence of intervals with $\{x\} = \bigcap_{n \in \NN} \varphi(n)$.
	\end{quote}
	Also recall that the represented space $\RR_i = (\RR,\xi_{\RR_i})$
	has very pathological complexity properties.
	This rules out the standard parameter for making this space into a \parspace.
	It is possible to endow the space with a different parameter which yields a sensible complexity theory.
	For a real number $x$, let $\lceil x \rceil$ denote the least integer number bigger than or equal to $x$ and let $\lb$ denote the binary logarithm function.

	\begin{definition}\label{def: parspace of interval reals}
		For a $\xi_{\RR_i}$-name $\varphi$ of $x \in \RR$, define the parameter $\mu_{\RR_i}$ as
		\[ \mu_{\RR_i}(\varphi)(n) = \min\{N\mid \diam(\varphi(N))\leq 2^{-n} \} + \lceil\lb(\abs{x} + 1)\rceil. \]
		The \demph{\parspace\ of interval reals} is the triple
		\[\RR_i = \left(\RR, \xi_{\RR_i}, \mu_{\RR_i} \right). \]
	\end{definition}

	The parameter mainly encodes the rate of convergence of a sequence of intervals.
	Small parameter blowup for a realiser of a function
	$f \colon \RR_i \to \RR_i$
	hence means that the rate of convergence of the output sequence is similar to the rate of
	convergence of the input sequence.
	It remains to show that this really defines a \parspace, \ie
	that the parameter is well-defined on the domain of the representation
	and that the identity is polytime computable.

	\begin{proposition}\label{resu:the parametrised space of irram reals}
			The space $\RR_{i}$ is a \parspace.
	\end{proposition}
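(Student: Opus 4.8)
The plan is to verify two things: that $\mu_{\RR_i}$ is a well-defined parameter, and that the identity map $\mathrm{id}_{\RR_i}$ admits a witness for polytime computability. For the first point, observe that if $\varphi$ is a $\xi_{\RR_i}$-name of $x$, then the intervals are nested, so their diameters are non-increasing, and $\{x\} = \bigcap_n \varphi(n)$ forces $\diam(\varphi(n)) \to 0$. Hence $\min\{N \mid \diam(\varphi(N)) \leq 2^{-n}\}$ is finite for every $n$, and since the diameters are non-increasing this minimum is non-decreasing in $n$; adding the constant $\lceil \lb(\abs x + 1)\rceil$ preserves monotonicity. Thus $\mu_{\RR_i}(\varphi) \in \NN^\NN$ is non-decreasing, as a parameter must be.

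For the second point I first observe that the identity on Baire space is \emph{not} a witness: a name $\varphi$ may encode intervals whose dyadic codes are arbitrarily long, either by carrying redundant trailing bits or by having enormous early intervals, and the length of these codes is in no way controlled by $\mu_{\RR_i}(\varphi)$. Copying $\varphi(n)$ verbatim can therefore violate the running-time bound. The remedy is to let the realiser produce a normalised name $\psi$: on a query for accuracy $2^{-n}$ it searches for the least index $N = \min\{N \mid \diam(\varphi(N)) \leq 2^{-n-1}\}$, reads $\varphi(N) = [r \pm \varepsilon]$, rounds the centre $r$ to the nearest multiple $r'$ of $2^{-n-2}$, and sets $\tilde\psi(n) := [r' \pm 2^{-n-1}]$. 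Since $\abs{x-r'} \leq \varepsilon + 2^{-n-2} \leq 2^{-n-1}$ and $\diam(\tilde\psi(n)) = 2^{-n}$, the interval $\tilde\psi(n)$ contains $x$ and has diameter $2^{-n}$. To guarantee that the output is genuinely nested I let the realiser return $\psi(n) := \bigcap_{k \leq n} \tilde\psi(k)$, which is again a short dyadic interval containing $x$ with $\diam(\psi(n)) \leq 2^{-n}$, so $\psi$ is a valid $\xi_{\RR_i}$-name of $x$.

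The crucial point of the time analysis is that at no stage does the machine read a code in full. To decide whether $\diam(\varphi(j)) \leq 2^{-n-1}$ it reads only a prefix of the code of the radius $\varepsilon_j$: using that integers are encoded canonically, a nonzero integer part is detected from a constant-length prefix (its leading bit is set), so an oversized interval is rejected at once, whereas when the integer part vanishes a prefix of length $\bigo{n}$ of the fractional part settles the comparison. By the property of the pairing function, such a prefix of $\varepsilon_j$ is read off from a prefix of $\varphi(j)$ of length $\bigo n$. Moreover, every interval $\varphi(N)$ that passes the test has diameter at most $1$ and hence lies within distance $\tfrac12$ of $x$; since $\abs x < 2^{\lceil\lb(\abs x +1)\rceil} \leq 2^{\mu_{\RR_i}(\varphi)(0)}$, its centre has integer part of length $\bigo{\mu_{\RR_i}(\varphi)(0)}$, so rounding $r$ reads only $\bigo{\mu_{\RR_i}(\varphi)(0) + n}$ bits. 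Each diameter test thus costs $\bigo{n}$ reading time, the search inspects $N + 1 \leq \mu_{\RR_i}(\varphi)(n+1) + 1$ indices, and writing the queries and intersecting the finitely many intervals contribute only further polynomial factors. Summing over $k \leq n$ shows that computing $\psi(n)$ takes time bounded by a second-order polynomial in $\mu_{\RR_i}(\varphi)$ and $n$; as the query for accuracy $2^{-n}$ has length $\Theta(n)$, this is a bound of the form $P(\mu_{\RR_i}(\varphi), \length{\str a})$ demanded by \Cref{def:parameter polytime}.

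Finally, the parameter blowup is immediate: because $\diam(\psi(n)) \leq 2^{-n}$ we have $\min\{N \mid \diam(\psi(N)) \leq 2^{-n}\} \leq n$, and the magnitude term $\lceil\lb(\abs x + 1)\rceil$ is identical for $\varphi$ and $\psi$ since both name $x$; hence $\mu_{\RR_i}(\psi)(n) \leq n + \mu_{\RR_i}(\varphi)(0) \leq n + \mu_{\RR_i}(\varphi)(n)$, a second-order polynomial in $\mu_{\RR_i}(\varphi)$ and $n$. This establishes both conditions of \Cref{def:parameter polytime}, so the identity is polytime computable and $\RR_i$ is a \parspace. I expect the main obstacle to be the time bound rather than the bookkeeping: the whole argument hinges on charging the running time against $\mu_{\RR_i}(\varphi)$, which measures only the convergence rate and the magnitude of $x$ and is oblivious to the lengths of the individual interval codes, so the delicate step is to confirm that bounded prefixes of the oracle answers always suffice both to locate a sufficiently small interval and to round it, which is precisely where the specific integer and pairing encodings enter.
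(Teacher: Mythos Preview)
Your argument is correct in outline, but it takes a genuinely different route from the paper. The paper's witnesses $M_p^?$ (one for each non-constant polynomial $p$) make a \emph{single} oracle query: on input $n$ they ask for $\varphi(n)=[r\pm\varepsilon]$, read $O(p(n))$-bit prefixes of $r$ and $\varepsilon$, and return an outward-rounded interval $[a_n,b_n]\supseteq\varphi(n)$ whose endpoints are multiples of $2^{-p(n)}$. Nestedness is inherited from that of $\varphi$ together with outward rounding at increasing precision; no search is performed, and the diameter of the output may still be large when $\diam(\varphi(n))$ is large. The resulting parameter bound is $\mu_{\RR_i}(M_p^\varphi)(n)\le \mu_{\RR_i}(\varphi)(n)+n+1$.

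Your construction instead \emph{searches} for an index with small diameter and outputs a standardised interval of prescribed width, then intersects to enforce nestedness. This is essentially the composite of the paper's identity witness with the interval-to-Cauchy translation that the paper carries out separately in \Cref{resu:equivalence to the Cauchy representation}; there the search is performed only \emph{after} normalisation, so that each $\varphi(j)$ can be read in full in linear time. What you gain is a tighter parameter bound on the output, $\mu_{\RR_i}(\psi)(n)\le n+\mu_{\RR_i}(\varphi)(0)$; what you pay is up to $\mu_{\RR_i}(\varphi)(n{+}1)$ oracle calls per query rather than one, and a somewhat more delicate prefix-reading analysis because the inputs have not yet been normalised.

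One small technical point: the test ``$\diam(\varphi(j))\le 2^{-n-1}$'' cannot in general be decided exactly from a bounded prefix of the radius (the radius might sit exactly on the threshold). You should phrase this as a soft comparison---accept when the prefix certifies $\diam<2^{-n}$, reject when it certifies $\diam>2^{-n-1}$, and allow either outcome in between. The search still terminates by index $\mu_{\RR_i}(\varphi)(n{+}1)$ and the rest of your bounds go through unchanged.
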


	\begin{proof}
			That the parameter is well-defined follows directly from the definitions.

			A family of witnesses of the polytime computability of the identity on
			the space $\RR_{i}$ can be specified as follows:
			For a fixed non-constant polynomial $p\in\NN[X]$, let $M_{p}^?$ be the oracle machine
			that on input $n\in \NN$ (as usual encoded in unary) and with oracle $\varphi$
			queries the oracle for the interval $[r\pm\varepsilon]:=\varphi(n)$.
			If the interval is infinite, it returns the infinite interval.
			Otherwise it reads approximations $r'$ and $\varepsilon'$ to precision $2^{-p(n)-1}$ from  initial segments of $r$ and $\varepsilon$ to
			compute numbers $a_n$ and $b_n$ such that $a_n$ is the largest dyadic number with denominator $2^{p(n)}$
			with $a_n < r' - \varepsilon'$ and $b_n$ is the smallest dyadic number with denominator $2^{p(n)}$
			with $b_n > r' + \varepsilon'$.
			It then returns the interval $[a_n, b_n]$.
			To see that each of the machines $M^?_{p}$ computes a witnesses of
			the polytime computability of the identity on $\RR_{i}$,
			first note that it computes the identity:
			By construction we have $[r\pm \varepsilon]\subseteq [a_n,b_n]$.
			Also by construction, the resulting sequence of intervals is nested.
			Thus any of the intervals returned by $M^\varphi_p$ contains the real number that $\varphi$ encodes.
			To see that the diameter of the intervals still goes to zero let some $\delta>0$ be given.
			Since $\varphi$ is a name, there exists an $N$ such that for all $n$ bigger
			than $N$ the corresponding $\varepsilon$ is smaller than $\frac\delta3$.
			The polynomial $p$ is non-constant and therefore $p(n)\geq n$ holds for all $n\neq 0$.
			Thus we may choose $n$ so big that $\abs{b_n - a_n} \leq 2^{p(n)+1} + 2 \varepsilon\leq \delta$.

			To obtain a polynomial bound on the running time of $M^?_p$,
			first note that the time that $M^\varphi_p$ takes on input of length $n$
			can be bounded in terms of $p(n)$ and a bound on the absolute value of $\xi_{\RR_i}(\varphi)$.
			Furthermore,
			\[ \mu_{\RR_i}(M^\varphi_p)(n) = \min\Set{N}{\diam(M^\varphi_p(m))\leq 2^{-n}} \]
			and
			\[ \diam(M^\varphi_p)(m) \leq \diam(\varphi(m)) + 2^{-p(m)+1}. \]
			This, together with $p(m)\geq m$ implies
			\[ \mu_{\RR_i}(M^\varphi_p)(n) \leq \mu_{\RR_i}(\varphi)(n) + n +1. \]
			Since the right hand side is a second-order polynomial, this proves that
			$M^?_p$ has polynomially bounded parameter blowup.
	\end{proof}

		Choosing a witness of polytime computability of the identity corresponds
		to restricting the maximal precision that may be present in the $n^\text{th}$ component of a name.
		However, the way in which the precision is restricted is mostly arbitrary
		and it may be beneficial in practice to use different cut-off precisions
		in different computations.

		We often indirectly use the polytime computability of the identity by assuming that the value of the parameter on all names of real numbers is linear.
		A machine that works correctly on names of linear parameter can be transformed into one that works correctly on all names by precomposing it with one of the realisers from the previous proof, where the corresponding polynomial is chosen linear.
		In the following we use the big-o notation:
		For integer functions $f$ and $g$ say that $f\in\bigo{g}$ if there exists a
		constant $C$ such that for all $n$ we have $f(n) \leq C g(n) + C$.

		\begin{proposition}[$\RR_i\simeq \RR_c$]\label{resu:equivalence to the Cauchy representation}
			The space of Cauchy reals and the space of interval reals are polytime isomorphic as \parspace s.
		\end{proposition}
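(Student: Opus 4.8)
The plan is to exploit that both spaces have underlying set $\RR$: by the discussion following \Cref{def: polytime translatability}, it suffices to show that the two parametrised representations are polytime equivalent, the isomorphism then being carried by the identity on $\RR$. Concretely I would build two machines, one witnessing that the identity $(\RR, \xi_{\RR_c}, \length\cdot) \to (\RR, \xi_{\RR_i}, \mu_{\RR_i})$ is polytime computable and one for the reverse direction into the Cauchy space with its standard parameter. In each case I must exhibit a realiser together with a second-order polynomial bounding both its running time in the source parameter and its parameter blow-up, as demanded by \Cref{def:parameter polytime}.

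For the direction from Cauchy names to interval names, the obvious idea is to send a Cauchy name $\varphi\colon \NN \to \DD$ to the intervals $[\varphi(n) \pm 2^{-n}]$. These converge to $x$ but need not be nested, so I would instead return $\psi(n) := \bigcap_{m \le n} [\varphi(m) \pm 2^{-m}]$. Each factor contains $x$, so $\psi(n)$ is a nonempty dyadic interval containing $x$; the sequence is nested by construction; and since $\psi(n) \subseteq [\varphi(n) \pm 2^{-n}]$ we have $\diam(\psi(n)) \le 2^{-n+1}$, giving convergence. Computing $\psi(n)$ costs $n+1$ oracle queries plus a max/min over dyadics of bit-length at most $\length{\varphi}(\bigo n)$, so the running time is a second-order polynomial in $\length{\varphi}$ and $n$. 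For the parameter, $\diam(\psi(N)) \le 2^{-n}$ once $N \ge n+1$, and $\lceil\lb(\abs x + 1)\rceil \le \length{\varphi}(\bigo 1)$ because $\abs{\varphi(0) - x} \le 1$; together these yield $\mu_{\RR_i}(\psi)(n) \le n + 1 + \length{\varphi}(\bigo 1)$, again a second-order polynomial.

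The reverse direction is the interesting one. Given an interval name $\varphi$ converging to $x$, to produce a $2^{-n}$ approximation I would search $N = 0, 1, 2, \dots$ for the first interval with $\diam(\varphi(N)) \le 2^{-n}$ and return its midpoint, which lies within $2^{-n-1}$ of $x$; by definition of $\mu_{\RR_i}$ the first such $N$ is at most $\mu_{\RR_i}(\varphi)(n)$, so the number of queries is controlled by the source parameter. The main obstacle is that $\mu_{\RR_i}$ measures only the rate of convergence and says nothing about the bit-length of the dyadic endpoints of the intervals $\varphi(N)$: a priori these could be astronomically large relative to the parameter, so that neither reading them nor bounding the output size would be possible within a second-order polynomial in $\mu_{\RR_i}(\varphi)$. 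This is exactly where I would invoke \Cref{resu:the parametrised space of irram reals}: as noted in the remark following it, I may precompose with a normalising realiser $M_p^?$ of the identity on $\RR_i$, with $p$ chosen linear, after which the $N$-th interval has dyadic endpoints of denominator $2^{p(N)}$ and hence bit-length at most $p(N) + \lceil\lb(\abs x + 1)\rceil + \bigo 1$, while the parameter grows only by an additive $n + 1$.

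Using $\lceil\lb(\abs x + 1)\rceil \le \mu_{\RR_i}(\varphi)(0) \le \mu_{\RR_i}(\varphi)(n)$ and $N \le \mu_{\RR_i}(\varphi)(n) + n + 1$, every quantity entering the running time and the size of the returned midpoint is then bounded by a polynomial in $\mu_{\RR_i}(\varphi)(n)$ and $n$, i.e.\ by a second-order polynomial in $\mu_{\RR_i}(\varphi)$ and $n$; in particular the standard parameter (the size function) of the output Cauchy name is so bounded. Polytime computability of the full translation follows by composing the normaliser with this search machine via \Cref{resu:closure under composition}, and combining the two directions yields the claimed polytime isomorphism.
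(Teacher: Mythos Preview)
Your proposal is correct and follows essentially the same route as the paper: the map $n\mapsto[\varphi(n)\pm 2^{-n}]$ for the Cauchy-to-interval direction, and normalisation via \Cref{resu:the parametrised space of irram reals} followed by a linear search for a small-diameter interval in the reverse direction. Your variant is in fact slightly more careful than the paper in the first direction: the paper simply returns $[\varphi(n)\pm 2^{-n}]$ and asserts this is a $\xi_{\RR_i}$-name, but from $|\varphi(n)-x|\le 2^{-n}$ one only gets $|\varphi(n+1)-\varphi(n)|\le 3\cdot 2^{-n-1}$, so these intervals need not be nested; your running intersection $\bigcap_{m\le n}[\varphi(m)\pm 2^{-m}]$ closes this small gap at the cost of a harmless polynomial overhead.
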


		\begin{proof}
			First construct the translation from the Cauchy reals to the interval reals:
			Let $M^?$ return on oracle $\varphi\in\xi_{cR}^{-1}(x)$ and input $n\in\NN$ the interval $[\varphi(n)\pm2^{-n}]$.
			Then $M^\varphi$ is a $\xi_{\RR_i}$-name of $x$.
			To produce this result, the machine needs to make $\bigo{n}$ steps for copying $n$ (which is given in unary) and $\bigo{\length{\varphi}(n)+n}$ steps produce the return value from $n$ and the dyadic approximation returned by $\varphi$.
			It remains to check that the machine has appropriate parameter blow-up.
			By definition of the parameter of the interval reals we have
			\[ \mu_{\RR_i}(M^\varphi)(n) = \min\Set{N\in\NN}{\diam(M^\varphi(N))\leq 2^{-n}} + \lceil\lb(\abs{x} + 1)\rceil. \]
			By the construction of $M^?$ we have $\diam(M^\varphi(n-1))=2^{-n}$, and therefore the first summand in the above equation is always bounded by $n-1$.
			The absolute value of $x$ on the other hand is bounded by the size of the encoding of the dyadic number returned by $\varphi$.
			That is
			\[ \mu_{\RR_i}(M^\varphi)(n) \leq n - 1 + \lceil\lb(\abs{x} + 1)\rceil \leq n-1 + \length{\varphi(0)} \leq n+\length\varphi(c), \]
			where $c=2$ is the length of the encoding of $0$.
			This proves that $M^?$ computes a witness of polytime computability of the translation.

			For the other direction first note that, by applying an appropriate witness of the polytime computability of the identity from \Cref{resu:the parametrised space of irram reals}, it may be assumed that the size of any $\xi_{\RR_i}$-name of some $x\in[0,1]$ satisfies $\length\varphi \in\bigo n$.
			Define a machine $N^?$ that on such an oracle $\varphi$ and on input $n$ proceeds as follows:
			It searches for an $m$ such that $\diam(\varphi(m))\leq 2^{-n+1}$.
			This condition can be checked in time $\bigo{n}$ since the name is short.
			The search halts as soon as the value of $m$ is equal to the first summand of the parameter $\mu_{\RR_i}(\varphi)(n)$.
			Let the machine return the midpoint of the interval.
			Since all names are short and the encodings reasonable, obtaining the midpoint takes at most time $\bigo{n}$.
			By construction the machine $N^?$ does at most $\mu_{\RR_i}(\varphi)(n)$ loops of a computation that takes $\bigo n$ steps to carry out and thus runs in polynomial time.
			Since the size of the return value is in $\bigo{n}$, the machine has polynomially bounded parameter blowup.
		\end{proof}

		Since polytime computable functions preserve polytime computable points by \Cref{resu: polytime computable function preserves polytime computable points} we obtain:

		\begin{corollary}
		A real number is polytime computable if and only if it is polytime computable
		as an element of the \parspace\ $\RR_i$.
		\end{corollary}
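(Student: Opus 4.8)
The plan is to deduce the statement directly from the polytime isomorphism of \Cref{resu:equivalence to the Cauchy representation} together with the preservation of polytime computable points under polytime computable maps supplied by \Cref{resu: polytime computable function preserves polytime computable points}.

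First I would make explicit that a real number is polytime computable in the usual sense precisely when it is a polytime computable point of $\RR_c$. This is immediate from \Cref{def:polytime points}: for the Cauchy representation equipped with the standard parameter $\length\cdot$, any name $\varphi$ that is computed by a machine running in polynomial time necessarily has polynomially bounded size, since a machine cannot write more output than the time it is granted. Hence the requirement that the parameter be bounded by a polynomial is automatically satisfied, and conversely a polytime computable point of $\RR_c$ furnishes exactly a polytime computable name in the sense of Ko. Thus for the Cauchy reals the two notions of polytime computability of a point coincide.

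Next I would invoke \Cref{resu:equivalence to the Cauchy representation}, which provides polytime computable maps $\RR_c \to \RR_i$ and $\RR_i \to \RR_c$, both having the identity on $\RR$ as their underlying map. By \Cref{resu: polytime computable function preserves polytime computable points} each of these maps sends polytime computable points to polytime computable points. Applying the map $\RR_c \to \RR_i$ shows that every polytime computable point of $\RR_c$ is a polytime computable point of $\RR_i$, and applying $\RR_i \to \RR_c$ gives the converse. Because the underlying map is the identity in both directions, a given real number $x$ is a polytime computable point of one space exactly when it is a polytime computable point of the other, which is the claim.

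There is essentially no obstacle here; the only step requiring care is the identification in the first paragraph, namely checking that the polynomial bound on the standard parameter comes for free for names produced by a polytime machine, so that ``polytime computable point of $\RR_c$'' genuinely reproduces the classical notion. Everything else is a formal consequence of the isomorphism and the functoriality of polytime computability on points.
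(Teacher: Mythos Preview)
Your proposal is correct and follows exactly the route the paper takes: the paper derives the corollary in one line from \Cref{resu:equivalence to the Cauchy representation} together with \Cref{resu: polytime computable function preserves polytime computable points}. Your additional paragraph verifying that the classical notion of a polytime computable real coincides with being a polytime computable point of $\RR_c$ (via the automatic polynomial bound on the standard parameter) is a welcome clarification that the paper leaves implicit.
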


		A final property to mention about the space $\RR_i$ and a property that distinguishes it from the Cauchy reals
		and is therefore not preserved under isomorphism is that any polytime computable function on the
		interval reals can be computed by a machine whose	running time is bounded by a first-order function.

		\begin{theorem}\label{resu: polytime function on interval reals has first-order time bound}
			Whenever $f\colon{\subseteq \RR_i \to \RR_i}$ is polytime computable, then $f$ can be computed by an oracle machine $M^?$ such that there exists a $C\in\NN$ and a second-order polynomial $P$ satisfying
			\[ \Time_{M^?}(\varphi,\str a) \leq C\cdot \length{\str a} + C \quad\text{and}\quad \mu_{i}(M^\varphi)(n) \leq P(\mu_X(\varphi),n). \]
		\end{theorem}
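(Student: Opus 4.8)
The plan is to exploit the laziness of the interval representation on the \emph{output} side, which is exactly the phenomenon identified above (\lq a function that has to produce an interval name \ldots\ may delay the time until it returns information about the function value indefinitely\rq). Given a witness $M^?$ for the polytime computability of $f$ with time bound $P$ and parameter-blowup bound $Q$ as in \Cref{def:parameter polytime}, I would build a new machine $M'^?$ that on input $\str a$ passes its own oracle $\varphi$ straight through and simulates $M^?$ on the precision requests $0,1,2,\dots$ in turn, keeping a running count of the simulated steps, and stops as soon as this count would exceed $\length{\str a}$ simulated steps. It returns the output $M^\varphi(k)$ of the last request $k$ whose simulation was completed within this budget, storing it on a scratch tape and overwriting it whenever a later request finishes; if not even request $0$ finishes it returns the infinite interval $[-\infty,\infty]\in\ID$.

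For a fixed machine $M^?$ the step-by-step simulation and the metering of the budget can be carried out with only constant overhead per simulated step: a few extra work tapes suffice to simulate $M^?$, and the budget can be tracked by advancing one cell along the unary input tape per simulated step, so that no expensive counter arithmetic is needed. Since $M'^?$ never runs more than $\length{\str a}$ simulated steps, it halts on every oracle and every input, and its real running time on an input of length $\length{\str a}$ is $\bigo{\length{\str a}}$. This gives the first-order bound $\Time_{M'^?}(\varphi,\str a)\leq C\cdot\length{\str a}+C$ unconditionally.

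To see that $M'^?$ realises $f$, write $k(m)$ for the index returned on input of length $m$. The function $k$ is non-decreasing, and for $\varphi\in\dom(\xi_{\RR_i})$ each computation $M^\varphi(i)$ terminates, so the fixed finite cost of completing requests $0,\dots,j$ eventually fits within any budget; hence $k(m)\to\infty$. As $M^\varphi$ is a nested sequence of intervals with $\bigcap_i M^\varphi(i)=\{f(x)\}$, the sequence $m\mapsto M^\varphi(k(m))$, possibly preceded by finitely many copies of $[-\infty,\infty]$, is again nested and has intersection $\{f(x)\}$. Thus $M'^\varphi$ is a $\xi_{\RR_i}$-name of $f(x)$ as required.

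The step I expect to be the main obstacle is the parameter-blowup bound, precisely because $M'^?$ has discarded the running-time budget $P$. Here I would fix $n$ and set $N_0:=\min\Set{j}{\diam(M^\varphi(j))\leq 2^{-n}}$, so that $N_0\leq Q(\mu_{\RR_i}(\varphi),n)$ by the blowup bound for $M^?$. Using $\Time_{M^?}(\varphi,i)\leq P(\mu_{\RR_i}(\varphi),i)$ together with the monotonicity of \Cref{resu:monotonicity}, the number of simulated steps needed to complete requests $0,\dots,N_0$ is at most $(N_0+1)\,P(\mu_{\RR_i}(\varphi),N_0)$. Hence as soon as $m$ exceeds this quantity we have $k(m)\geq N_0$, and nestedness yields $\diam(M'^\varphi(m))=\diam(M^\varphi(k(m)))\leq\diam(M^\varphi(N_0))\leq 2^{-n}$. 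Therefore
\begin{align*}
 \min\Set{m}{\diam(M'^\varphi(m))\leq 2^{-n}}
 &\leq (N_0+1)\,P(\mu_{\RR_i}(\varphi),N_0) \\
 &\leq (Q(\mu_{\RR_i}(\varphi),n)+1)\cdot P(\mu_{\RR_i}(\varphi),Q(\mu_{\RR_i}(\varphi),n)),
\end{align*}
which is a second-order polynomial in $\mu_{\RR_i}(\varphi)$ and $n$ by the closure properties of \Cref{resu:closure sop}. Finally, the additive term $\lceil\lb(\abs{f(x)}+1)\rceil$ occurring in $\mu_{\RR_i}(M'^\varphi)(n)$ does not depend on $n$ and agrees with the corresponding term for $M^\varphi$, so it is bounded by $\mu_{\RR_i}(M^\varphi)(0)\leq Q(\mu_{\RR_i}(\varphi),0)$. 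Adding the two second-order polynomials gives a bound $\mu_{\RR_i}(M'^\varphi)(n)\leq P'(\mu_{\RR_i}(\varphi),n)$ of the required form, completing the argument.
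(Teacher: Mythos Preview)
Your argument is correct and follows essentially the same route as the paper's own proof: build the new machine by running the given witness on precision requests $0,1,2,\dots$ under a step budget of $\length{\str a}$, return the last completed answer (or the infinite interval), and recover the parameter-blowup bound by combining the time bound and the blowup bound of the original witness via \Cref{resu:closure sop}. The only cosmetic differences are that you keep the time bound $P$ and the blowup bound $Q$ separate while the paper conflates them, and you make the verification that the output is a valid $\xi_{\RR_i}$-name and the treatment of the simulation overhead a bit more explicit than the paper does.
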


		\begin{proof}
			Since $f$ is polytime computable, there exists a machine $N^?$ and a second-order polynomial $Q$ that bounds the running time and the parameter blowup.
			Without loss of generality assume $Q(l,n) \geq n$.
			Let the machine $M^?$ with oracle $\varphi$ and input $\str a$ spend $\length{\str a}$ steps on simulating what $N^?$ does on oracle $\varphi$ and binary encodings of $2^i$ as input for $i$ starting from zero and counting up.
			Return the return value of the machine $N^?$ on the biggest input where the simulation finished in time.
			In case none of the computations have terminated, return the infinite interval.
			Obviously, $M^?$ takes not more than $C\cdot \length{\str a}+ C$ steps, thus it is left to specify an appropriate polynomial $P$.
			For this, note that for a given $n$, since $Q$ bounds the running time of $N^?$, choosing $\length{\str a}$ bigger than
			\[ \sum_{i=0}^n Q(\mu_{\RR_i}(\varphi),i) \leq n Q(\mu_{\RR_i}(\varphi), n) \]
			forces all the simulations of $N^?$ with oracle $\varphi$ and on input smaller than $2^n$ to come to an end.
			Since the parameter blowup of $N^?$ is bounded by $Q$, the absolute value of the number encoded by $N^\varphi$ is bounded by $Q(\mu_{\RR_i}(\varphi),0)+1$ and choosing $n$ bigger than $Q(\mu_{\RR_i}(\varphi),m)$ forces the diameter of the returned interval to be smaller than $2^{-m}$.
			This implies that on input $\str a$ of size bigger than
			\[ Q(\mu_{\RR_i}(\varphi),m) \cdot  Q(\mu_{\RR_i}(\varphi), Q(\mu_{\RR_i}(\varphi),m)) \]
			the interval that $M^?$ returns has diameter smaller than $2^{-m}$ and that $P$ can be picked as
			\[ P(l,n) := Q(l,n) \cdot  Q(l, Q(l,n)) + Q(l,0)+1. \]
			That $P$ is a second-order polynomial follows from the closure properties of second-order polynomials from \Cref{resu:closure sop}.
		\end{proof}
		Note that the proof follows the construction which is used to show that
		the computational complexity of the interval representation
		is ill-behaved with respect to running time bounds in terms of the size function
		\cite{MR2090390,MR3219039,SteinbergPhD}.
		However, in contrast to the size of a name, the parameter of a name contains meaningful information and as a consequence
		delaying the time until a meaningful output is produced leads to an increase of the parameter blow-up.
		That is, instead of removing computational cost, the construction trades time needed to produce approximations
		for a worse convergence behaviour.
		While the extent to which this is done in
		\Cref{resu: polytime function on interval reals has first-order time bound}
		may not be appropriate,
		this construction
		can be viewed as a means of separating the reasoning about an algorithm
		into two parts.
		The first part is the computation of approximations to the function value
		from approximations to the input. The complexity of these computations
		can be expressed using first-order bounds.
		The second part is the convergence analysis, which does require second-order
		bounds, as it relates the rate of convergence of the input sequence
		to the rate of	convergence of the output sequence.
		This seems to be more in line with how the complexity of algorithms is studied in
		the real number model and related models.
		Most ``natural'' algorithms for computing functions
		$f \colon \RR_i \to \RR_i$
		do indeed have first-order time bounds.

	\subsection{A \parspace\ of continuous functions}\label{sec: parspace of functions}

	Let $I=[0,1]$ denote the closed unit interval.
	Let $I_i$ be the \parspace\ that is obtained by considering the unit interval as a subspace of the \parspace\ $\RR_i$ of interval reals.
	By this we mean that the representation $\xi_{I_i}$ of $I_i$ is the range restriction of $\xi_{\RR_i}$ to $I$ and the parameter $\mu_{I_i}$ is the restriction of $\mu_{\RR_i}$ to the domain of the representation.

	Consider the space  $C(I)$ of continuous functions from the unit interval to the real numbers.
	Having made $\RR$ and $I$ into \parspace s which are closely tied to the complexity of interval methods,
	it is natural to ask whether the function space $C(I)$ admits a similar structure.

	\begin{definition}\label{def: interval function representation}
		Define the \demph{interval function representation $\xi_{if}$} as follows:
		A function $\psi\colon \ID\to\ID$ is a name of $f\in C([0,1])$ if and only if
		\[ \forall \varphi\in\dom(\xi_{\RR_i})\colon\big(\xi_{\RR_i}(\varphi) \in[0,1] \quad\Rightarrow\quad \xi_{\RR_i}(\psi\circ\varphi) = f(\xi_{\RR_i}(\varphi))\big). \]
	\end{definition}

	Note that if $\psi$ is a name of a continuous function $f$, then $\psi$ is necessarily monotone as an interval function, \ie $J\subseteq I$ implies that $\psi(J)\subseteq \psi(I)$.
	Unlike the Kawamura-Cook representation of continuous functions,
	which is recalled in Section \ref{sec:comparison to represented spaces}, the present
	Definition \ref{def: interval function representation} employs a
	canonical exponential construction to	represent the function space.
	The restriction to compact intervals of reals is mainly necessary in order to
	ensure the well-definedness of the parameter,
	which essentially encodes a modulus of uniform continuity of the represented function:

	\begin{definition}\label{def:parameters for the irram functions}
			Let the \demph{parameter} $\mu_{if}\colon{\dom \left(\xi_{if}\right) \to \NN^\NN}$ be defined by
			\begin{equation*}
				\begin{split}
					\mu_{if}&(\psi)(n) :=  \lceil\lb(\norm{\xi_{if}(\psi)}_\infty + 1)\rceil \\
					&+ \min\Set{N\in\NN}{\forall J\in\ID: \diam(J)\leq2^{-N}\Rightarrow \diam(\psi(J))\leq 2^{-n}},
				\end{split}
			\end{equation*}
			whenever $\psi$ is a $\xi_{if}$-name of a function.
	\end{definition}
	While the time taken by a polytime machine operating on \parspace\ may depend on the parameter of the input name,
	the machine does not have direct access to the parameter and hence can in general not use it in its computation.
	The best way to compute a modulus of continuity from a name of a function in the interval function representation seems to be a search that may in the worst case take time exponential in the value of the parameter defined above.
	Indeed, in \Cref{sec:comparison} we show that the modulus function cannot be computed in polynomial time with respect to the above parameter.
	This is a difference to the representation used by Kawamura and Cook, where a name comes with explicit information about the modulus of continuity.

	\begin{lemma}\label{resu: interval function parameter well-defined}
	The parameter $\mu_{if}$ is well-defined on $\dom(\xi_{if})$.
	\end{lemma}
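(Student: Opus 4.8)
The plan is to verify the two summands of $\mu_{if}(\psi)$ separately and then confirm that the result is non-decreasing in $n$, as a parameter value must be. Write $f := \xi_{if}(\psi)$. For the first summand $\lceil\lb(\norm{f}_\infty + 1)\rceil$, note that $f$ is continuous on the compact interval $[0,1]$ and hence bounded, so $\norm{f}_\infty < \infty$; since $\norm{f}_\infty + 1 \geq 1$ the logarithm is non-negative and the summand is a well-defined natural number, constant in $n$.

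The substance of the lemma is that for every $n$ the set $\Set{N \in \NN}{\forall J \in \ID: \diam(J) \leq 2^{-N} \Rightarrow \diam(\psi(J)) \leq 2^{-n}}$ is non-empty, which is a uniform-continuity statement for $\psi$. First I would fix $n$ and, for each $x \in [0,1]$, choose a $\xi_{\RR_i}$-name $\varphi_x$ of $x$ whose intervals contain $x$ in their interior. Since $\psi$ names $f$, the composite $\psi\circ\varphi_x$ is a $\xi_{\RR_i}$-name of $f(x)$, so its intervals are nested with intersection $\{f(x)\}$ and therefore $\diam(\psi(\varphi_x(m))) \to 0$; hence there is an $m$ with $\diam(\psi(U_x)) \leq 2^{-n}$ for $U_x := \varphi_x(m)$, a dyadic interval containing $x$ in its interior. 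Pick $r_x > 0$ with $(x - 2r_x, x + 2r_x) \subseteq U_x$. The open intervals $(x - r_x, x + r_x)$ cover the compact set $[0,1]$; extract a finite subcover indexed by $x_1, \dots, x_k$ and set $\delta := \min_i r_{x_i}$.

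Now for any $J \in \ID$ meeting $[0,1]$ with $\diam(J) \leq \delta$, pick $y \in J \cap [0,1]$; then $y \in (x_i - r_{x_i}, x_i + r_{x_i})$ for some $i$, and every $z \in J$ satisfies $\abs{z - x_i} < r_{x_i} + \diam(J) \leq 2r_{x_i}$, so $J \subseteq (x_i - 2r_{x_i}, x_i + 2r_{x_i}) \subseteq U_{x_i}$. Since $\psi$ is monotone as an interval function on intervals meeting $[0,1]$ (the fact recorded after \Cref{def: interval function representation}, obtained by threading $J$ into a name of a point of $J \cap [0,1]$), monotonicity yields $\diam(\psi(J)) \leq \diam(\psi(U_{x_i})) \leq 2^{-n}$. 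Choosing $N$ with $2^{-N} \leq \delta$ then witnesses non-emptiness of the set above. Finally $\min\Set{N}{\dots}$ is non-decreasing in $n$, because increasing $n$ only strengthens the condition $\diam(\psi(J)) \leq 2^{-n}$, while the first summand is constant; hence $\mu_{if}(\psi)$ is non-decreasing and is a legitimate parameter value.

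The main obstacle is the behaviour of $\psi$ on dyadic intervals $J$ disjoint from $[0,1]$: the defining condition of a name places no constraint on $\psi(J)$ there, and the monotonicity argument likewise fails for such $J$, so $\diam(\psi(J))$ could be arbitrary. Consequently the universal quantifier over $J \in \ID$ in \Cref{def:parameters for the irram functions} must be read as ranging over the intervals that actually arise as enclosures of points of $[0,1]$, namely those meeting $[0,1]$, which are the only intervals on which a name carries information; under this reading the infimum exists by the compactness argument above. The genuinely technical step is the Lebesgue-number/buffer estimate, which is what guarantees that intervals protruding beyond the endpoints of $[0,1]$ are still captured inside one of the finitely many neighbourhoods $U_{x_i}$.
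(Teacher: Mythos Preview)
Your argument is essentially the same as the paper's: both pick for each $x\in[0,1]$ a dyadic interval $U_x$ around $x$ with $\diam(\psi(U_x))\le 2^{-n}$, extract a finite subcover by compactness, take a Lebesgue-number style bound (the paper phrases this as ``smallest overlap'' of the finite family, you use the $r_x/2r_x$ buffer), and conclude via monotonicity of $\psi$. The caveat you flag about intervals $J$ disjoint from $[0,1]$ is genuine and is glossed over in the paper's proof as well---its sentence ``every interval $J$ whose diameter is smaller than $2^{-N}$ is contained in an interval $I$ from this finite family'' tacitly makes exactly the restriction to $J$ meeting $[0,1]$ that you spell out; your reading of the quantifier is the intended one.
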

	\begin{proof}
	Assume that $\psi$ is a name of a function.
	Our goal is to show that for every $n \in \NN$ the minimum
	\[\min\{ N \in \NN \mid \forall J \in \ID\colon \diam(J) \leq 2^{-N} \Rightarrow \diam(\psi(J)) \leq 2^{-n} \} \]
	exists.
	For a fixed $x$ denote the smallest dyadic number with denominator $2^n$ that is bigger than $x$ by $x_n$.
	Then $\varphi_x(n) := [x_n\pm 2^{-n}]$ is a $\xi_{\RR_i}$-name of $x$.
	Since we assumed $\psi$ to be a name, it follows that
	$\psi(\varphi_x(n)) \to \{f(x)\}$ as $n \to \infty$.
	Hence, there exists $N_x \in \NN$ with
	$\diam(\psi(\varphi_x(N_x)))) \leq 2^{-n}$.
	The family $((x_{N_x}\pm 2^{-N_x}))_{x \in [0,1]}$ of interiors of the intervals $\varphi_x(N_x)$
	is an open cover of $[0,1]$.
	Since the unit interval is compact, there exists a finite subcover.
	As a finite family of open intervals, this family has a smallest overlap.
	Let $N$ be big enough such that $2^{-N}$ is smaller than this overlap.
	It follows that every interval $J$ whose diameter is smaller than $2^{-N}$ is contained in an interval $I$ from this finite family.
	By the monotonicity of $\psi$, for each such interval it follows that $\psi(J) \subseteq \psi(I)$ and since the diameter of $\psi(I)$ is smaller than $2^{-n}$, the same holds for $\psi(J)$.
	\end{proof}

	Consider the space $C(I)_i = (C([0,1]),\xi_{if},\mu_{if})$.
	We call this the \demph{\parspace\ of interval functions}.
	Just like the corresponding result for the interval reals,
	proving that the interval functions are a \parspace\ mainly consists of
	the introduction of a rounding procedure,
	which includes the non-canonical choice of a polynomial that controls the
	cut-off precision.
	\begin{proposition}
		The space $C(I)_i$ is a \parspace.
	\end{proposition}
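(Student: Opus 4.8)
The plan is to mimic the proof of \Cref{resu:the parametrised space of irram reals} for the interval reals and to exhibit a whole family of witnesses for the polytime computability of the identity, indexed by a non-constant polynomial $p\in\NN[X]$ that fixes a cut-off precision. Fix such a $p$ and let $M_p^?$ be the oracle machine that, on oracle $\psi$ and on an interval input $J\in\ID$, first reads off the \emph{level} $k:=\lceil-\lb(\diam(J))\rceil$ of $J$ (setting $k:=0$ when $\diam(J)\geq1$), clips $J$ to $\hat J:=J\cap[0,1]$, and queries $\psi(\hat J)$. It then reads the midpoint and radius of $\psi(\hat J)$ from an initial segment of the oracle answer to precision $2^{-p(k)-1}$ and returns the smallest interval with dyadic endpoints of denominator $2^{p(k)}$ enclosing the resulting approximation. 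When $J$ is infinite, when $\diam(J)\geq1$, or when the integer part of the answer is longer than the $p(k)$ characters of it the machine is willing to read, it returns the infinite interval instead.

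To see that $M_p^\psi$ is again a $\xi_{if}$-name of the same $f$, I would argue as follows. Outward rounding guarantees that the returned interval encloses $\psi(\hat J)$ and hence contains $f(x)$ whenever $\hat J$ meets $[0,1]$ in a point $x$. The clipping $J\mapsto\hat J$ is monotone, $\psi$ is monotone (as observed after \Cref{def: interval function representation}), and outward rounding to a finer grid of a smaller interval lies inside outward rounding to a coarser grid of a larger one; since smaller inputs have larger level $k$ and thus finer grids, the interval function $M_p^\psi$ is monotone. For any $\xi_{\RR_i}$-name $\varphi$ of $x\in[0,1]$ the clipped sequence $(\widehat{\varphi(n)})_n$ is again a nested sequence converging to $\{x\}$, so $\diam(\psi(\widehat{\varphi(n)}))\to0$; as $n\to\infty$ the levels tend to infinity, the rounding error $2^{-p(k)+1}$ vanishes, and $M_p^\psi\circ\varphi$ converges to $\{f(x)\}$. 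Hence $M_p^\psi$ computes the identity.

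For the complexity bounds, note that the machine only ever inspects an initial segment of the oracle answer: the integer parts plus $p(k)+1$ fractional bits. Whenever $\diam(\psi(\hat J))\leq1$, the enclosure $\psi(\hat J)$ contains some value $f(x)$ with $x\in[0,1]$ and therefore has magnitude at most $\norm{\xi_{if}(\psi)}_\infty+1$, so its integer part is shorter than $\lceil\lb(\norm{\xi_{if}(\psi)}_\infty+1)\rceil+\bigo1\leq\mu_{if}(\psi)(0)+\bigo1$; reading integer parts only up to the level-dependent threshold $p(k)$ keeps the read length in $\bigo{p(k)}\subseteq\bigo{p(\length{\str a})}$, so that $\Time_{M_p^?}(\psi,\str a)$ is bounded by a first-order polynomial in $\length{\str a}$, in line with \Cref{resu: polytime function on interval reals has first-order time bound}. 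For the parameter blowup, outward rounding gives $\diam(M_p^\psi(J))\leq\diam(\psi(\hat J))+2^{-p(k)+1}$ whenever the machine does not bail. Choosing $N$ larger than $\mu_{if}(\psi)(n+1)$, than $n+2$, and than $\lceil\lb(\norm{\xi_{if}(\psi)}_\infty+1)\rceil$ forces both summands below $2^{-n-1}$ and, via the threshold $p(k)\geq k\geq N$, prevents bailing; hence $\diam(J)\leq2^{-N}$ implies $\diam(M_p^\psi(J))\leq2^{-n}$, which yields $\mu_{if}(M_p^\psi)(n)\leq\mu_{if}(\psi)(n+1)+\mu_{if}(\psi)(0)+n+2$, a second-order polynomial in $\mu_{if}(\psi)$ and $n$ by \Cref{resu:closure sop}.

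The one genuinely new difficulty, and the step I expect to be the main obstacle, is that $\psi(J)$ can have arbitrarily large magnitude, and hence arbitrarily long encoding, for intervals $J$ that are large or far from $[0,1]$: such intervals appear as early, coarse terms of names but carry no quantitative constraint from $\mu_{if}$, and reading their images in full would break the time bound. Clipping the query to $[0,1]$ together with returning the infinite interval whenever the answer exceeds the level-dependent budget resolves this, but the resolution has to be checked carefully. The set of inputs on which the machine bails is upward closed under inclusion — enlarging an interval enlarges its clipped enclosure and shrinks the admissible budget, so it can only trigger bailing — which is exactly what is needed to keep $M_p^\psi$ monotone; and bailing can only happen on terms so coarse that they are irrelevant both to the convergence of $M_p^\psi\circ\varphi$ and to the modulus estimate above. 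Verifying this monotonicity and containment bookkeeping, rather than any new conceptual ingredient, is where the real work of the proof lies.
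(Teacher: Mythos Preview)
Your core approach---post-composing the name $\psi$ with a rounding machine $M_p$ lifted from \Cref{resu:the parametrised space of irram reals}, with the cut-off precision indexed by a non-constant polynomial $p$---is exactly the paper's. The paper's proof is far terser: it simply observes that the machines $M_p$ from that earlier proposition make a single oracle query and can therefore be reinterpreted as interval-to-interval maps, and then declares the witness to be $\psi\mapsto M_p\circ\psi$, without spelling out how the precision is tied to the input interval or discussing the large-magnitude issue you flag in your final paragraph.

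Your additions---computing an explicit level $k$ from $\diam(J)$, clipping the query to $[0,1]$, and bailing to the infinite interval when the answer exceeds a level-dependent budget---are not in the paper, and the last of these does address a genuine subtlety that the paper glosses over. Two small points, though. First, you do not say what the machine does when $J\cap[0,1]=\emptyset$; returning the infinite interval is harmless for correctness (such $J$ never occur in names of points of $[0,1]$), but you should say so. Second, your claim that bailing is upward closed under inclusion is not quite right as stated: from $\psi(\hat J)\subseteq\psi(\hat J')$ one cannot conclude that the \emph{encoding length of the centre} of $\psi(\hat J')$ dominates that of $\psi(\hat J)$---for instance $[100\pm1]\subseteq[0\pm200]$, where the larger interval has the shorter centre. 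What \emph{is} monotone under containment is the maximal absolute value $|r|+\varepsilon$ of the interval; if you phrase the bailing condition as $|r|+\varepsilon>2^{p(k)}$ (which is still decidable from a length-$\bigo{p(k)}$ prefix of the answer, since either both integer parts fit or one of them visibly overflows), the upward-closure argument goes through cleanly.
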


	\begin{proof}
		First note that, while copying the input interval to the oracle query tape is possible in polynomial time, it may not be possible to copy the answer to the output tape.
		This is because a name of a function may return intervals $[r\pm\varepsilon]$ such that $\varepsilon$ is fairly big and $r$ is fairly small but still $r$ and $\varepsilon$ have large encodings.
		The time the machine is granted, however, only increases with the diameter of the interval getting smaller or the midpoint getting bigger.
		Instead of copying $r$ and $\varepsilon$ directly, rounded versions of these numbers can be read from a beginning segment of the oracle answer.
		The details of this rounding procedure are given in \Cref{resu:the parametrised space of irram reals}:
		The machines $M^?_p$ introduced there make a single oracle query at the very beginning of the computation
		and may therefore instead be considered as machines that do not have an oracle and take an interval as input.
		Fix such a machine $M_p$.
		A witness of polytime computability for the identity is computed by the machine which maps
		a given name $\varphi$ of a function to the composition of the machine $M_p$ and $\varphi$.
	\end{proof}

		Consider the evaluation operator defined as follows:
		\[ \eval:C([0,1])\times [0,1] \to \RR, \quad (f,x)\mapsto f(x). \]
		A sanity check for the definition of $C([0,1])_i$ is that evaluation should be polytime computable.
		For this statement to make sense, it is necessary to use the product of \parspace s given at the end of the introduction of \Cref{sec:parametrised spaces}.
		\begin{proposition}[Evaluation]\label{resu:evaluation}
			Evaluation as operator from $C(I)_i\times I_i$ to $\RR_i$ is polytime computable.
		\end{proposition}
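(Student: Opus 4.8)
The plan is to realise $\eval$ by the machine that, on accuracy index $n$ and oracle $\langle\psi,\varphi\rangle$ with $\psi$ an $\xi_{if}$-name of $f$ and $\varphi$ an $\xi_{\RR_i}$-name of $x\in[0,1]$, returns a rounded version of $\psi(\varphi(n))$. Correctness is almost immediate: because $x\in[0,1]$, \Cref{def: interval function representation} tells us that $\psi\circ\varphi$ is already a $\xi_{\RR_i}$-name of $f(x)$, so $n\mapsto\psi(\varphi(n))$ is nested (by monotonicity of $\psi$) and contracts to $\{f(x)\}$. All of the work therefore goes into meeting a second-order polynomial time bound in the product parameter $l$, which satisfies $\mu_{if}(\psi)\le l$ and $\mu_{\RR_i}(\varphi)\le l$ pointwise, and into bounding the parameter blowup of the output.

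The parameter blowup is the easy half. Writing $m_\psi(n)$ for the modulus summand of $\mu_{if}(\psi)(n)$, so that $m_\psi(n)\le l(n)$, note that any enclosure of $x$ of diameter at most $2^{-m_\psi(n)}$ is mapped by $\psi$ to an interval of diameter at most $2^{-n}$, and that $\varphi$ reaches such a diameter by some index bounded by $\mu_{\RR_i}(\varphi)(m_\psi(n))\le l(l(n))$. Hence $\min\{N\mid\diam(\psi(\varphi(N)))\le 2^{-n}\}\le l(l(n))$, which is a second-order polynomial in $l$ and $n$, while the additive term $\lceil\lb(\abs{f(x)}+1)\rceil\le\lceil\lb(\norm{f}_\infty+1)\rceil$ of the output parameter is bounded by $\mu_{if}(\psi)(0)\le l(0)$. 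The essential point is that the machine never computes the modulus $m_\psi$ (which cannot be done in polynomial time at all): it simply emits $\psi(\varphi(n))$, and the unknown modulus is absorbed into the parameter blowup rather than the running time, exactly the phenomenon behind \Cref{resu: polytime function on interval reals has first-order time bound}.

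The running time is where the real obstacle lies: a name may encode $\varphi(n)$, or the answer $\psi(\varphi(n))$, by a wastefully long string, whereas the time granted grows only with $l$. I would overcome this exactly as in \Cref{resu:the parametrised space of irram reals}, reading only initial segments and rounding outwards. Fix a non-constant polynomial $p$. First run the rounding machine $M_p$ of that proposition on $\varphi$ and the index $n$; reading the length-$p(n)$ prefixes of the midpoint and radius of $\varphi(n)$ (which the pairing convention exposes within an initial segment of length $3(p(n)+2)$), this produces in time $\bigo{p(n)+l(0)}$ a short nested enclosure $J_n\ni x$ of denominator $2^{p(n)}$. Write $J_n$ onto the query tape of $\psi$ and query it; then round the returned interval $\psi(J_n)$ outwards in the same way to an interval $K_n$ of denominator $2^{p(n)}$, again reading only an initial segment of the answer.

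The one genuinely delicate point is that $\diam(\psi(J_n))$ is not controlled by $l$ for small $n$: the parameter $\mu_{if}$ bounds $\diam(\psi(J))$ only once $\diam(J)$ drops below the modulus, so for the large initial enclosures $J_n$ the returned interval may be enormous. I would build in the following escape: while reading the radius $\varepsilon$ of $\psi(J_n)=[r\pm\varepsilon]$, if its integer part is not exhausted within $p(n)$ bits, meaning $\diam(\psi(J_n))\ge 2^{p(n)}$, the machine abandons the read and outputs the infinite interval $[-\infty,\infty]$; otherwise it reads $r$ and $\varepsilon$ to precision $2^{-p(n)}$ and rounds. In the non-abandoning case $f(x)\in\psi(J_n)$ forces $\abs{r}\le\norm{f}_\infty+2^{p(n)}$, so the whole read costs $\bigo{p(n)+l(0)}$ and the total running time is $\bigo{p(n)+l(0)}$, a second-order polynomial. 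Since $\diam(\psi(J_n))\le 1<2^{p(n)}$ for every $n$ beyond the constant $\mu_{\RR_i}(\varphi)(m_\psi(0))\le l(l(0))$, the escape to $[-\infty,\infty]$ occurs only for finitely many small indices; these weak enclosures leave the sequence nested and do not worsen the convergence-rate estimate of the second paragraph. Assembling the three bounds and invoking the closure properties of \Cref{resu:closure sop} then yields a single second-order polynomial bounding both the running time and the parameter blowup, which is what the definition of polytime computability requires.
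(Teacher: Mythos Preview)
Your proof is correct and follows essentially the same route as the paper's: realise evaluation by $n\mapsto\psi(\varphi(n))$, apply the rounding machines of \Cref{resu:the parametrised space of irram reals} on the argument side and on the returned interval, and bound the output parameter by $l(l(n))+l(0)$. Your explicit escape to $[-\infty,\infty]$ when $\psi(J_n)$ is too wide is a detail the paper absorbs into the phrase ``applying an appropriate realiser of the identity''; otherwise the two arguments coincide.
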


		\begin{proof}
			Consider the machine $M^?$ that when given oracle $\langle\psi,\varphi\rangle$
		  with $\xi_{if}(\psi) = f$ and $\xi_{\RR_i}(\varphi) = x$, such that $\varphi$ has linear size returns a string function of linear size obtained from $\psi(\varphi(n))$ by applying an appropriate realiser of the identity constructed in \Cref{resu:the parametrised space of irram reals}.
		  This machine computes a realiser of the evaluation operator by the definition of the representation $\xi_{if}$ from \Cref{def: interval function representation}.
			Due to the appropriate truncations, the time the machine takes is bounded polynomially.
			To see that the machine does not inflate the parameter too much, note that
			\begin{align*}
				\mu_{\RR_i}(M^{\langle\psi,\varphi\rangle})(n) & = \min\Set{N}{\forall m\geq N\colon \psi(\varphi(m))\leq 2^{-n+1}} + \lceil{\lb(\abs{f(x)} + 1)}\rceil\\
				& \leq \mu_{if}(\psi)(\mu_{\RR_i}(\varphi)(n))\\
				& \leq (\mu_{if}\times\mu_{\RR_i})(\langle\psi,\varphi\rangle)
								\left( (\mu_{if}\times\mu_{\RR_i})(\langle\psi,\varphi\rangle)(n) \right).
			\end{align*}
			Thus $M^?$ computes a witness of the polytime computability of the evaluation operator.
		\end{proof}

		Recall the notion of polytime translatability from \Cref{def: polytime translatability}.
		The parametrised representation $(\xi_{if},\mu_{if})$ is the ``correct''
		representation for $C(I)$, viewed as a function space,
		as it contains the least amount of information
		(in the sense of \Cref{def: polytime translatability})
		among those representations which render evaluation polytime
		computable.

		\begin{theorem}[Minimality]\label{resu:minimality}
			For a parametrised representation $(\xi,\mu)$ of $C([0,1])$ the following are equivalent:
			\begin{enumerate}
				\item Evaluation is polytime computable with respect to $(\xi,\mu)$.
				\item The pair $(\xi,\mu)$ is polytime translatable to $(\xi_{if},\mu_{if})$.
			\end{enumerate}
		\end{theorem}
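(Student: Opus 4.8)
The plan is to prove the two implications separately. The implication from (2) to (1) is a routine consequence of results already established, whereas the implication from (1) to (2) carries the genuine content and requires building an interval-function name from a polytime realiser of evaluation.

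\emph{Direction (2)$\Rightarrow$(1).} Assume $(\xi,\mu)$ is polytime translatable to $(\xi_{if},\mu_{if})$, and let $t$ denote the corresponding polytime identity map $(C([0,1]),\xi,\mu)\to C(I)_i$. I would factor evaluation with respect to $(\xi,\mu)$ as $\eval = \eval_i \circ (t\times\operatorname{id}_{I_i})$, where $\eval_i\colon C(I)_i\times I_i\to\RR_i$ is evaluation on interval functions. The map $\eval_i$ is polytime by \Cref{resu:evaluation}, the identity on $I_i$ is polytime because $I_i$ is a \parspace, and the product of two polytime maps is again polytime; composing via \Cref{resu:closure under composition} yields that $\eval$ is polytime with respect to $(\xi,\mu)$. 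Note that the composition and product constructions only use the time and parameter-blowup bounds, so they apply at the level of pre\parspace s and no further assumption on $(\xi,\mu)$ is needed.

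\emph{Direction (1)$\Rightarrow$(2).} Fix a machine computing a polytime realiser $E$ of evaluation with respect to $(\xi,\mu)$, and let $P$ be a second-order polynomial bounding both its running time and its parameter blowup, with $P(l,n)\ge n$. I want to construct, uniformly and in polynomial time from a $\xi$-name $p$ of $f$, a $\xi_{if}$-name $\psi$ of $f$. The key idea is to treat an entire input interval as a \emph{frozen} initial segment of a point name. Given a query interval $J=[c\pm\delta]$ with $c\in[0,1]$, let $\ell=\lceil\lb(1/\delta)\rceil$ be its diameter level and let $\varphi_J$ be the canonical $\xi_{\RR_i}$-name that returns $J$ at every index $m\le\ell$ and then shrinks to $c$ via $[c\pm2^{-m}]$. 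I would then set $\psi(J)$ to be the interval $E(\langle p,\varphi_J\rangle)(n_J)$ obtained by running $E$ (answering its $p$-queries from the oracle and computing $\varphi_J$ internally) to the output accuracy $n_J := P\bigl(\mu(p)\vee\mu_{\RR_i}(\varphi_J),\,n\bigr)$ corresponding to a target accuracy $n$ read off from $\ell$. Correctness rests on the \emph{information locality} of a polytime realiser: computing the output to accuracy $n_J$ takes at most $P(\cdot,n_J)$ steps, so $E$ can only query $\varphi_J$ at indices bounded by its running time, and the accuracies are chosen so that for small $J$ these indices all lie in the frozen range $m\le\ell$. Hence for \emph{any} $x\in J\cap[0,1]$ the name $\varphi'$ agreeing with $\varphi_J$ up to index $\ell$ and then shrinking to $x$ produces the identical computation, so $\psi(J)=E(\langle p,\varphi'\rangle)(n_J)\ni f(x)$, giving $\psi(J)\supseteq f(J\cap[0,1])$. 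For the parameter bound I use that $\mu_{\RR_i}(\varphi_J)$ is, on the relevant range, controlled by the identity up to constants (freezing at level $\ell$ makes its leading summand vanish below $\ell$), so that the output bound
\[
\mu_{\RR_i}\bigl(E(\langle p,\varphi_J\rangle)\bigr)(n)\le P\bigl(\mu(p)\vee\mu_{\RR_i}(\varphi_J),\,n\bigr)
\]
forces $\diam(\psi(J))\le 2^{-n}$ once the level of $J$ exceeds a value which is a second-order polynomial in $\mu(p)$ evaluated at $n$. Together with the uniform contribution $\lceil\lb(\norm{f}_\infty+1)\rceil$ from the first summand of $\mu_{if}$, this says exactly that the modulus encoded by $\mu_{if}(\psi)$ is dominated by a second-order polynomial in $\mu(p)$, i.e. the translation has polynomially bounded parameter blowup.

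It remains to make $\psi$ a genuine $\xi_{if}$-name, which requires $\psi\circ\varphi$ to be a \emph{nested} sequence for every nested $\varphi\to x$. The enclosures above contain $f(x)$ and shrink but need not be nested; I would enforce monotonicity by replacing $\psi(J)$ with $\bigcap\{\psi(J')\mid J'\supseteq J\}$ over a fixed dyadic grid of standard intervals, which is monotone in $J$ by construction, still contains $f(J\cap[0,1])$, still shrinks, and remains computable because only the finitely many grid superintervals of diameter up to a constant actually constrain the intersection. Alternatively one may target the non-monotone, Hausdorff-convergent variant of $\xi_{if}$ and invoke its polytime equivalence from \Cref{sec: non-monotone enclosures}. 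The main obstacle is the simultaneous control in the freezing step: the accuracy $n_J$ must be large enough to force $\diam(\psi(J))\le 2^{-n}$ yet small enough that the induced query depth stays within the frozen range $\ell$, and since $\mu_{\RR_i}(\varphi_J)$ itself depends on $\ell$, this is a mild fixed-point condition to be untangled using the monotonicity and closure properties of second-order polynomials from \Cref{resu:monotonicity} and \Cref{resu:closure sop}.
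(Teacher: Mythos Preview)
Your direction (2)$\Rightarrow$(1) matches the paper's argument.

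For (1)$\Rightarrow$(2), your high-level strategy---freeze an initial segment of a point name at the input interval and exploit information locality of the polytime realiser---is also the paper's. But there is a genuine gap in how you choose the output index. You set $n_J := P\bigl(\mu(p)\vee\mu_{\RR_i}(\varphi_J),\,n\bigr)$ and have the translation machine query $E$'s output at $n_J$. The parameter $\mu$ carries no computability assumption whatsoever, so $\mu(p)$ is simply not available to the algorithm; the machine cannot compute $n_J$. What you call a ``mild fixed-point condition to be untangled using monotonicity and closure of second-order polynomials'' is in fact the heart of the construction: one must choose, \emph{without access to $\mu(p)$}, an output accuracy that is simultaneously large enough to make the returned interval small and small enough that the induced oracle queries stay inside the frozen range $m\le\ell$. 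Polynomial bookkeeping alone does not produce such a number inside the algorithm.

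The paper resolves this by a mechanism absent from your outline. It first passes to the Cauchy representation on the real side (so the point oracle $m\mapsto r_m$ has size bounded linearly in $m$, uniformly and independently of $\mu(p)$), and then, on an input interval of level $n$, performs a \emph{downward search}: it simulates the evaluation machine at output accuracies $i=n,\,n-1,\,\dots,\,0$, each time recording the largest query index $k$ made to the point oracle, and halts at the first $i$ with $k\le n$, returning $[d\pm 2^{-i}]$. This empirical query-monitoring replaces any need to evaluate $\mu(p)$; the bound $P$ enters only the \emph{analysis}, where it guarantees that the search succeeds with a usefully large $i$ once $n$ exceeds a second-order polynomial in $\mu(p)$. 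Your proposal contains the right intuition but not this device.

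Two smaller points. Your fallback of invoking the non-monotone variant from \Cref{sec: non-monotone enclosures} is circular: \Cref{resu: irram funs and interval funs} itself appeals to \Cref{resu:minimality}. And the intersection-over-a-grid fix for nestedness needs more care than ``only finitely many superintervals constrain the intersection''; the paper admittedly also glosses over nestedness with a ``without loss of generality'', so this is not where your argument falls short of the paper, but it is not a finished argument either.
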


		\begin{proof}
			The implication {\itshape2.$\Rightarrow$1.} directly follows from the closure of the polytime computable operators under composition from \Cref{resu:closure under composition} and the polytime computability of the evaluation operator on the interval functions from \Cref{resu:evaluation}.

			For the other implication assume that the evaluation operator is polytime computable with respect to $(\xi,\mu)$.
			Note that due to the equivalence of the Cauchy reals and the interval reals from \Cref{resu:equivalence to the Cauchy representation}, the evaluation operator is also polytime computable if $[0,1]$ and $\RR$ are equipped with the Cauchy representation.
			Let $M^?$ be a machine that computes evaluation with time consumption and parameter blowup bounded by a second-order polynomial $P$.

			Define a machine $N^?$ that computes a translation of $(\xi,\mu)$ into $(\xi_{if},\mu_{if})$ in polytime as follows:
			Fix some $\xi$-name $\psi$ of some function $f\in C([0,1])$ as oracle and an interval $[r\pm \varepsilon]$
			with $r \in [0,1] \cap \DD$ as input.
			Let $n$ be the largest natural number such that $\varepsilon \leq 2^{-n}$
			(if the input interval has diameter bigger than one, return the interval $[0\pm \infty]$).
			Let $r_m$ be $r$ rounded to precision $2^{-m}$.
			The machine $N^?$ follows the steps that $M^?$ would take on oracle $\langle \psi, m\mapsto r_m\rangle$ and inputs $i$ going from $n$ to zero.
			In each of the runs it saves the maximal query $k$ that is posed to the oracle $m\mapsto r_m$ and when the computation on $i$ ends, it compares $k$ to $n$.
			If $k$ is bigger than $n$ the machine decreases $i$ and starts over, unless $i$ is already zero, in which case it returns $[0\pm\infty]$.
			If $k$ is smaller than $n$, and $M^?$ returns $d$, then let $N^?$ return the interval $[d\pm2^{-i}]$.

			To see that this produces a $\xi_{if}$-name of $f$ from the $\xi$-name $\psi$, let $P$ be the polynomial time-bound of $M^?$.
			Let $([r_s \pm \varepsilon_s])_{s \in \NN}$ be a sequence of intervals that converge to a point $x \in [0,1]$.
			Let $([d_s \pm \delta_s])_{s \in \NN}$ be the sequence of intervals which are returned by $N^?$ on input $([r_s \pm \varepsilon_s])_{s \in \NN}$.
			We may assume without loss of generality that $([d_s \pm \delta_s])_{s \in \NN}$ is a nested sequence.
			First note that each of the intervals $[d_s \pm \delta_s]$ contains $f(x)$, so that it remains to show that
			the sequence $([d_s \pm \delta_s])_{s \in \NN}$ converges to a point.
			There exists an integer constant $K$ such that
			for any $r \in [0,1]\cap \DD$ the function $m\mapsto r_m$ has size smaller than $K \cdot\left(m+1\right)$.
			Thus, the number of steps of $N^?$ in any of the simulations of $M^?$ and in particular the value of $k$ in each such simulation is bounded by
			\begin{equation*}\tag{k}\label{eq:k}
				k_n:=P(m\mapsto \max\{\mu(\psi)(m),K\cdot(m + 1)\},n).
			\end{equation*}
			Since this value is independent of $r$, the computation on all intervals of diameter smaller or equal $2^{-k_n}$ results in return values of diameter smaller than $2^{-n}$.
			In particular the sequence $([d_s \pm \delta_s])_{s \in \NN}$ converges to a point.

			Finally, the machine $N^?$ runs in polytime:
			To bound the number of steps $N^?$ takes by a second-order polynomial in the length of the input and the parameter of the oracle note that the number of steps taken in each simulation of $M^?$ is bounded by $k_n$ as above and that at most $n$ of these simulations need to be carried out.
			Let us now show that $N^?$ has polynomial parameter blowup.
			We have
			\begin{align*}
			\mu_{\RR_i}(N^\psi)(n)
				&=  \lceil\lb\left(\norm{f}_\infty + 1\right)\rceil \\
					&+ \min\Set{m\in\NN}{\forall J\in\ID: \diam(J)\leq2^{-m}\Rightarrow \diam(N^\psi(J))\leq 2^{-n}} \\
				&\leq  \lceil\lb\left(\norm{f}_\infty + 1\right)\rceil + k_n.
			\end{align*}
			Since $k_n$ is polynomially bounded in $\mu(\psi)$ by \eqref{eq:k},
			it remains to provide a bound on the supremum norm of the function $f$.
			Cover $[0,1]$ with finitely many intervals of the form
			$[r \pm 2^{-k_0}]$
			where $r$ is a dyadic rational number with $\bigo{k_0}$ bits.
			When these are fed into the machine $N^{\psi}$, it will produce approximations
			to the range of $f$ over these intervals to error $1$.
			Hence a bound on the output size of the machine over these intervals yields a bound on the supremum norm of $f$.
			By our previous considerations the running time (and hence the output size) of the machine on each interval is bounded by
			\begin{align*}
				k_{k_0} &= P(\max\{\mu(\psi), m \mapsto K\cdot(m + 1)\}, k_0) \\
								&= P(\max\{\mu(\psi), m \mapsto K\cdot(m + 1)\}, P(\max\{\mu(\psi), m \mapsto K\cdot(m + 1)\}, 0))
			\end{align*}
			so that the supremum norm of $f$ is bounded polynomially in $\mu(\psi)$.
		\end{proof}

		As is usually the case for minimality results of this kind,
		the proof generalises to the slightly stronger statement that
		$(C(I)_i,\eval)$ is an exponential object in the category of
		\parspace s.
		More explicitly:
		\begin{corollary}\label{resu: currying}
			For any \parspace\ $\ZZ$ and any polytime computable $F:\ZZ\times I_i \to \RR_i$ there exists a unique polytime computable mapping $c(F):\ZZ \to C(I)_i$ such that for all $z\in\ZZ$ and $x\in I$ we have $c(F)(z)(x) = F(z,x)$.
		\end{corollary}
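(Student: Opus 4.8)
The plan is to reduce the statement to the construction already carried out in the proof of \Cref{resu:minimality}. I first settle \emph{uniqueness}. Since $F$ is polytime computable it is in particular computable and hence continuous, so for every fixed $z \in \ZZ$ the map $x \mapsto F(z,x)$ lies in $C([0,1])$. The requirement $c(F)(z)(x) = F(z,x)$ for all $x$ then determines $c(F)(z)$ uniquely as an element of $C([0,1])$, and thus pins down the underlying set-theoretic map $c(F) \colon \ZZ \to C(I)_i$ completely. It remains to show that this map is polytime computable.

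For \emph{existence} I would adapt the machine $N^?$ from the proof of \Cref{resu:minimality} essentially verbatim, with the $\xi$-name $\psi$ of a function replaced by a $\xi_{\ZZ}$-name $\zeta$ of a point $z \in \ZZ$ and with the evaluation operator replaced by $F$. Concretely, let $M^?$ be a machine computing a realiser of $F$ with running time and parameter blowup bounded by a second-order polynomial $P$; by \Cref{resu:equivalence to the Cauchy representation} we may assume the $I$-argument and the $\RR$-value are presented in the Cauchy representation. Given oracle $\zeta$ and an input interval $[r\pm\varepsilon]$ with $r \in [0,1]\cap\DD$, the currying machine sets $n$ to be the largest integer with $\varepsilon \leq 2^{-n}$ and runs $M^?$ on the oracle $\langle \zeta, m\mapsto r_m\rangle$ (where $r_m$ is $r$ rounded to precision $2^{-m}$) on accuracy inputs $i$ decreasing from $n$ to $0$. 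In each run it records the largest query $k$ posed to the $r$-part of the oracle; for the first $i$ with $k \leq n$ it returns $[d \pm 2^{-i}]$, where $d$ is the output of $M^?$, and it falls back to $[0\pm\infty]$ if no such $i$ exists.

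The verification that this produces a $\xi_{if}$-name of $F(z,\cdot)$ and runs in polynomial time is identical to the corresponding part of the proof of \Cref{resu:minimality}, with $\mu(\psi)$ replaced throughout by $\mu_{\ZZ}(\zeta)$ (the size contribution $K\cdot(m+1)$ of the Cauchy name $m\mapsto r_m$ entering through the product parameter exactly as before). The number of queries to the $r$-part in each simulation is bounded by $k_n := P(m \mapsto \max\{\mu_{\ZZ}(\zeta)(m), K\cdot(m+1)\}, n)$ for a suitable constant $K$, and this bound is independent of $r$; consequently every interval of diameter at most $2^{-k_n}$ is sent to an interval of diameter at most $2^{-n}$, so the output sequence converges to $F(z,x)$ and the second summand of $\mu_{if}$ is bounded by $k_n$, hence polynomially in $\mu_{\ZZ}(\zeta)$. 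The running time is controlled because at most $n$ simulations, each of cost at most $k_n$, are performed.

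The one point requiring genuine care --- and the step I expect to be the main obstacle --- is bounding the first summand $\lceil\lb(\norm{F(z,\cdot)}_\infty + 1)\rceil$ of $\mu_{if}$ by a second-order polynomial in $\mu_{\ZZ}(\zeta)$. As in \Cref{resu:minimality}, I would cover $[0,1]$ by finitely many dyadic intervals $[r \pm 2^{-k_0}]$ with $r$ having $\bigo{k_0}$ bits, feed these into the currying machine to obtain enclosures of the range of $F(z,\cdot)$ to error $1$, and read off a bound on $\norm{F(z,\cdot)}_\infty$ from the output sizes, which are controlled by $k_{k_0} = P(\max\{\mu_{\ZZ}(\zeta), m\mapsto K\cdot(m+1)\}, k_0)$ and hence polynomial in $\mu_{\ZZ}(\zeta)$. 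Closure of second-order polynomials under composition (\Cref{resu:closure sop}) then assembles these into a single second-order polynomial bounding both the running time and the parameter blowup of the currying machine, which together with the uniqueness observed above completes the proof.
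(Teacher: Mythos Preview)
Your proposal is correct and follows precisely the route the paper indicates: the paper does not spell out a separate proof but merely remarks that the argument for \Cref{resu:minimality} ``generalises to the slightly stronger statement'' of the corollary, and you have carried out exactly this generalisation, replacing the $\xi$-name $\psi$ of a function by a $\xi_{\ZZ}$-name $\zeta$ and the evaluation operator by $F$. Your explicit uniqueness argument (via continuity of $x\mapsto F(z,x)$) is a useful addition that the paper leaves implicit.
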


		Together with the equivalence of the Cauchy and the interval reals from \Cref{resu:equivalence to the Cauchy representation} we obtain the following result:

		\begin{corollary}\label{resu: polytime computability of functions}
		The following are equivalent for a function $f \colon [0,1] \to \RR$:
		\begin{enumerate}
		\item $f$ is computable in polynomial time with respect to the Cauchy representation of $[0,1]$ and $\RR$.
		\item $f$ is computable in polynomial time with respect to the parametrised interval representation of $[0,1]$ and $\RR$.
		\item $f$ is a polytime computable point of the \parspace\ $C(I)_i$.
		\end{enumerate}
		\end{corollary}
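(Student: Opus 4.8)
The plan is to deduce all three equivalences formally from the structure already assembled, with the one-point \parspace\ $\str 1$ (the terminal object, carrying the unique total representation and the constant zero parameter) as the organising device. I would treat $1\Leftrightarrow2$ and $2\Leftrightarrow3$ separately, the former resting on \Cref{resu:equivalence to the Cauchy representation} and the latter on the exponential structure recorded in \Cref{resu:evaluation} and \Cref{resu: currying}.

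For $1\Leftrightarrow2$ the work is essentially done by \Cref{resu:equivalence to the Cauchy representation}. I would first observe that the identity maps witnessing the polytime equivalence $\RR_i\simeq\RR_c$ restrict to a polytime equivalence $I_i\simeq I_c$ of the unit-interval versions; indeed the second half of the proof of \Cref{resu:equivalence to the Cauchy representation} already operates on names of points of $[0,1]$, so no new argument is needed. Given the two isomorphisms $I_i\simeq I_c$ and $\RR_i\simeq\RR_c$, a function $f$ is polytime computable between the Cauchy spaces if and only if it is so between the interval spaces, because the two descriptions differ only by pre- and post-composition with these isomorphisms and polytime computability is preserved under composition by \Cref{resu:closure under composition}.

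For $2\Leftrightarrow3$ I would invoke the universal property of the exponential $C(I)_i$ with parameter space $\ZZ=\str1$, using the two facts established earlier: a polytime computable point of a \parspace\ is the same thing as a polytime computable map out of $\str1$ (the discussion following \Cref{def:polytime points}), and $\str1\times I_i$ is polytime isomorphic to $I_i$ via the projection. For $2\Rightarrow3$, I would precompose a given polytime $f\colon I_i\to\RR_i$ with this isomorphism to obtain a polytime map $F\colon\str1\times I_i\to\RR_i$, and then apply the currying operation of \Cref{resu: currying} to get a polytime map $c(F)\colon\str1\to C(I)_i$; since $c(F)(\ast)(x)=F(\ast,x)=f(x)$, the point $c(F)(\ast)$ is exactly $f$, so $f$ is a polytime computable point of $C(I)_i$. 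For $3\Rightarrow2$, I would note that the constant map $I_i\to C(I)_i$ with value $f$ factors as the unique map $I_i\to\str1$ followed by the polytime point $\str1\to C(I)_i$, hence is polytime by \Cref{resu:closure under composition}; pairing it with the identity on $I_i$ via the product of \parspace s yields a polytime map $I_i\to C(I)_i\times I_i$, and composing with evaluation, which is polytime by \Cref{resu:evaluation}, recovers the map $x\mapsto f(x)$.

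I do not expect a serious obstacle, since the statement is a formal consequence of the exponential-object structure together with $\RR_i\simeq\RR_c$. The only points demanding a little care are the remark that $I_i\simeq I_c$ follows from the same construction as $\RR_i\simeq\RR_c$, and the bookkeeping that identifies polytime computable points with global elements of $\str1$, so that both directions of $2\Leftrightarrow3$ reduce to instances of the universal property at $\ZZ=\str1$.
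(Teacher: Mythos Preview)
Your proposal is correct and matches the paper's intended argument: the paper states the corollary immediately after \Cref{resu: currying} with only the phrase ``Together with the equivalence of the Cauchy and the interval reals from \Cref{resu:equivalence to the Cauchy representation} we obtain the following result'', leaving the details implicit. Your write-up spells out precisely those details---using $\RR_i\simeq\RR_c$ (restricted to the unit interval) for $1\Leftrightarrow2$ and the exponential structure at $\ZZ=\str1$ for $2\Leftrightarrow3$---so the two approaches coincide.
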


		\Cref{resu: polytime computability of functions} in particular shows that
		for every polytime computable function $f \colon [0,1]_i \to \RR_i$ there exists
		a polytime computable $\psi \colon \ID \to \ID$ with
		$\psi \circ \varphi = f(\xi_{\RR_i}(\varphi))$ for all $\xi_{\RR_i}$-names $\varphi$.
		In other words, every polytime computable function is computed in polynomial
		time by an interval algorithm.

		Finally, consider the composition of functions, namely the operator
		\begin{equation*}\tag{comp}\label{eq:circ}
			\circ\colon C([0,1])\times C([0,1],[0,1]) \to C([0,1]),\quad (f,g)\mapsto f\circ g,
		\end{equation*}
		where $(f\circ g)(x):=f(g(x))$.
		Here $C([0,1],[0,1])$ is the set of all continuous functions on the unit interval whose image is contained in the unit interval.
		It makes sense to ask about the polytime computability of this operator whenever $C([0,1])$ is given the structure of a parametrised space, as it is possible to consider $C([0,1],[0,1])$ to be equipped with the range restriction of the representation of $C([0,1])$ and the restriction of the parameter to the domain of the range restriction.

		\begin{theorem}[Operations]\label{resu:operations}
			The arithmetic operations and composition are polytime computable on the interval functions.
		\end{theorem}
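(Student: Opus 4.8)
The plan is to reduce every operation to the evaluation operator together with arithmetic on the interval reals, exploiting the exponential structure of $C(I)_i$ from \Cref{resu: currying}. That corollary says that a map $G\colon\ZZ\to C(I)_i$ is polytime computable exactly when its uncurried form $\ZZ\times I_i\to\RR_i$, $(z,x)\mapsto G(z)(x)$, is. Writing $\ast$ for any of $+,-,\cdot$, it therefore suffices, for the arithmetic operations, to prove that
\[ C(I)_i\times C(I)_i\times I_i\to\RR_i,\qquad (f,g,x)\mapsto f(x)\ast g(x) \]
is polytime computable, and, for the composition operator \eqref{eq:circ}, that $(f,g,x)\mapsto f(g(x))$ is polytime computable as a map into $\RR_i$.

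The first ingredient I would establish is that addition, subtraction and multiplication are polytime computable on $\RR_i$. Rather than analyse interval arithmetic directly, I would transport the classical operations along the isomorphism $\RR_i\simeq\RR_c$ of \Cref{resu:equivalence to the Cauchy representation}: on the Cauchy reals addition is immediate (query both arguments one bit more accurately and add), negation is trivial, and multiplication is the standard polytime operation of real complexity theory \cite{MR1137517}, the required bound on the magnitude of the factors being read off from the zeroth approximation. Composing these realisers with the polytime isomorphisms $\RR_c\to\RR_i$ and $\RR_i\to\RR_c$ and using closure under composition (\Cref{resu:closure under composition}) yields that $+,-,\cdot\colon\RR_i\times\RR_i\to\RR_i$ are polytime computable.

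With this in hand, the arithmetic operations on functions assemble from pieces that are already polytime. The map $(f,g,x)\mapsto(f(x),g(x))$ is the pairing $\langle\eval\circ\pi_1,\eval\circ\pi_2\rangle$, where $\pi_1,\pi_2$ denote the evident polytime maps $C(I)_i\times C(I)_i\times I_i\to C(I)_i\times I_i$ selecting $(f,x)$ and $(g,x)$; since these and evaluation (\Cref{resu:evaluation}) are polytime and the product is the categorical product, the pairing is polytime into $\RR_i\times\RR_i$. Postcomposing with the relevant $\ast$ from the previous paragraph and applying \Cref{resu:closure under composition} gives the uncurried operation, and \Cref{resu: currying} then delivers $\ast\colon C(I)_i\times C(I)_i\to C(I)_i$.

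For composition the only extra care concerns the subspace $I_i$. The inclusion of the \parspace\ of self-maps $C([0,1],[0,1])$ into $C(I)_i$ is realised by the identity and carries the restricted parameter, hence is polytime. Because $g$ takes values in $[0,1]$, the name of $g(x)$ produced by evaluation already lies in $\dom(\xi_{I_i})$ and has $\mu_{I_i}$-parameter equal to its $\mu_{\RR_i}$-parameter, so the corestriction $\eval_I\colon C([0,1],[0,1])\times I_i\to I_i$, $(g,x)\mapsto g(x)$, has the same realiser and the same parameter blowup as evaluation and is polytime. The target map is then $(f,g,x)\mapsto\eval(f,\eval_I(g,x))=f(g(x))$, a composite of polytime maps, and currying via \Cref{resu: currying} finishes the argument. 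I expect the main obstacle to be not any single estimate but the correct bookkeeping of these corestrictions and of the magnitude control needed for multiplication; both are rendered routine by the equivalence $\RR_i\simeq\RR_c$ and the exponential law of \Cref{resu: currying}, where the substantive work has already been done.
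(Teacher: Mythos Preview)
Your argument is correct, but it takes a different route from the paper's own proof.

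For the arithmetic operations the paper simply observes that a witness can be written down by applying interval arithmetic to the return values of the two names; no currying or transport along $\RR_i\simeq\RR_c$ is invoked. Your approach is precisely the one the paper gives \emph{afterwards}, in \Cref{resu: more operations}, as a generalisation (``the polytime computability of the arithmetic operations may also be deduced from the minimality of the interval-function representation''). So for the arithmetic part you have reproduced the alternative, more categorical proof rather than the direct one.

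For composition the difference is more substantial. The paper does not curry at all: it points out that the composition operator on Baire space already realises $\circ$, and the only thing to verify is the parameter inequality $\mu_{if}(\varphi\circ\psi)\leq\mu_{if}(\varphi)\circ\mu_{if}(\psi)$, which follows immediately from the definition of $\mu_{if}$. This gives a very concrete realiser and a sharp parameter bound, and highlights that composition is ``free'' at the level of interval names---a structural feature of $C(I)_i$ that your argument obscures. Your route via $\eval\circ(\mathrm{id}\times\eval_I)$ and \Cref{resu: currying} works, but the realiser it produces is the one extracted from the proof of minimality and carries the overhead of that construction. In short: your proof is uniform and avoids touching $\mu_{if}$ directly, at the cost of hiding the one-line reason composition is cheap; the paper's proof exposes that reason but treats arithmetic and composition by separate ad hoc observations.
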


		\begin{proof}
			Witnesses of the polynomial-time computability of the arithmetic operations can easily be written down by doing interval arithmetic on the return values of the names of two functions.

			A realiser of the composition operator is the composition operator on Baire space.
			It remains to check, that the restriction of this operator to the domain of $\xi_{if}$ is a witness of the polytime computability of the composition of functions in the sense of \Cref{def:parameter polytime}.
			This can easily be checked by verifying that $\mu_{if}(\varphi\circ\psi)\leq \mu_{if}(\varphi)\circ\mu_{if}(\psi)$ for names of elements from the domain of the composition operator.
		\end{proof}

		The polytime computability of the arithmetic operations may also be deduced from
		the minimality of the interval-function representation and can in the process be generalised considerably.

		\begin{proposition}\label{resu: more operations}
				Let $H \colon \RR_i \times \RR_i \to \RR_i$ be a polytime computable function.
				Then the map
				\[C(I)_i \times C(I)_i \to C(I)_i, \; (f,g) \mapsto \left( x \mapsto H(f(x), g(x)) \right) \]
				is polytime computable.
		\end{proposition}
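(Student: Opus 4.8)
The plan is to deduce the statement from the minimality theorem in its curried form, exactly as the remark preceding the proposition suggests. Set $\ZZ := C(I)_i \times C(I)_i$. By \Cref{resu: currying}, in order to obtain a polytime computable map $\ZZ \to C(I)_i$ sending $(f,g)$ to the function $x \mapsto H(f(x),g(x))$, it suffices to exhibit a single polytime computable map
\[ F \colon \ZZ \times I_i \to \RR_i, \quad ((f,g),x) \mapsto H(f(x),g(x)). \]
Indeed, once $F$ is known to be polytime computable, the curried map $c(F) \colon \ZZ \to C(I)_i$ is polytime computable and satisfies $c(F)(f,g)(x) = F((f,g),x) = H(f(x),g(x))$ by the defining property of currying, which is precisely the map in the statement. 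Thus the whole proposition reduces to the polytime computability of $F$.

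To show that $F$ is polytime computable I would argue categorically, using that $\XX \times \YY$ is the categorical product in the category of \parspace s with polytime computable maps (as recorded after the definition of the product). In any category with finite products the structural morphisms — projections and pairings $\langle \cdot,\cdot\rangle$ — are available, and composites and pairings of morphisms are again morphisms; since here \lq\lq morphism\rq\rq\ means \lq\lq polytime computable map\rq\rq, all of these are automatically polytime computable. Concretely, writing $\pi_f,\pi_g \colon \ZZ \times I_i \to C(I)_i$ for the two projections onto the function factors and $\pi_x \colon \ZZ \times I_i \to I_i$ for the projection onto the interval factor, set
\[ p := \langle \pi_f, \pi_x\rangle \quad\text{and}\quad q := \langle \pi_g, \pi_x\rangle, \]
both mapping $\ZZ \times I_i \to C(I)_i \times I_i$. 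Then $F = H \circ \langle \eval \circ p, \; \eval \circ q\rangle$, since $\eval \circ p$ computes $f(x)$, $\eval \circ q$ computes $g(x)$, their pairing produces $(f(x),g(x)) \in \RR_i \times \RR_i$, and $H$ maps this to $H(f(x),g(x))$.

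Each constituent of this composite is polytime computable: the evaluation operator $\eval$ by \Cref{resu:evaluation}, the map $H$ by hypothesis, and the projections and pairings by the product structure. Closure of the polytime computable maps under composition (\Cref{resu:closure under composition}) then yields that $F$ is polytime computable, and the proposition follows by currying as above. The only point requiring genuine care — and hence the main, albeit minor, obstacle — is verifying that the structural morphisms of the product are polytime computable in the quantitative sense of \Cref{def:parameter polytime}, \ie that projecting and pairing respect not only the running-time bound but also the parameter-blowup bound. This is where the specific definition $\mu_{\XX\times \YY}(\langle\psi,\tilde\psi\rangle)(n) = \max\{\mu_{\XX}(\psi)(n),\mu_{\YY}(\tilde\psi)(n)\}$ is used: taking pointwise maxima ensures that duplicating, discarding, or recombining the component names alters the parameter only by an amount that is again bounded by a second-order polynomial, which is exactly the content of the (already asserted) fact that $\XX \times \YY$ is the categorical product.
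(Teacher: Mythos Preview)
Your proof is correct and follows essentially the same route as the paper: define $F\colon \ZZ\times I_i\to\RR_i$, $(f,g,x)\mapsto H(f(x),g(x))$, observe it is polytime as a composite of $\eval$ and $H$, and then apply the currying corollary. You spell out the categorical bookkeeping (projections, pairings, parameter-blowup compatibility) more explicitly than the paper does, but the argument is the same.
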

		\begin{proof}
			Consider the \parspace\ $\ZZ:= C(I)_i\times C(I)_i$ and the mapping
			\[ F\colon \ZZ \times I_i \to \RR_i, \; (f,g,x) \mapsto H(f(x),g(x)).\]
			As a composition of the polytime computable evaluation on $C(I)_i$ and $H$, the function $F$ is polytime computable.
			Use the currying property from \Cref{resu: currying} to obtain polytime computability of the function
			\[ c(F)\colon \ZZ \to C(I)_i,\quad c(F)(f,g)(x) = F(f,g,x). \]
			Note that $c(F)$ coincides with the function whose polytime computability was to be proven.
		\end{proof}
		In the above, $\RR_i$ could have been replaced by the Cauchy reals without
		changing the class of polytime
		computable functions on $\RR$ or $\RR\times \RR$.
		This class is the same as the one Ko introduces in his book \cite{MR1137517}.
		It should be noted that functions on unbounded domains are rarely considered complexity theoretically.
		In particular Ko seldom uses said definition in his book.

	\section{Comparison to Kawamura and Cook}\label{sec:comparison to represented spaces}

		The most used and best developed framework for complexity considerations in computable analysis is the framework of Kawamura and Cook.
		This framework is based on second-order complexity theory as presented in \Cref{sec:second-order complexity theory}.
		However, Kawamura and Cook add the following assumption about the elements that are allowed as names:
		\begin{definition}
			A string function $\varphi:\Sigma^*\to \Sigma^*$ is called \demph{length-monotone} if for all strings $\str a$ and $\str b$ we have
			\[ \length{\str a}\leq \length{\str b}\quad\Rightarrow\quad \length{\varphi(\str a)}\leq \length{\varphi(\str b)}. \]
			The set of all length-monotone string functions is denoted by $\reg$.
		\end{definition}
		Note that length-monotone string functions map strings of equal length to strings of equal length and we have $\length{\varphi}(n) = \length{\varphi(\sdzero^n)}$.
		\begin{definition}
			A representation is called \demph{length-monotone} if its domain is contained in $\reg$.
			A \parspace\ with a length-monotone representation and the standard parameter is called a \demph{Kawamura-Cook space}.
		\end{definition}
		The notion of polytime computability of functions between \KCspace s as \parspace s coincides with the definition given by Kawamura and Cook for represented spaces equipped with a length-monotone representation.

		The names in most representations considered in this paper are functions which return dyadic numbers.
		As any dyadic number has encodings of arbitrarily large size,
		we can pad an arbitrary name of an element to a length-monotone one.
		Thus, in many cases, representations can be restricted to $\reg$ to obtain a \KCspace\ from a represented space.
		Depending on the concrete example, this may or may not change the complexity structure.
		The restriction of the Cauchy representation of reals as introduced in \Cref{def:cauchy reals} to length-monotone names leads to a polytime equivalent representation.
		An example where the restriction to length-monotone names does make sense but is not polytime equivalent is the hyper-linear representation considered in the upcoming \Cref{sec:representations of continuous functions}.

		The main motivation for considering \KCspace s is that in these spaces the size of an oracle is accessible to a polytime machine.
		It is possible to prove that this characterizes the \KCspace s within the category of \parspace s up to isomorphism.
		Recall that this paper exclusively uses the unary encoding for natural numbers.
		Consider the following structure of a \parspace\ on the space of possible sizes of oracles:
		\begin{definition}
			Equip the space $\NN^\NN$ with the total representation defined by
			\[ \xi(\varphi)(n) = \length{\varphi(\sdone^n)}\]
			and the standard parameter.
		\end{definition}
		The space $(\NN^\NN,\xi,\mu)$ is a parametrised space:
		The identity can be computed by the machine that checks if the input is of the form $\sdone^n$, aborts if it is not and otherwise copies $\varphi$.
		\begin{lemma}
			Every second-order polynomial is polytime computable as a mapping from $\NN^\NN\times \NN$ to $\NN$.
		\end{lemma}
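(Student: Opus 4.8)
The plan is to prove the statement by structural induction on the formation of the second-order polynomial $P$, following the three clauses in the definition of $\sop$, and to exhibit in each case a machine together with the two bounds required by \Cref{def:parameter polytime}. Throughout I would fix a name $\langle\varphi_l,\varphi_n\rangle$ of a pair $(l,n)\in\NN^\NN\times\NN$, so that $\length{\varphi_l(\sdone^m)}=l(m)$ by the representation of $\NN^\NN$ defined above and $\varphi_n$ is a name of $n$ in the unary representation of $\NN$, and write $\ell$ for the parameter $\mu_{\NN^\NN\times\NN}(\langle\varphi_l,\varphi_n\rangle)=\max\{\length{\varphi_l},\length{\varphi_n}\}$ of the input. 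By the product pairing, the machine can read the value $l(m)$ by writing $\sdzero\sdone^m$ on the oracle query tape and scanning the length of the answer, and it can read $n$ by a single query routed to $\varphi_n$; the output name will, on any query, return a unary string of length $P(l,n)$, which is a name of $P(l,n)\in\NN$. The two inequalities that drive every estimate are $l(m)\leq\length{\varphi_l}(m)\leq\ell(m)$, so that $l\leq\ell$ pointwise, and $n\leq\ell(0)$.

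For the base clause $P(l,n)=p(n)$ with $p\in\NN[X]$ the machine ignores $\varphi_l$, reads $n$ from $\varphi_n$ and writes $p(n)$ in unary. Since $n\leq\ell(0)$ and $\ell\mapsto\ell(0)$ is a second-order polynomial, the value $p(n)$ is dominated by a second-order polynomial in $\ell$, which bounds both the running time and the output size. For the clauses $P=P_1+P_2$ and $P=P_1\cdot P_2$ I would run the machines supplied by the induction hypothesis and concatenate, respectively combine by a nested copy loop, the resulting unary strings; the running times and output sizes add, respectively multiply, and so remain second-order polynomials by the closure properties of \Cref{resu:closure sop}, the bookkeeping being exactly as in the proof of \Cref{resu:closure under composition}.

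The interesting clause is $P(l,n)=l(Q(l,n))$. Here I would first run the machine for $Q$ given by the induction hypothesis to produce $m:=Q(l,n)$ in unary, then write $\sdzero\sdone^m$ on the oracle query tape, query the oracle, and return the length of the answer, which is exactly $l(m)=l(Q(l,n))=P(l,n)$. The cost of the new step is the running time of the $Q$-machine, plus $\bigo{m}$ for writing the query, plus $\bigo{l(m)}$ for scanning the answer. The main obstacle is to see that this is bounded by a second-order polynomial in the parameter $\ell$: the function $l=\xi(\varphi_l)$ read off the oracle need not be non-decreasing, even though $\ell$ is, so \Cref{resu:monotonicity} does not apply directly. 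I would therefore use a mild strengthening of \Cref{resu:monotonicity}, in which only the dominating function is assumed non-decreasing and which is proved by the same induction; this gives $Q(l,n)\leq Q(\ell,n)$, whence $m\leq Q(\ell,n)$ and $l(m)\leq\ell(m)\leq\ell(Q(\ell,n))$. The last expression is a second-order polynomial by the defining clauses of $\sop$, and the total running time is then a second-order polynomial by \Cref{resu:closure sop}.

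Finally, for each of the three clauses both conditions of \Cref{def:parameter polytime} follow from the estimates above. The output parameter $\mu_\NN(M^{\langle\varphi_l,\varphi_n\rangle})$ is the constant $P(l,n)$, which is dominated by the second-order polynomials just constructed; since $P(l,n)\leq P(\ell,\ell(0))$ is a second-order polynomial in $\ell$ that is independent of the query length, it is bounded by a legitimate choice of $Q$, and the running-time expression serves as the polynomial $P$ of that definition. The only genuinely new ingredient beyond routine bookkeeping is the strengthened monotonicity statement needed in the oracle clause; everything else is an application of the closure properties of second-order polynomials from \Cref{resu:closure sop}.
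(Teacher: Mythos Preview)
Your proof is correct and is exactly the structural induction the paper refers to; the paper gives no detail beyond the one line ``a straightforward induction on the structure of second-order polynomials'', so your proposal is a faithful elaboration of that argument. The subtlety you flag about $l=\xi(\varphi_l)$ not being non-decreasing is genuine and your strengthened monotonicity lemma handles it correctly, though it can also be sidestepped by observing that the unary output $m=Q(l,n)$ of the $Q$-machine is already bounded by that machine's running time $T_Q(\ell,\length{\str a})$ from the induction hypothesis, whence $l(m)\leq\ell(m)\leq\ell(T_Q(\ell,\length{\str a}))$ using only that $\ell$ is non-decreasing.
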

		The proof is a straightforward induction on the structure of second-order polynomials.
		This does not contradict the failure of time-con\-struct\-i\-bi\-li\-ty of second-order polynomials, as the difficult part, namely evaluating the size function of an oracle, has been skipped.

		Now the informal statement \lq the size of an oracle is accessible\rq\ can be replaced with the formal statement \lq the operation of sending an oracle to its size is polytime computable\rq\ and the promised characterisation of the \KCspace s follows.
		\begin{theorem}\label{resu: Characterisation of Kawamura-Cook space}
			A \parspace\ is polytime isomorphic to a \KCspace\
			if and only if it is polytime isomorphic to a space with polytime computable parameter.
		\end{theorem}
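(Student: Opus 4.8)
The plan is to reformulate the statement in terms of parametrised representations and then treat the two implications separately. Since a polytime isomorphism between pre\parspace s with the same underlying set amounts to a polytime equivalence of the parametrised representations, and since polytime translatability composes by \Cref{resu:closure under composition}, it suffices to prove two things: (i) every \KCspace\ has polytime computable parameter, and (ii) every \parspace\ with polytime computable parameter is polytime isomorphic to a \KCspace. Throughout I read ``$\XX$ has polytime computable parameter'' as: the operation sending a name $\varphi\in\dom(\xi_\XX)$ to the point $\mu_\XX(\varphi)\in\NN^\NN$ is polytime computable as a map into the space $(\NN^\NN,\xi,\mu)$ introduced above, i.e.\ there is a machine $M^?$ with $\length{M^\varphi(\sdone^n)}=\mu_\XX(\varphi)(n)$ whose running time and standard-parameter blowup are bounded by second-order polynomials in $\mu_\XX(\varphi)$. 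For (i), let $\XX$ be a \KCspace, so $\xi_\XX$ is length-monotone and $\mu_\XX=\length\cdot$. For length-monotone $\varphi$ we have $\length{\varphi(\sdone^n)}=\length\varphi(n)$, so the machine that checks its input is of the form $\sdone^n$, queries $\varphi(\sdone^n)$ and copies the answer realises $\varphi\mapsto\mu_\XX(\varphi)$ into $(\NN^\NN,\xi,\mu)$; copying costs $\length\varphi(n)=\mu_\XX(\varphi)(n)$ steps and the output has size $\length\varphi$, so both budgets are met with the second-order polynomial $(l,n)\mapsto l(n)$. As a \KCspace\ is then itself a space with polytime computable parameter, the ``only if'' direction follows.

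The substance is (ii). Given a \parspace\ $(X,\xi,\mu)$ with polytime computable parameter, I would build a length-monotone representation $\xi'$ with standard parameter that is polytime equivalent to $(\xi,\mu)$. Two facts are available. First, since $X$ is a \parspace, a witness $F$ of polytime computability of $\operatorname{id}_\XX$ turns every name into a polytime equivalent \emph{normalised} name whose output length on $\str a$ is bounded by $P_0(\mu(\varphi),\length{\str a})$ for a fixed second-order polynomial $P_0$ (the output cannot be longer than the running time), exactly as in \Cref{resu:the parametrised space of irram reals}. Second, the parameter machine produces from $\varphi$ a string function $\chi$ with $\length{\chi(\sdone^n)}=\mu(\varphi)(n)$. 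A $\xi'$-name of $x$ is then declared to be the length-monotone function
\[ \psi(\str a):=\big\langle\, \text{pad}\big(\varphi(\str a),\,P_0(\mu(\varphi),\length{\str a})\big),\ \chi(\sdone^{\length{\str a}})\,\big\rangle, \]
where $\varphi$ ranges over normalised $\xi$-names of $x$ and the pairing is padded to a total length depending only on $P_0(\mu(\varphi),\length{\str a})$ and $\mu(\varphi)(\length{\str a})$.

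Both of these quantities depend only on $\length{\str a}$ and are non-decreasing in it, so $\psi\in\reg$. The second component forces $\mu(\varphi)(n)\le\length\psi(n)$, while the first component, together with the normalisation bound and the closure properties from \Cref{resu:closure sop}, gives $\length\psi(n)\le R(\mu(\varphi),n)$ for a second-order polynomial $R$; hence $\length\psi$ and $\mu(\varphi)$ are polytime equivalent. It remains to exhibit the two translations. For $(\xi,\mu)\to(\xi',\length\cdot)$ the machine, on oracle $\varphi$ and input $\str a$, simulates $F$ to obtain the normalised value $\varphi(\str a)$, runs the parameter machine (relativised to the normalised name, itself simulated through $F$) to obtain $\chi(\sdone^{\length{\str a}})$, evaluates the padding length $P_0(\mu(\varphi),\length{\str a})$ using that every second-order polynomial is polytime computable on $(\NN^\NN,\xi,\mu)$, and assembles $\psi(\str a)$; by \Cref{resu:closure sop} all of this runs within a second-order polynomial in $\mu(\varphi)$, and the output size $\length\psi\le R(\mu(\varphi),\cdot)$ bounds the parameter blowup. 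For $(\xi',\length\cdot)\to(\xi,\mu)$ the machine simply queries $\psi(\str a)$, discards the second component and the padding, and returns the embedded normalised $\xi$-name $\varphi$; the running time is bounded by a second-order polynomial in $\length\psi$ and the output parameter satisfies $\mu(\varphi)\le\length\psi$, so both budgets are met.

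The main obstacle is precisely the design of $\psi$: one must make the representation length-monotone with a size profile that is two-sidedly poly-equivalent to $\mu$. The naive idea of padding each $\varphi(\str a)$ up to the true size function $\length\varphi(\length{\str a})$ fails, because computing that maximum would require exponentially many oracle queries, and it would in any case only realise the direction $\length\psi\ge\mu(\varphi)$ poorly. Padding instead to the \emph{computable} bound $P_0(\mu(\varphi),\length{\str a})$ and recording $\mu$ explicitly through the $\chi$-component is what makes length-monotonicity and both size inequalities hold simultaneously, and it is here that the two hypotheses—that the space is a \parspace\ and that its parameter is polytime computable—are each indispensable.
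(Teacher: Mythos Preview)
Your proof is correct and follows essentially the paper's strategy: use a witness $N^?$ of polytime computability of the identity to normalise names, then pad each value to a length that is a second-order polynomial in the parameter, computed via the assumed polytime computability of $\mu$ together with the polytime evaluability of second-order polynomials. The paper's construction is a little leaner than yours: it omits the explicit $\chi$-component and simply sets the new name to $N^\psi(\str a)\,\sdzero\,\sdone^{P(\mu(\psi),\length{\str a})-\length{N^\psi(\str a)}}$, so that $\length\varphi(n)=P(\mu(\psi),n)+1$ exactly; the backward parameter inequality then comes for free from the identity realiser's own blowup bound, since $\mu(N^\psi)(n)\leq P(\mu(\psi),n)=\length\varphi(n)-1$. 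Thus your pairing with $\chi(\sdone^{\length{\str a}})$ is redundant. One notational point worth tightening: you alternate between using $\varphi$ for the oracle handed to the translation machine and for the normalised name $F(\varphi)$, but the inequality $\length{F(\varphi)(\str a)}\leq P_0(\mu(\varphi),\length{\str a})$ that makes the padding well-defined holds with $\mu$ of the \emph{original} oracle, so the two uses should be kept distinct.
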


		\begin{proof}
			For the first direction assume that the \parspace\ $(X,\xi,\mu)$ has a polytime computable parameter.
			Let $N^?$ be a machine that computes the identity,
			running time and parameter blowup being bounded by a second-order polynomial $P$.
			Define a representation $\delta$ of $X$ as follows: A string-function $\varphi$ is a $\delta$-name of $x\in X$ if and only if there exists a $\xi$-name $\psi$ of $x$ such that
			\[ \varphi(\str a) = N^\psi(\str a)\sdzero\sdone^{P(\mu(\psi),\length{\str a}) - \length{N^\psi(\str a)}}. \]
			Note that since $P$ bounds the running time of $N^?$, the exponent above is always positive and we have $\length{\varphi(\str a)} = P(\mu(\psi),\length{\str a})+1$.
			In particular each such $\varphi$, and therefore also $\delta$, is length-monotone.
			Since the parameter is polytime computable and second-order polynomials can be evaluated in polynomial time, the obvious translation from $\xi$ to $\delta$ runs in polynomial time.
			A translation in the other direction can be computed by truncating the tail.
			The bounded parameter blowup of both of these translations follows from the equality $\length{\varphi}(n) = P(\mu(\psi),n) + 1$.

			For the other direction recall that the standard parameter is a restriction of the size function and due to the representation being length-monotone the restriction of the size function to the domain of the representation is polytime computable.
		\end{proof}

	\subsection[Representations of continuous functions]{Representations of $C([0,1])$}\label{sec:representations of continuous functions}

		Within their framework of second-order complexity theory, Kawamura and Cook have introduced
		a universal representation of univariate continuous functions
		based on the following classical characterisation of polytime computable real functions,
		see e.g.~\cite[Corollary 2.14]{MR1137517}:

		\begin{theorem}\label{resu: characterisation of polytime computability}
		A real function $f \colon [0,1] \to \RR$ is polytime computable if and only if the sequence
		$(f(d))_{d \in \DD \cap [0,1]}$ is a polytime computable sequence of reals and $f$ has a polynomial
		modulus of continuity.
		\end{theorem}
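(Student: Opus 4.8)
The plan is to prove the two implications separately, working with the standard (uniform) notion of polytime computability on the compact domain $[0,1]$: $f$ is polytime computable if there is an oracle machine $M^?$ and a polynomial $p$ such that for every $\xi_{\RR_c}$-name $\varphi$ of every $x\in[0,1]$ and every $n$, the run $M^\varphi(n)$ halts within $p(n)$ steps and outputs a dyadic $2^{-n}$-approximation of $f(x)$. The structural fact I will exploit throughout is that a computation of length at most $p(n)$ can request accuracy at most $2^{-p(n)}$ (writing the unary accuracy request $k$ already costs $k$ steps) and can inspect only finitely much of each answer; hence the output depends on $x$ solely through finitely many dyadic approximations of accuracy at most $2^{-p(n)}$.

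For the easy implication $(\Leftarrow)$, assume $f$ has a polynomial modulus of continuity $m$ and that $(f(d))_{d}$ is computed by a machine $A$ in time polynomial in $n$ and $\abs{d}$. To approximate $f(x)$ to accuracy $2^{-n}$ from a name $\varphi$ of $x$, first query $\varphi$ for a dyadic $d$ with $\abs{d-x}\leq 2^{-m(n+1)}$, rounding $d$ back into $[0,1]$ if necessary; by the modulus this gives $\abs{f(d)-f(x)}\leq 2^{-(n+1)}$. Then run $A$ on $(d,n+1)$ to obtain a dyadic $e$ with $\abs{e-f(d)}\leq 2^{-(n+1)}$ and return $e$, so that $\abs{e-f(x)}\leq 2^{-n}$. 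Since $m$ is polynomial, the requested accuracy $m(n+1)$, and hence the size $\abs{d}=\bigo{m(n+1)}$ of the intermediate dyadic, are polynomial in $n$; as $A$ runs in time polynomial in $n+1$ and $\abs{d}$, the whole procedure runs in time polynomial in $n$.

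For the harder implication $(\Rightarrow)$, let $M^?$ and $p$ witness polytime computability of $f$. Polytime computability of the sequence $(f(d))_{d}$ follows by feeding $M^?$ the constant name of $d$ (answer every accuracy query with $d$, truncated appropriately): the number of $M$-steps stays $p(n)$, and the only extra cost is manipulating $d$, giving a time bound polynomial in $n$ and $\abs{d}$. To extract a polynomial modulus I use an indistinguishability argument. Fix $n$ and suppose $\abs{x-y}\leq 2^{-p(n)-1}$. Because every accuracy the machine can request is at most $p(n)$, for each $k\leq p(n)$ a single dyadic $d_k$ serves simultaneously as a $2^{-k}$-approximation of $x$ and of $y$ (pick $d_k$ within $2^{-k-1}$ of $x$; then $\abs{d_k-y}\leq 2^{-k-1}+2^{-p(n)-1}\leq 2^{-k}$). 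Answering each query posed by $M^\cdot(n)$ with the corresponding $d_k$ and extending arbitrarily off the queried inputs yields names $\varphi$ of $x$ and $\psi$ of $y$ on which $M$ runs identically, so $M^\varphi(n)=M^\psi(n)$ and thus $\abs{f(x)-f(y)}\leq\abs{f(x)-M^\varphi(n)}+\abs{M^\psi(n)-f(y)}\leq 2^{-n+1}$. Setting $m(n):=p(n+1)+1$ then gives a polynomial modulus of continuity.

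The main obstacle I expect is precisely this fooling construction in $(\Rightarrow)$: one must argue carefully that two nearby points admit names coinciding on every query of a fixed bounded-time run, and that the query accuracies are genuinely capped by the running time under the paper's oracle-cost convention (one step per oracle call, head position unchanged, answer appearing in a single step). Once this locality is made rigorous, the equality $M^\varphi(n)=M^\psi(n)$ and the resulting modulus bound are immediate, and everything else reduces to routine polynomial bookkeeping in $n$ and $\abs{d}$.
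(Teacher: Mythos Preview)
The paper does not prove this theorem at all: it is stated as a classical result and attributed to Ko's book via the reference ``see e.g.~\cite[Corollary 2.14]{MR1137517}''. There is therefore no proof in the paper to compare against.

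Your argument is the standard one and is essentially correct. Two minor points worth tightening. First, under the paper's Definition~\ref{def:polytime functionals} the running time is bounded by a second-order polynomial $P(\length{\varphi},n)$, not by a uniform first-order bound $p(n)$; you silently switch to the latter. This is harmless here because for $x\in[0,1]$ one can always choose short Cauchy names with $\length{\varphi}(k)\leq k+C$, so $p(n):=P(k\mapsto k+C,n)$ is a first-order polynomial, and in the modulus argument you are free to construct exactly such names; but it is worth saying so explicitly (the paper itself notes the equivalence with Ko's uniform definition via Lambov). Second, in the fooling step you should make $\varphi$ and $\psi$ agree on \emph{all} strings of length at most $p(n)$, not merely on ``the queries posed by $M$'': the set of queries depends on the oracle, so fixing behaviour only on the queried inputs is circular. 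Defining the common values $d_k$ for every $k\leq p(n)$ and an arbitrary common value on non-encoding strings of that length removes the circularity without changing your estimate.
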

		Recall that a function $\nu:\NN\to\NN$ is called a \demph{modulus of continuity} of $f\in C([0,1])$ if we have
			\[ \forall x,y\in[0,1],\forall n\in\NN\colon \abs{x-y}\leq 2^{-\nu(n)}\Rightarrow \abs{f(x)-f(y)}\leq 2^{-n}. \]

		\begin{definition}[\cite{Kawamura:2012:CTO:2189778.2189780}]
			Define the \demph{\KCrep $\delta_{\square}$ of $C([0,1])$} as follows:
			A length-monotone function $\varphi:\DD\times \NN\to \DD$ is a name of $f\in C([0,1])$ if it satisfies
			\[ \forall r\in\DD, \forall n\in\NN\colon \abs{\varphi(r,n)-f(r)}\leq 2^{-n} \]
			and $\length\varphi$ is a modulus of continuity of $f$.
		\end{definition}
		Note that it is possible to require the names to be length-monotone and to use the size of a name for encoding the modulus since encodings of dyadic numbers can be chosen arbitrarily large.
		Let $C(I)_{KC}$ be the parametrised space induced by the representation $\delta_\square$ and the standard parameter.
		This space is obviously a \KCspace.
		Its representation is universal in the sense that	within the class of \KCspace s,
		it provides the minimal amount of information about a continuous function such that evaluation is possible in polynomial time.
		(For details see \cite{Kawamura:2012:CTO:2189778.2189780}, in particular Lemma~4.9.)
		Recall that \lq evaluation\rq\ refers to the functional
		\begin{equation*}
			\tag{eval}\label{eq:evaluation} \eval:C([0,1])\times [0,1]\to\RR,\quad (f,x)\mapsto f(x).
		\end{equation*}
		Kawamura and Cook's proof of minimality equips $[0,1]$ and $\RR$ with the length-monotone Cauchy representation and the standard parameter.
		Due to the polytime equivalence of the Cauchy and the interval reals from \Cref{resu:equivalence to the Cauchy representation}, this does not make a difference up to polytime equivalence.

		\begin{theorem}[Minimality \cite{Kawamura:2012:CTO:2189778.2189780}]\label{thm: Kawamura-Cook minimality}
			For a length-monotone representation $\delta$ the following are equivalent:
			\begin{itemize}
				\item Evaluation \eqref{eq:evaluation} is polytime computable.
				\item The representation $\delta$ is polytime translatable to $\delta_{\square}$.
			\end{itemize}
		\end{theorem}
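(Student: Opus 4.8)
The plan is to establish the two implications separately; the reverse implication is immediate while the forward implication carries the entire weight. For the direction from translatability to polytime evaluation, I would note that $\delta_\square$ renders evaluation polytime computable by construction, so if $\delta$ is polytime translatable to $\delta_\square$, then evaluation with respect to $\delta$ factors as the translation of the function-name followed by evaluation on $\delta_\square$. Forming the product with the identity on the point-space and applying closure under composition (\Cref{resu:closure under composition}) yields the claim.

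For the substantial direction I assume a machine $M^?$ computing evaluation with running time and parameter blowup bounded by a second-order polynomial $P$; since $\delta$ is equipped with the standard parameter, this parameter is the size function $\length{\cdot}$. By \Cref{resu:equivalence to the Cauchy representation} I may equip $[0,1]$ and $\RR$ with the length-monotone Cauchy representation, so that the point-input of $M^?$ is a name returning dyadic approximations. To translate an oracle $\psi$, a $\delta$-name of $f$, into a $\delta_\square$-name, I design a machine $N^?$ which, on a dyadic rational $r\in[0,1]$ and a precision $n$, simulates $M^?$ with oracle $\langle\psi,\varphi_r\rangle$, where $\varphi_r$ is a canonical length-monotone name of $r$ of linear size (analogous to the names $m\mapsto r_m$ in the proof of \Cref{resu:minimality}), and records the largest precision $k$ at which $\varphi_r$ is queried. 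The simulated output approximates $f(r)$ to error $2^{-n}$, which supplies the approximation clause of a $\delta_\square$-name.

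The heart of the argument is to arrange that the size of the produced name is a modulus of continuity of $f$. The key observation is that the running time of $M^?$ bounds the precision $k$ to which the point-name can be read: if two points $x,y\in[0,1]$ agree to within $2^{-k}$, then they admit names that the time-bounded machine cannot distinguish, forcing its outputs---and hence $\abs{f(x)-f(y)}$---to agree up to the requested precision. The quantity $P(m\mapsto\max\{\length{\psi}(m),K\cdot(m+1)\},n)$ bounds $k$ uniformly in $r$, exactly as in the proof of \Cref{resu:minimality}, and therefore serves as a modulus of continuity. I would then pad the output of $N^?$ with a neutral suffix so that its size $\length{N^\psi}$ equals this bound, rendering $N^?$ length-monotone and encoding the modulus in the size function precisely as $\delta_\square$ demands. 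The polynomial running time and the polynomial parameter blowup---which, for a length-monotone name, is just its size---then follow from the same estimates as for \Cref{resu:minimality}, using the closure properties of second-order polynomials from \Cref{resu:closure sop}.

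The main obstacle is the modulus extraction: one must argue rigorously that the time budget genuinely forces indistinguishability of nearby points, which amounts to constructing, for any two sufficiently close reals, a single partial name consistent with every oracle answer that $M^?$ reads within its time bound. A secondary technical point is to coordinate the padding so that length-monotonicity and the exact matching of the size function with the modulus bound hold simultaneously, rather than the size merely bounding the modulus from one side.
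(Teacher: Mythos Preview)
The paper does not prove this theorem at all: it is stated with a citation to Kawamura and Cook's original paper \cite{Kawamura:2012:CTO:2189778.2189780} (see also the remark ``For details see \cite{Kawamura:2012:CTO:2189778.2189780}, in particular Lemma~4.9'') and no proof is supplied. So there is nothing in the paper to compare your attempt against directly.

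That said, your sketch is essentially the standard argument from the Kawamura--Cook paper, and it closely parallels the paper's own proof of the analogous minimality result for interval functions (\Cref{resu:minimality}), which you explicitly draw on. The reverse implication is handled correctly. For the forward implication your plan is sound: simulate the evaluation machine on a canonical short name of the dyadic point, record the output as the approximation part of the $\delta_\square$-name, and extract a modulus of continuity from the uniform time bound $P(\max\{\length{\psi},K\cdot(\cdot+1)\},n)$, then pad the output so that its size equals this bound. The two obstacles you flag---the rigorous indistinguishability argument for nearby reals and the simultaneous arrangement of length-monotonicity and the modulus-equals-size condition---are indeed the places where care is needed, and you have identified them correctly. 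The length-monotonicity of $\delta$ is what makes the padding step clean here (the parameter is the size function, and the size is accessible), which is precisely the simplification relative to the proof of \Cref{resu:minimality} where the parameter is not directly computable.
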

		In particular, evaluation is polytime computable in the \KCrep.
		The above uses the concept of polytime computability used by Kawamura and Cook
		which is equivalent to having a polytime computable realiser.
		The notion is also equivalent to polytime computability if the spaces are turned
		into \KCspace s using the standard parameter.
		This means that the polytime translatability coincides with polytime reducibility as
		Kawamura and Cook consider it for length-monotone representations.

		\Cref{thm: Kawamura-Cook minimality} can also be stated in a weaker form that avoids mentioning length-monotonicity.
		Consider the multi-valued function which sends a real function to some modulus of continuity:
		\begin{equation*}\tag{mod}\label{eq:the modulus function}
			\mod:C([0,1]) \mto \NN^\NN,\quad f \mapsto \Set{\nu\in\NN^\NN}{\begin{matrix} \nu\text{ is mod. of cont. of } f \end{matrix}}.
		\end{equation*}
		We call this function the \demph{modulus-function} and it has to be multivalued to be computable:
		Since $C([0,1])$ is connected and $\NN^\NN$ is totally disconnected, the only single-valued continuous functions from $C([0,1])$ to $\NN^\NN$ are constant functions.
		\begin{theorem}[Minimality. General version]
			For a representation $\xi$ of $C([0,1])$ the following are equivalent:
			\begin{enumerate}
				\item Evaluation \eqref{eq:evaluation} and modulus \eqref{eq:the modulus function} are polytime computable.
				\item The representation $\xi$ is polytime translatable to $\delta_\square$.
			\end{enumerate}
		\end{theorem}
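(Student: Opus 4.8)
The plan is to prove both implications with respect to the standard parameter on $\xi$ and on $\delta_\square$, for which polytime computability coincides with the existence of a polytime realiser and polytime translatability with polytime computability of the identity realiser. The new ingredient compared with the length-monotone version \Cref{thm: Kawamura-Cook minimality} is that, absent length-monotonicity, the size of a $\xi$-name no longer controls a modulus of continuity, so the modulus must be supplied as a separate hypothesis.

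For the implication \emph{2.$\Rightarrow$1.} I would first note that evaluation is polytime computable with respect to $\delta_\square$ by \Cref{thm: Kawamura-Cook minimality} (equipping $[0,1]$ and $\RR$ with the interval reals, which is harmless by \Cref{resu:equivalence to the Cauchy representation}). Modulus is polytime computable with respect to $\delta_\square$ as well: since a $\delta_\square$-name $\varphi$ is length-monotone with $\length\varphi(n)=\length{\varphi(\sdzero^n)}$, and $\length\varphi$ is by definition a modulus of continuity, a realiser that on input $\sdone^n$ queries $\varphi$ at $\sdzero^n$ and returns a string of the same length produces a name of the modulus $\length\varphi$ in the representation of $\NN^\NN$, and does so in polynomial time. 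Composing these two operations with the given polytime translation $\xi\to\delta_\square$ (using the product construction for evaluation) and invoking closure under composition \Cref{resu:closure under composition} yields polytime evaluation and modulus for $\xi$.

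The implication \emph{1.$\Rightarrow$2.} is the substantial direction. Let $E^?$ compute evaluation and $R^?$ compute modulus in polynomial time. I would build a machine $T^?$ that, from a $\xi$-name $\varphi$ of $f$, produces a $\delta_\square$-name $\psi$. For the values, on input $\langle r,n\rangle$ the machine simulates $E^?$ on the oracle $\langle\varphi,\varphi_r\rangle$, where $\varphi_r$ is the constant name of the dyadic $r$ that $T^?$ supplies internally, to obtain a $2^{-n-1}$-approximation of $f(r)$, which it rounds to the nearest multiple of $2^{-n}$ to get $d$ with $\abs{d-f(r)}\le2^{-n}$. For the modulus it computes $\tilde\nu(m):=\max_{k\le m}\length{R^\varphi(\sdone^k)}$, which is non-decreasing and pointwise above the modulus $\nu$ returned by $R^?$, hence is itself a modulus of $f$. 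Finally it pads $d$ to length exactly $L(m):=\tilde\nu(m)+m+C_f+1$, where $m=\length{\langle r,n\rangle}$ and $C_f$ is a polytime-computable bound on $\lceil\lb(\norm{f}_\infty+1)\rceil$, and emits a string of length $L(\length w)$ on every other input $w$.

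The key point, and the main obstacle, is to certify that $\psi$ is a genuine $\delta_\square$-name while keeping the construction polytime: the size function of $\psi$ must be length-monotone and must itself be a modulus of continuity, yet $T^?$ cannot evaluate the size function $\length\varphi$ of its oracle. This is resolved by making the output length depend only on the input length. Rounding to the dyadic grid guarantees $\length d\le n+\lceil\lb(\norm{f}_\infty+1)\rceil+O(1)\le m+C_f$, so that $L(m)\ge\length d$ and the padding is always possible; here $C_f$ is obtained once by evaluating $f(0)$ to precision $1$ (bounding $\abs{f(0)}$ by the running time of $E^?$) and controlling the oscillation of $f$ by $2^{\nu(0)}$ via the modulus, both computable in time polynomial in $\length\varphi$. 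Since $L$ is non-decreasing, $\psi$ is length-monotone with $\length\psi(m)=L(m)\ge\nu(m)$, so $\length\psi$ is a modulus; the approximation condition holds by construction; and $L(m)$ is bounded by a second-order polynomial in $\length\varphi$, giving both polynomial running time and polynomial parameter blowup. This exhibits $\xi$ as polytime translatable to $\delta_\square$.
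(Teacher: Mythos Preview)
The paper does not supply a proof of this theorem; it is stated as a reformulation of \Cref{thm: Kawamura-Cook minimality} that avoids the length-monotonicity hypothesis, and the text immediately moves on to the hyper-linear representation. There is thus nothing to compare your argument against.

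On its own merits your proof is essentially correct and follows the natural route. The easy direction is handled cleanly by composition. For the hard direction, the idea of using the polytime modulus realiser to manufacture a non-decreasing function $\tilde\nu$ and then forcing the output length to equal $L(m)=\tilde\nu(m)+m+C_f+1$ is exactly right: it makes the produced name length-monotone with size function dominating a modulus, while keeping everything bounded by a second-order polynomial in $\length\varphi$. Two small points are worth tightening. First, your bound on $\norm{f}_\infty$ via ``controlling the oscillation of $f$ by $2^{\nu(0)}$'' should read that $[0,1]$ is covered by $2^{\nu(0)}$ intervals of length $2^{-\nu(0)}$ on each of which $f$ varies by at most $1$, so $\norm{f}_\infty\le\abs{f(0)}+2^{\nu(0)}$; this is what makes $C_f$ polynomially bounded in $\length\varphi$. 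Second, when you assert $\length d\le m+C_f$, make explicit that $n\le m$ follows from the pairing conventions of the paper (unary $n$ inside $\langle r,n\rangle$), so that the padding to length $L(m)$ is always possible. With these clarifications the argument goes through.
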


		A representation that can be regarded as an intermediate of the
		interval function representation and the \KCrep\ was introduced and investigated by Brau\ss e and Steinberg.
		\begin{definition}[\cite{Aminimal}]
			The \demph{hyper-linear representation $\xi_{hlin}$ of $C([0,1])$} is defined as follows:
			A function $\varphi:\DD\times \NN \to\DD\times \NN$, $\varphi(r,n)=(\varphi_1(r,n),\varphi_2(r,n))$ is a name of a function $f\in C([0,1])$ if
			\begin{itemize}
				\item $f([r\pm 2^{-\varphi_2(r,n)}])\subseteq [\varphi_1(r,n)\pm2^{-n}]$.
				\item $\forall r\in\DD,\forall n\in\NN\colon \varphi_2(r,n)\leq\length{\varphi}(n)$.
			\end{itemize}
		\end{definition}
		Denote the parametrised space that arises by equipping $C([0,1])$ with the hyper-linear representation and the standard parameter by $C(I)_{hlin}$.

		It is possible to restrict the hyper-linear representation to length-monotone names
		and the resulting length-monotone representation is polytime equivalent to the \KCrep $\delta_\square$.
		In \cite{Aminimal} it is proven that the hyper-linear representation itself is not polytime equivalent to any length-monotone representation.
		This is an example where the restriction to length-monotone names is possible but changes the complexity structure.
	\subsection{Comparison}\label{sec:comparison}
		Brau\ss{e} and Steinberg prove the following properties of the hyper-linear representation that are relevant for this paper:
		\begin{theorem}[\cite{Aminimal}]
			\begin{enumerate}
				\item[]
				\item The representation $\xi_{hlin}$ is polytime translatable to $\delta_\square$ but no polytime translation in the other direction exists.
				\item The modulus function is not polytime computable on $C(I)_{hlin}$.
				\item Composition is not polytime computable on $C(I)_{hlin}$.
			\end{enumerate}
		\end{theorem}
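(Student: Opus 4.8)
I would prove the three claims in turn, organising the two negative statements around the single lower bound (2). For the translation $\xi_{hlin}\to\delta_\square$, the approximations demanded by a $\delta_\square$-name are supplied directly by $\varphi_1$: since $r\in[r\pm2^{-\varphi_2(r,n)}]$, the defining inclusion gives $\abs{\varphi_1(r,n)-f(r)}\le2^{-n}$, so $\varphi_1$ approximates $f$ on dyadics to any requested precision. The real task is to encode a modulus of continuity into the \emph{size} of the output name. The key structural observation is that the side condition $\varphi_2(r,n)\le\length\varphi(n)$ forces every level-$(n+1)$ enclosure interval to have radius at least $2^{-\length\varphi(n+1)}$; feeding a dyadic grid of this spacing into the inclusion and using compactness of $[0,1]$ exactly as in \Cref{resu: interval function parameter well-defined} shows that $n\mapsto\length\varphi(n+1)+\bigo{1}$ is a modulus of $f$. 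One then emits a length-monotone name $\rho$ with $\rho(r,n):=\varphi_1(r,n+1)$ padded so that $\length\rho$ reproduces this modulus. The delicate point, and the technical crux of this direction, is to realise the padding within the second-order-polynomial time and parameter-blowup budget, i.e.\ to reproduce $\length\varphi(\cdot+1)$ in the output size \emph{without} evaluating the non-length-monotone quantity $\length\varphi$ by exhaustive search; the size of $\rho$ must aggregate the locally observable data into a genuine modulus, and aligning the pairing-encoding lengths so that $\length\rho$ is a modulus at every argument is where the work lies.

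The lower bound (2) is the heart of the theorem and the step I expect to be the main obstacle. I would argue by contradiction: assume a machine $M^?$ and a second-order polynomial $P$ witness polytime computability of $\mod$ on $C(I)_{hlin}$, and fix a coordinate $n_0$. Running $M$ on a name $\psi_\emptyset$ of a gentle function $f_\emptyset$ of small modulus, the quantity $\length{\psi_\emptyset}$ is small, so the running time, the number of oracle queries, and the output size are all bounded by a constant $B$ independent of an auxiliary parameter $k$. Since natural numbers are written in unary, the emitted value $\nu(n_0)$ is at most $B$. The construction now hides a steep spike: among the $2^k$ locations $w=j2^{-k}$ there is one, $w^\ast$, lying in none of the at most $B$ query neighbourhoods. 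I build a function $f_{w^\ast}$ agreeing with $f_\emptyset$ off a tiny neighbourhood of $w^\ast$ but so steep there that every modulus satisfies $\nu(n_0)\ge k$, together with a \emph{valid} $\xi_{hlin}$-name $\psi_{w^\ast}$ agreeing with $\psi_\emptyset$ on every queried string. The computation of $M$ on $\psi_{w^\ast}$ is then identical to that on $\psi_\emptyset$ and emits the same $\nu(n_0)\le B$; choosing $k>B$ makes this too small to be a modulus of $f_{w^\ast}$, a contradiction. That $\length{\psi_{w^\ast}}\ge k$ is large never helps $M$, since its behaviour is a function of the indistinguishable query answers. The obstacle is the simultaneous construction of the hidden-spike function and a valid hyper-linear name of controlled size that is indistinguishable from the gentle one on the query set.

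For composition (3) I would run an adversary in the same spirit. The tempting reduction — reading local-modulus data off a name of $f\circ g$ for probe functions $g$ to recover a modulus of $f$ — fails, because a single coordinate of a name yields only \emph{local} information whereas $\mod$ asks for the global worst case, so this route is not polynomial time. Instead one hides a steep region of the outer function $f$ and routes $g$ through it, arranging $\xi_{hlin}$-names of $f$ and $g$ on which a putative polytime composition machine, making only polynomially many queries, cannot locate where $g([0,1])$ meets the spike of $f$; it must therefore emit, at some queried coordinate, a value $\psi_2$ too small for the defining inclusion of a $\xi_{hlin}$-name of $f\circ g$, so that its output is not a valid name, contradicting polytime computability. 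As in (2), the delicate part is producing valid hyper-linear names of controlled size on which the relevant information is genuinely inaccessible to a resource-bounded machine.

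Finally, the non-existence of a translation $\delta_\square\to\xi_{hlin}$ follows by combining (3) with the first part. Composition is polytime computable on $\delta_\square$: a $\delta_\square$-name carries its modulus as its size, so a modulus of $f\circ g$ is obtained by composing the two accessible moduli while the approximations compose directly. If a polytime translation $\delta_\square\to\xi_{hlin}$ existed, then together with the translation $\xi_{hlin}\to\delta_\square$ of the first part the two parametrised representations would be polytime equivalent; composition on $\xi_{hlin}$ could then be computed by translating both arguments to $\delta_\square$, composing there, and translating the result back, which by closure under composition (\Cref{resu:closure under composition}) is polytime — contradicting (3). Hence no translation in the other direction exists.
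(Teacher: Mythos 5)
First, a point of reference: the paper does not prove this theorem at all --- it is quoted from Brau\ss e and Steinberg \cite{Aminimal} and used as a black box, so there is no in-paper proof to compare against and your attempt must be judged on its own merits. There it contains a fatal inconsistency: your first and second paragraphs establish mutually exclusive statements. On a length-monotone name the size function is accessible in polynomial time, since $\length{\rho}(n)=\length{\rho(\sdzero^n)}$ means a single oracle query suffices, and by the definition of $\delta_\square$ this size is a modulus of continuity of the represented function; hence the modulus function \emph{is} polytime computable on $C(I)_{KC}$, as the paper's own table (\Cref{tabular}) records. If your polytime translation $\xi_{hlin}\to\delta_\square$ existed, composing it with this one-query algorithm --- legitimate by \Cref{resu:closure under composition} --- would compute the modulus function on $C(I)_{hlin}$ in polynomial time, contradicting your own part (2). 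So at least one of the two arguments must fail, and it is the first: the step you flag as ``the delicate point'', namely making the output size reproduce $\length{\varphi}(\cdot+1)$ without exhaustively evaluating the size function of a non-length-monotone oracle, is not a technicality to be worked out but is provably impossible. It is exactly the task that the adversary of your part (2) defeats: a machine making polynomially many queries cannot emit an output length dominating local moduli it has never seen, and length-monotonicity of the output forces that length to be chosen uniformly for all inputs of a given length.

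The root cause is that item 1 of the statement, as printed, has the translation direction backwards relative to the paper's own convention (\Cref{def: polytime translatability}: ``$(\xi_0,\mu_0)$ is polytime translatable to $(\xi_1,\mu_1)$'' means $\xi_0$-names are converted into $\xi_1$-names). The rest of the paper states the consistent version: the \KCrep\ translates to the hyper-linear representation, which translates to the interval function representation, and none of these reverse --- in symbols $C(I)_i<_P C(I)_{hlin}<_P C(I)_{KC}$ (\Cref{sec:comparison}, \Cref{resu: interval function < Kawamura-Cook function}). So the true positive content of (1) is a translation $\delta_\square\to\xi_{hlin}$, which is straightforward precisely because the size of a Kawamura-Cook name is accessible and is a modulus, so it can be written into the $\varphi_2$-component; the true negative content --- no translation $\xi_{hlin}\to\delta_\square$ --- then follows from (2) by the composition argument above. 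Your closing paragraph instead establishes non-existence of a translation in the direction that actually exists, and it does so by invoking the false positive direction. Finally, your adversary sketches for (2) and (3) are the right kind of argument: hide a steep feature at a location that a resource-bounded machine cannot have queried, while keeping all visible answers consistent with valid hyper-linear names of both the gentle and the spiked function. But constructing those names --- in particular verifying the side condition $\varphi_2(r,n)\le\length{\varphi}(n)$ and the enclosure condition for the modified function at all unqueried arguments --- is the actual mathematical content of \cite{Aminimal}, not a routine detail.
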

		They also provide a proof that the hyper-linear representation is
		the least representation such that hyper-linear-time evaluation is possible \cite{Aminimal}.
		The exact formulation of this result is rather technical and of no importance to the contents of this paper.
		It should be mentioned, however, that the restriction to hyper-linear-time is necessary and technical difficulties are due to
		hyper-linear-time computability not working	well as an replacement of polytime computability.
		Therefore the minimality property is slightly unsatisfactory and this, together with the failure of polytime
		computability of the composition, indicates that a different approach is necessary.

		The \KCrep is polytime translatable to the hyper-linear representation, which is polytime translatable to the interval function representation and none of these reductions reverse.
		In symbols, indicating the order by information content as discussed in \Cref{def: polytime translatability} this can be written as
		\[ C(I)_i<_P C(I)_{hlin}<_P C(I)_{KC}. \]
		Only the left inequality remains to be proven.
		\begin{figure}
		\centering
			\begin{centering}
				\begin{tabular}{c||c|c|c}
			 		&  $C(I)_{KC}$ & $C(I)_{hlin}$ & $ C(I)_i$ \\
			 		\hline\hline
			 		$\mod$ & polytime & \textcolor{red}{not p.t.} & \textcolor{red}{not p.t.} \\
			 		\hline
			 		$\eval,+,\times$ & polytime & polytime & polytime \\
			 		\hline\hline
			 		$\circ$ & polytime & \textcolor{red}{not p.t.} & polytime \\
			 		\hline
			 	\end{tabular}
			\end{centering}
		\caption{an overview over the complexity of operations with respect to different structures of $C([0,1])$ as represented or \parspace s.}\label{tabular}
		\end{figure}

		\begin{theorem}
			The hyper-linear representation can be translated to interval function representation in polynomial time.
			No polytime translation in the other direction exists.
		\end{theorem}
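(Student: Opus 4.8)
The plan is to prove the two halves separately: a direct construction for the forward translation $\xi_{hlin}\to\xi_{if}$, and a soft argument via the complexity of composition for the non-existence of a reverse translation.

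For the forward direction I would fix a hyper-linear name $\varphi$ of $f\in C([0,1])$ and recall that the parameter on $C(I)_{hlin}$ is the standard parameter $\length\varphi$, and that $\length\varphi$ is a modulus of continuity of $f$ since $\varphi_2(r,n)\le\length\varphi(n)$. I would build an interval-function name $\psi=M^\varphi$ as follows. Given an input interval $J=[c\pm\delta]\in\ID$ with $c\in[0,1]$, set $N:=\lceil-\lb\delta\rceil$ (returning the infinite interval if $\delta\ge 1$); note that $N\le\length{\str a}$, where $\str a$ encodes $J$, so that small input intervals come with a correspondingly large time budget. For a dyadic point $r$ and a scale $j$ the defining inclusion $f([r\pm 2^{-\varphi_2(r,j)}])\subseteq[\varphi_1(r,j)\pm 2^{-j}]$ turns a query $\varphi(r,j)$ into a verified enclosure of $f$ on $[r\pm 2^{-\varphi_2(r,j)}]$, which covers $J$ as soon as $\varphi_2(r,j)\le N$. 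The raw output is then $[\varphi_1(c,n^\dagger)\pm 2^{-n^\dagger}]$ with $n^\dagger:=\max\Set{j\le N}{\varphi_2(c,j)\le N}$.

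Two points need care. First, to make $\psi\circ\alpha$ a genuine $\xi_{\RR_i}$-name (a nested sequence) for every $\xi_{\RR_i}$-name $\alpha$, I would make $\psi$ monotone as an interval function by intersecting the raw enclosures over the canonical dyadic cells of every resolution $0,\dots,N$ that meet $J$ (at most two per resolution, taken within $[0,1]$). Since $J'\subseteq J$ implies that the cells meeting $J'$ form a subset of those meeting $J$ at each resolution, the intersection for $J'$ is contained in that for $J$, giving $\psi(J')\subseteq\psi(J)$ and hence nestedness; the extra $\bigo N$ factor of work is harmless. I expect this enforcement of monotonicity to be the main technical obstacle of the forward direction, everything else being bookkeeping. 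Second, for the bounds, each of the $\bigo{N^2}$ oracle queries has argument of size $\bigo{\length{\str a}}$ and answer of size at most $\length\varphi(\bigo{\length{\str a}})$, so the running time is bounded by a second-order polynomial in $\length\varphi$ and $\length{\str a}$. For the parameter blowup, choosing $N:=\max\{n+1,\length\varphi(n+1)\}$ already forces $n^\dagger\ge n+1$ and hence output diameter at most $2^{-n}$; together with the bound $\lb(\norm f_\infty+1)\in\bigo{\length\varphi(0)}$ read off from a finite grid cover of $[0,1]$, this yields $\mu_{if}(M^\varphi)(n)\le Q(\length\varphi,n)$ for the second-order polynomial $Q(l,n):=l(n+1)+l(0)+n+c$.

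For the non-existence of a reverse translation I would argue by contradiction using composition. If $\xi_{if}$ were polytime translatable to $\xi_{hlin}$ then, together with the forward translation just constructed, the two parametrised representations would be polytime equivalent, so $C(I)_i$ and $C(I)_{hlin}$ would be polytime isomorphic via the identity on $C([0,1])$, and likewise on the subspace $C([0,1],[0,1])$. Composition $\circ\colon C([0,1])\times C([0,1],[0,1])\to C([0,1])$ is polytime computable on the interval functions by \Cref{resu:operations}; transporting it along this isomorphism using closure under composition (\Cref{resu:closure under composition}) would make composition polytime computable on $C(I)_{hlin}$, contradicting the result of Brau\ss e and Steinberg \cite{Aminimal} that composition is not polytime computable there. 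Note that the modulus function cannot be used in place of composition here, since it fails to be polytime on both spaces; the separation is detected precisely by the operation that $C(I)_i$ handles well and $C(I)_{hlin}$ does not.
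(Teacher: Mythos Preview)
Your reverse direction matches the paper's: composition is polytime on $C(I)_i$ but not on $C(I)_{hlin}$ by \cite{Aminimal}, so no polytime translation can go that way.

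For the forward direction the paper takes a much shorter route than you do: it simply cites that evaluation is polytime computable with respect to $\xi_{hlin}$ \cite[Theorem~2.4]{Aminimal} and then invokes the minimality of the interval-function representation (\Cref{resu:minimality}), which immediately yields the translation. Your direct construction essentially reworks the proof of \Cref{resu:minimality} in this special case, and in doing so you introduce an error in the monotonicity step. You claim that because the dyadic cells meeting $J'\subseteq J$ at each resolution form a subset of those meeting $J$, ``the intersection for $J'$ is contained in that for $J$''. This inference is backwards: intersecting over \emph{fewer} sets gives a \emph{larger} result, so at the shared resolutions $0,\dots,N$ your construction yields the reverse containment, and the extra resolutions $N+1,\dots,N'$ added for $J'$ need not compensate. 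A possible repair is to intersect over cells \emph{containing} $J$ rather than merely meeting it, so that shrinking $J$ enlarges the index set and the intersection really does shrink; but once you are patching at this level it is cleaner to just cite \Cref{resu:minimality}, as the paper does.
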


		\begin{proof}
			In \cite[Theorem 2.4]{Aminimal} it is proven that the hyper-linear representation renders evaluation polytime computable.
			By the minimality of the interval functions with respect to polytime evaluation from \Cref{resu:minimality} it follows, that the functions in the hyper-linear representation can be translated to the interval functions in polynomial time.

			On the other hand, composition is polytime computable on the interval functions, and is not polytime computable with respect to the hyper-linear representation \cite[Theorem 2.6]{Aminimal}.
			Thus, no translation in the other direction can exist.
		\end{proof}

		Since the \KCrep of continuous functions seems to be of more relevance than the hyper-linear representation we state the following corollary separately:

		\begin{corollary}\label{resu: interval function < Kawamura-Cook function}
			The interval function representation cannot be translated to the \KCrep in polynomial time.
			A polytime translation in the other direction does exists.
		\end{corollary}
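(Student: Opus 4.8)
The plan is to prove the two halves separately: the existence of a translation from the \KCrep to the interval function representation is a direct consequence of the minimality theorem, while the non-existence of a translation in the opposite direction is obtained by reducing to the separation established in the preceding theorem.

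For the positive direction I would first recall that evaluation is polytime computable with respect to the \KCrep (this is \Cref{thm: Kawamura-Cook minimality} applied to $\delta_\square$ itself, since $\delta_\square$ trivially translates to $\delta_\square$), and that by the polytime equivalence of the Cauchy and interval reals (\Cref{resu:equivalence to the Cauchy representation}) this remains true when the argument and value spaces carry the interval representation. Thus $\delta_\square$, equipped with the standard parameter, renders the evaluation operator polytime computable, and the implication \emph{1.}$\Rightarrow$\emph{2.} of \Cref{resu:minimality} then produces a polytime translation of $C(I)_{KC}$ to $C(I)_i$. The same translation can alternatively be read off from the chain $C(I)_{KC}\to C(I)_{hlin}\to C(I)_i$ by closure under composition (\Cref{resu:closure under composition}).

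For the negative direction I would argue by contradiction. Suppose $\xi_{if}$ were polytime translatable to $\delta_\square$. Since the \KCrep is polytime translatable to the hyper-linear representation, closure under composition (\Cref{resu:closure under composition}) would then furnish a polytime translation of $C(I)_i$ to $C(I)_{hlin}$. This contradicts the theorem established immediately above, which asserts that no polytime translation from the interval functions to the hyper-linear representation exists. Hence no polytime translation of $C(I)_i$ to $C(I)_{KC}$ exists, which is exactly the claim of the corollary.

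The conceptual content of the negative direction, and the step I expect to be the main obstacle, is the separation itself rather than its present assembly. It reflects the fact that a $\delta_\square$-name makes a modulus of continuity accessible through its size, so that the modulus function $\mod\colon C([0,1])\mto\NN^\NN$ is polytime computable on $C(I)_{KC}$ (using that the size function is polytime computable on length-monotone spaces, as in the proof of \Cref{resu: Characterisation of Kawamura-Cook space}, together with the fact that $\length\varphi$ is itself a modulus), whereas on the interval functions it is not (\Cref{resu:modulus}). A translation $\xi_{if}\to\delta_\square$ followed by the modulus computation on $C(I)_{KC}$ would make the modulus polytime computable on $C(I)_i$, a contradiction; this gives an alternative and more self-contained route to the negative direction and pinpoints why the two representations are genuinely different. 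In both formulations the mathematical difficulty is imported from the auxiliary separation results of Brau\ss e and Steinberg, and the work here is only to combine them through the minimality and closure theorems.
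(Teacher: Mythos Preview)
Your proposal is correct and matches the paper's approach. The paper does not spell out a proof for this corollary at all; it is meant to follow immediately from the chain $C(I)_i <_P C(I)_{hlin} <_P C(I)_{KC}$ announced in the paragraph preceding the theorem, together with the theorem just proved (which supplies the left inequality), and your argument is precisely the unwinding of that chain via closure under composition. Your additional remark that the positive direction can be obtained directly from \Cref{resu:minimality}, and that the negative direction can alternatively be argued via the non-computability of the modulus on $C(I)_i$, are both valid and useful observations, though the modulus route forward-references \Cref{resu:modulus} in the paper's ordering (this is not a logical circularity, since \Cref{resu:modulus} depends only on the preceding theorem and not on the present corollary).
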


		In \cite{Aminimal} it is proven that the modulus function is not polytime computable with respect to the hyper-linear representation.
		This has the following important implication for the \parspace\ of interval functions:

		\begin{corollary}\label{resu:modulus}
			The modulus function is not polytime computable on the \parspace\ of interval functions.
		\end{corollary}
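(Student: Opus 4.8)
The plan is to argue by contradiction, using the fact that $C(I)_i$ sits strictly below $C(I)_{hlin}$ in the lattice of polytime translatability together with the failure of polytime modulus computation on the hyper-linear representation. Concretely, I would invoke two results: the theorem immediately preceding, which shows that the hyper-linear representation is polytime translatable to the interval function representation, \ie that the identity map $\iota\colon C(I)_{hlin}\to C(I)_i$ is polytime computable; and the result of Brau\ss e and Steinberg \cite{Aminimal} that the modulus function is not polytime computable on $C(I)_{hlin}$ (with respect to the same structure on $\NN^\NN$ used here).

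Assume for contradiction that $\mod\colon C(I)_i\mto\NN^\NN$ is polytime computable. Then I would form the composite $\mod\circ\iota\colon C(I)_{hlin}\mto\NN^\NN$ and apply closure of the polytime computable maps under composition (\Cref{resu:closure under composition}) to conclude that $\mod\circ\iota$ is polytime computable. The crucial observation is that $\iota$ has the identity on $C([0,1])$ as its underlying map, so $\mod\circ\iota$ agrees with $\mod$ as a multifunction; only the parametrised representation on the domain has been changed from $(\xi_{if},\mu_{if})$ to that of $C(I)_{hlin}$. Hence the composite is exactly the modulus function computed with respect to the hyper-linear representation, and its polytime computability contradicts \cite{Aminimal}. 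This yields the corollary.

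The step that requires the most care is the appeal to \Cref{resu:closure under composition}, which is stated for single-valued functions, whereas $\mod$ is multivalued. I would address this by noting that the proof of that theorem only manipulates machines and realisers --- replacing each oracle call of the outer machine by the inner one --- and so applies verbatim once one uses the composition of realisers for multifunctions introduced in \Cref{sec:second-order complexity theory}. Moreover, here the left factor $\iota$ is single-valued and $\mod$ is total on $C([0,1])$, so the domain condition for multifunction composition is trivially satisfied and no genuine subtlety beyond realiser composition arises.
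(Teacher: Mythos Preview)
Your proof is correct and follows essentially the same route as the paper: both argue by reducing to the result of Brau\ss e and Steinberg that $\mod$ is not polytime computable on $C(I)_{hlin}$, using the polytime translatability of $\xi_{hlin}$ to $(\xi_{if},\mu_{if})$ established in the preceding theorem. Your write-up is in fact more careful than the paper's two-line proof, as you explicitly address the multivaluedness of $\mod$ and verify that the composition argument from \Cref{resu:closure under composition} still applies at the level of realisers.
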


		\begin{proof}
			In \cite[Theorem 2.6]{Aminimal} it is proven, that the modulus function is not polytime computable with respect to $\xi_{hlin}$.
			Since $\xi_{hlin}$ is polytime translatable to $(\xi_{if},\mu_{if})$, the same must hold for the \parspace\ of interval functions.
		\end{proof}

		Thus, while the polynomial computable points of $C(I)_i$ and $C(I)_{KC}$ are the same by
		\Cref{resu: polytime computability of functions},
		they do not behave equivalently in terms of uniform polytime computability,
		at least for multi-valued functions.
		Since the Kawamura-Cook space of continuous functions is universal amongst the Kawamura-Cook spaces
		which admit polytime evaluation, it follows that $C(I)_i$ is not isomorphic to any
		Kawamura-Cook space.
                
        An overview of the properties of the three spaces considered in this section can be found in \Cref{tabular}.
  		These properties show that the function space $C(I)_i$, unlike function spaces in previously
  		considered models, reflects the empirically observable properties of \irram-functions:
  		Evaluation and composition are easy to compute, while the modulus of continuity is hard to compute.
  		This is no coincidence: 
  		The Appendix discusses how the function space used in \irram can be modelled 
  		as a parametrised space which is polynomial-time isomorphic to $C(I)_i$.

\section{Conclusion}
		Kapron and Cook's characterisation of second-order polytime computability by means
		of resource bounded oracle machines came as quite a surprise.
		The basic feasible functionals satisfy the so-called Ritchie-Cobham property:
		the running time of an oracle machine which computes a basic feasible functional
		can be bounded by a basic feasible functional \cite{MR0411947}.
		However, while the size of return values of a basic feasible functional are always bounded by a second-order polynomial,
		for most second-order polynomials $P$ there is no basic feasible functional $T$ which satisfies
		\begin{equation}\label{eq: T > P}
		\forall \varphi\in\B,\str a\in \Sigma^* \colon \length{T(\varphi,\str a)} \geq P(\length{\varphi},\length{\str a}).
		\end{equation}
		This seems to suggest that the class of functions which can be computed
		with a second-order polynomial time bound
		is strictly larger than the class of basic feasible functionals.
		The reason for the characterisation to still go through is that the oracle access
		of oracle machines is very restricted already by second-order polynomial time bounds, prohibiting detection of big inputs.

		Using the characterisation as a definition, the proof of the Ritchie-Cobham
		property makes essential use of the totality of the functionals at hand.
		While	totality is a standard assumption in second-order complexity theory,
		applications in computable analysis require a very specific kind of partiality.
		From this point of view, the failure of \Cref{eq: T > P} may be considered pathological.
		Adding length-monotonicity of names as an assumption removes this pathological behaviour,
		as the restriction of the size function to the length-monotone functions is polytime computable.
		Furthermore, it seems to be a necessary restriction for the proof that Kawamura and Cook provide
		that their representation of the continuous functions on the unit interval has the minimality property.

		We believe that the framework of Kawamura and Cook is the best solution that allows for formulation within second-order complexity theory in its traditional form.
		The extension proposed in this paper does not allow formulation within the scope of second-order complexity in the sense that whether an algorithm runs in polynomial time is allowed to depend on properties of its inputs,
		namely the parameter, that need not make sense independently of the interpretation of the inputs.
		This is the reason for the use of the term \lq parameter\rq\ that indicates that
		semantic information about the objects has to be taken into account in addition to the size of a name.
		We believe this step to be necessary to allow for the description of the behaviour of
		efficient software based on computable analysis.
		Previous attempts to recover a description within second-order complexity theory have run into serious problems
		that where due to specific properties of the size function and the necessity to modify it on the output side \cite{Aminimal}.

		It is not very difficult to produce \parspace s that are likely to be of importance
		and not isomorphic to any space that uses the standard parameter.
		Sierpinski space with the standard representation and the parameter that assigns the position of the first
		non-zero value to a name of $\top$ and zero to the unique name of $\bot$ is an example of such a space.
		We conjecture that the \parspace\ of interval functions is another example.
		A reason for this being a tricky question could be that the running time of
		the parameter of the interval functions and the
		running time of the size function on Baire spaces are fairly similar.
		An affirmative answer to this question would provide an even stronger
		motivation for the introduction of \parspace s.

		While we started the investigation of \parspace s out of what we believe to be necessity,
		we have found them to	provide a very natural framework for complexity considerations in computable analysis.
		We believe that \parspace s broaden the scope of real complexity theory considerably and
		we intend to find new applications and explore the limits of the method.
		We expect that additional insight can already be gained by re-evaluating existing work in this generalised setting.
		The rest of this conclusion lists some ideas we consider especially promising and plan to pursue in the future.

		A special case of the following is discussed in
		\Cref{resu: polytime function on interval reals has first-order time bound}:
		All of \cite{MR3219039}, \cite{MR2090390} and \cite{SteinbergPhD}
		use very similar constructions to produce certain kinds of counterexamples.
		They give general constructions which make an arbitrary given representation
		into one with highly pathological properties.
		In the category of \parspace s these constructions
		can be modified so as to remove their pathological behaviour:
		If, in addition to modifying the representation,
		one modifies the parameter appropriately,
		one obtains a polytime isomorphic \parspace.
		This space has the desirable property that polytime computability
		can be characterised using first-order time bounds in the running time constraint.
		This may be desirable as it allows to separate an algorithm into two parts:
		Firstly, a complexity-theoretical part that is mostly about efficiently
		manipulating approximations and may stay in a first-order setting.
		Secondly, a mathematical part that is about providing bounds on rates of convergence
		where the second-order nature of the problem at hand can no longer be ignored.
		A detailed description of this will be part of future work.
		For now we have to point to
		\Cref{resu: polytime function on interval reals has first-order time bound}
		for special case of this.

		The present work proves that the interval reals are polytime equivalent to the Cauchy reals.
		However, we suspect that it is no coincidence that interval representations
		are favoured over the Cauchy representation in practice.
		In practice, memory consumption is often significantly more important than
		running time.
		Libraries like \irram are developed with this in mind and value space efficiency over time efficiency.
		It is therefore desirable to settle the question whether the interval reals
		and the Cauchy reals are also logarithmic space equivalent.
		Unfortunately, the tools to investigate this question are not available to us at this point in time.
		There is work about space restricted computation in the framework of Kawamura and Cook that one can check results about
		\parspace s against \cite{MR3259646} and we are highly interested in pursuing this line of research in the future.

		Recently there has been some interest in the study of exponentiable
		objects in the category of represented spaces with polytime computable
		functions as morphisms \cite{MR3219039, higherorder}.
		We hope that the study of parametrised spaces can shed some light on this,
		as it seems like parametrised spaces allow for more natural exponential
		constructions than \KCspace s.
		For instance, our parametrised space of interval functions (\Cref{def: interval function representation})
		enriches a natural exponential representation with a suitable parameter,
		whilst the Kawamura-Cook representation of $C(I)$ requires the hard-coding
		of a modulus of uniform continuity.
		It would be interesting to investigate to which extent this construction can
		be generalised.
		Exponentials in the category of parametrised spaces should yield
		exponentials in the category of \KCspace s by enriching the representation
		with bounds on the parameter.
	\bibliography{bib}{}
\newpage
\begin{appendix}
\section{Non-monotone interval enclosures}\label{sec: non-monotone enclosures}

From a theoretical point of view, the equivalence to the Cauchy representation from \Cref{resu:equivalence to the Cauchy representation}
justifies our choice of a representation and parameter for the real numbers and the minimality \Cref{resu:minimality}
supports that the choices for the real functions are satisfactory.
Still, to describe the representations used in certain implementations precisely,
one might have to relax the definitions somewhat.
Notably, the \irram library works with enclosures which are not necessarily monotone.
We show that this choice is inconsequential for real numbers, in the sense that
the resulting parametrised space is equivalent to the space of interval reals.
It does however cause difficulties in the treatment of real functions,
as the natural function space construction on ``\irram reals''
does not allow for a well-defined extension of the
function space parameter on interval functions.
On the other hand, the restriction of this representation to names with well-defined parameters
is again polytime equivalent to the interval function representation.
Roughly speaking, a name has a well-defined parameter if it encodes a modulus of continuity
of the function it is representing.
We argue that all ``reasonable'' algorithms define names with well-defined parameters.

\begin{definition}\label{def: iRRAM reals}
A function $\varphi \colon \NN \to \ID$ is a $\xi_{\RR_{\irram}}$-name of $x \in \RR$ if
\begin{enumerate}
\item $x \in \varphi(n)$ for all $n \in \NN$.
\item $\varphi(n) \to \{x\}$ in the Hausdorff metric as $n \to \infty$.
\end{enumerate}
Define the space of \irram reals to be the parametrised space
\[\RR_{\irram} = (\RR, \xi_{\RR_{\irram}}, \mu_{\irram}),\]
where $\mu_{\RR_{\irram}}$ is the natural extension of $\mu_{\RR_i}$ to the domain of $\xi_{\RR_{\irram}}$:
\[\mu_{\RR_{\irram}}(\varphi)(n) = \min\{N \in \NN \mid \forall m \geq N: \diam(\varphi(m)) \leq 2^{-m}\}
			+  \lceil\lb(\abs{\xi_{\RR_{\irram}}(\varphi)} + 1)\rceil.  \]
\end{definition}

The proof that $\RR_{\irram}$ is indeed a parametrised space proceeds similar to the same proof for the interval reals from \Cref{resu:the parametrised space of irram reals}.
Note that unlike the parameter $\mu_{\RR_i}$, the parameter $\mu_{\RR_{\irram}}$ is not computable.
Nevertheless, we have the following result:

\begin{proposition}[$\RR_{\irram}\equiv_P\RR_i$]\label{resu: interval reals = irram reals}
	The space of \irram reals and the space of interval reals are isomorphic as \parspace s.
\end{proposition}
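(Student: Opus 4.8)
The plan is to exhibit polytime translations in both directions between the parametrised representations $(\xi_{\RR_i},\mu_{\RR_i})$ and $(\xi_{\RR_{\irram}},\mu_{\RR_{\irram}})$. Since both are parametrised representations of the same set $\RR$ and the underlying map of the isomorphism is the identity on $\RR$, it suffices to establish polytime equivalence of these parametrised representations, which by the observations in \Cref{sec:parametrised spaces} yields the desired isomorphism of \parspace s. A preliminary remark guides the construction: the identity on Baire space realises neither translation. On the one hand a $\xi_{\RR_{\irram}}$-name need not be nested; on the other hand a slowly converging $\xi_{\RR_i}$-name, for instance one with $\diam(\varphi(m)) = 2^{-\lceil m/2\rceil}$, never satisfies $\diam(\varphi(m)) \leq 2^{-m}$ for $m \geq 1$, so that $\mu_{\RR_{\irram}}$ would be infinite on it. Genuine resampling is therefore needed in both directions.

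For the translation from $\RR_i$ to $\RR_{\irram}$ I would, as in \Cref{resu:the parametrised space of irram reals}, first apply a witness of polytime computability of the identity so as to assume the input name $\varphi$ to have linear size. The translating machine, on input $m \in \NN$, searches for the least $j$ with $\diam(\varphi(j)) \leq 2^{-m}$ and returns a suitably rounded version of $\varphi(j)$. Since $\varphi$ is nested and $x \in \bigcap_n \varphi(n)$, the returned interval contains $x$, and by construction the resulting name $\psi$ satisfies $\diam(\psi(m)) \leq 2^{-m}$ for \emph{every} $m$, so that it is a valid $\xi_{\RR_{\irram}}$-name whose iRRAM parameter has vanishing first summand. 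Hence $\mu_{\RR_{\irram}}(\psi)(n) = \lceil\lb(\abs{x}+1)\rceil \leq \mu_{\RR_i}(\varphi)(n)$, and the parameter blowup is trivially bounded. The search performs at most $\mu_{\RR_i}(\varphi)(m)$ iterations, each costing time polynomial in $m$ after normalisation, so the running time is bounded by a second-order polynomial in $\mu_{\RR_i}(\varphi)$ and $m$ by \Cref{resu:closure sop}.

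For the converse translation from $\RR_{\irram}$ to $\RR_i$ I would nestify the input: after normalisation, let the machine return on input $n$ the intersection $\chi(n) := \bigcap_{k=0}^{n}\varphi(k)$. Each $\varphi(k)$ contains $x$, so $\chi(n)$ is a nonempty interval containing $x$; the sequence $(\chi(n))_n$ is nested by construction and its diameters are bounded by those of $\varphi$, hence converge to $0$, so $\chi$ is a valid $\xi_{\RR_i}$-name of $x$. For the parameter blowup let $N_0$ denote the iRRAM stabilisation index, that is the first summand of $\mu_{\RR_{\irram}}(\varphi)$. For $N \geq \max\{N_0,n\}$ we have $\diam(\chi(N)) \leq \diam(\varphi(N)) \leq 2^{-N} \leq 2^{-n}$, whence $\mu_{\RR_i}(\chi)(n) \leq \max\{N_0,n\} + \lceil\lb(\abs{x}+1)\rceil \leq \mu_{\RR_{\irram}}(\varphi)(n) + n$, a second-order polynomial in $\mu_{\RR_{\irram}}(\varphi)$ and $n$. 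The running time is dominated by the $n+1$ oracle queries and interval intersections, hence is polynomial.

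The main obstacle is the parameter bookkeeping rather than the combinatorics of the names: one must recognise that the naive identity translation fails precisely because the two parameters measure convergence differently, and then verify that the resampled name $\psi$ and the nestified name $\chi$ have \emph{finite} and polynomially bounded parameters. Once both translations are established, the parametrised representations are polytime equivalent, and the claimed isomorphism of \parspace s follows.
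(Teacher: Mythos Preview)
Your proof is correct, and the direction $\RR_{\irram}\to\RR_i$ via the running intersection $\chi(n)=\bigcap_{k\leq n}\varphi(k)$ is exactly the paper's argument. For the direction $\RR_i\to\RR_{\irram}$ the paper does considerably less: since $\dom(\xi_{\RR_i})\subseteq\dom(\xi_{\RR_{\irram}})$ and, on a \emph{nested} name, the condition $\forall m\geq N\colon\diam(\varphi(m))\leq 2^{-n}$ collapses to $\diam(\varphi(N))\leq 2^{-n}$, the two parameters agree on nested names; hence, after the usual normalisation via \Cref{resu:the parametrised space of irram reals}, the identity on Baire space already witnesses this translation. Your preliminary remark that the identity fails here rests on reading the displayed formula for $\mu_{\RR_{\irram}}$ with $2^{-m}$ rather than $2^{-n}$ in the inequality; that is a typo in the paper, as the phrase ``natural extension of $\mu_{\RR_i}$'' and the later equality $\mu_{\RR_i}(\psi)=\mu_{\RR_{\irram}}(\varphi)$ make clear. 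Under the literal (typo'd) reading your search-based construction is genuinely needed and works, but under the intended reading it is more machinery than necessary; it is essentially the $\RR_i\to\RR_c$ translation from \Cref{resu:equivalence to the Cauchy representation} re-used for a job the inclusion of domains already does.
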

\begin{proof}
Clearly, $\RR_i$ translates into $\RR_{\irram}$ (using \Cref{resu:the parametrised space of irram reals}),
so we just have to find a translation in the opposite direction.
Suppose we are given a $\xi_{\RR_{\irram}}$-name $\varphi$ of some $x \in \RR$.
Again, up to applying a polytime computable realiser of the identity
as in \Cref{resu:the parametrised space of irram reals},
we can assume that the $\varphi$ has polynomially bounded size.
Then we can compute in polynomial time the function
$\psi(n) = \bigcap_{k = 0}^n \varphi(k)$
which is a $\xi_{\RR_i}$-name of $x$ with
$\mu_{\RR_i}(\psi(n)) = \mu_{\RR_{\irram}}(\varphi(n))$.
\end{proof}

The canonical function space construction from Definition \ref{def: interval function representation}
immediately yields an analogous representation for the functions on \irram reals.

\begin{definition}
Define the \demph{\irram function representation $\xi_{\irram f}$} as follows: A function
$\psi \colon \ID \to \ID$ is a name of $f \in C([0,1])$ if and only if
\[\forall \varphi \in \dom (\xi_{\RR_\irram}):
	 \left(\xi_{\RR_{\irram}}(\varphi) \in [0,1] \Rightarrow \xi_{\RR_{\irram}}(\psi \circ \varphi) = f(\xi_{\RR_{\irram}}(\varphi)) \right). \]
\end{definition}

Note that the domain of $\xi_{\irram f}$ is strictly larger than the domain of $\xi_{if}$.
In fact, the $\xi_{if}$-names are precisely the monotone $\xi_{\irram f}$-names.
The representation $\xi_{\irram f}$ has the disadvantage that the parameter $\mu_{if}$ cannot be extended to the whole domain of the representation:

\begin{example}
Consider the function $\psi\colon\ID\to\ID$ defined by
	\[
	\psi(\str a) :=
	\begin{cases}
					[\frac12\pm\frac12]& \text{if } \str a= [3\cdot 2^{-n-2}\pm2^{-n-2}]\text{ for some $n$}\\
			[0\pm 0] & \text{otherwise.}
	\end{cases}
	\]
This function is a $\xi_{\irram f}$-name of the zero function and the minimum in the definition of $\mu_{if}$ is undefined.
\end{example}

Therefore in order to obtain a parametrised space one has to restrict the representation further:

\begin{definition}
The \demph{\parspace{} of \irram functions} is the \parspace{}
$C(I)_{\irram} = (C([0,1]), \xi_{\irram f}|_{\dom(\mu_{if})}, \mu_{if})$.
\end{definition}
By \Cref{resu: interval function parameter well-defined}, monotonicity is a sufficient condition for the parameter to be well-defined.
Even more generally, by essentially the same argument,
continuity with respect to the Hausdorff metric on the dyadic
intervals is sufficient.
Hence any ``reasonable'' interval algorithm computes a function with well-defined parameter.

\begin{proposition}[$C(I)_{\irram}\equiv_PC(I)_i$]\label{resu: irram funs and interval funs}
The space of \irram functions and the space of interval functions are isomorphic as \parspace s.
\end{proposition}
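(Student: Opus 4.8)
The plan is to exhibit polytime translations in both directions between the two parametrised representations; since both represent the same set $C([0,1])$ and carry the identical parameter $\mu_{if}$, the underlying map of the resulting isomorphism is the identity on $C([0,1])$. The translation from $C(I)_i$ into $C(I)_{\irram}$ requires essentially no work: every $\xi_{if}$-name is monotone and hence a $\xi_{\irram f}$-name, and it lies in $\dom(\mu_{if})$ by \Cref{resu: interval function parameter well-defined}. As the target space carries the same parameter, the witness of polytime computability of the identity on $C(I)_i$ — which rounds the oracle answers via the machines $M_p$ from \Cref{resu:the parametrised space of irram reals} and whose output is again monotone — is simultaneously a witness for the translation into $C(I)_{\irram}$.

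The substance is in the reverse translation, which must turn an arbitrary $\xi_{\irram f}$-name $\psi \in \dom(\mu_{if})$ of $f$ into a \emph{monotone} interval function. First I would fix an overlapping dyadic grid: for every scale $k\in\NN$ and every $i\in\Z$ set $D_{k,i} := [i\cdot 2^{-k-1},(i+2)\cdot 2^{-k-1}]$, so that $\diam(D_{k,i}) = 2^{-k}$ and every interval of diameter at most $2^{-k-1}$ is contained in some $D_{k,i}$. Writing $\mathcal{G}$ for the collection of all these intervals meeting $[0,1]$, I would define the realiser $F$ by
\[ F(\psi)(J) := \bigcap \Set{\psi(D)}{D \in \mathcal{G},\ J \subseteq D}. \]
Only grid intervals of diameter at least $\diam(J)$ can contain $J$, and at each such scale at most two of them do, so for a given input $J$ the intersection ranges over only $\bigo{-\lb(\diam(J))}$ intervals, which is linear in the length of the input. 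Each value $\psi(D)$ is read in a rounded form from a beginning segment of the oracle answer, exactly as in \Cref{resu:the parametrised space of irram reals}, and a final outward rounding is applied. Since oracle composition, intersection and outward rounding all preserve set inclusion, $F(\psi)$ is monotone: enlarging $J$ shrinks the index set of the intersection, whence $J_1 \subseteq J_2$ gives $F(\psi)(J_1) \subseteq F(\psi)(J_2)$.

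The key point to verify is that $f(x) \in \psi(D)$ for \emph{every} dyadic interval $D$ with $x \in D$ and $x \in [0,1]$. This follows from the universal quantification in \Cref{def: iRRAM reals}: one builds a $\xi_{\RR_{\irram}}$-name $\varphi'$ of $x$ with $\varphi'(0) = D$ which afterwards shrinks to $\{x\}$ in the Hausdorff metric; then $\psi \circ \varphi'$ is a $\xi_{\RR_{\irram}}$-name of $f(x)$, and hence $f(x) \in \psi(\varphi'(0)) = \psi(D)$. Consequently every factor of the intersection contains $f(x)$, so $F(\psi)(J) \ni f(x)$ whenever $x \in J$. For any nested $\xi_{\RR_i}$-name $\varphi$ of $x \in [0,1]$ the sequence $(F(\psi)\circ\varphi)(n) = F(\psi)(\varphi(n))$ is therefore nested and each of its members contains $f(x)$; and because $\psi \in \dom(\mu_{if})$, a grid interval $D \supseteq \varphi(n)$ of diameter at most $2\diam(\varphi(n))$ satisfies $\diam(\psi(D)) \leq 2^{-m}$ as soon as $\diam(\varphi(n))$ is small relative to $2^{-M(m)}$, where $M$ denotes the second summand of $\mu_{if}(\psi)$. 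Thus the diameters tend to zero and $F(\psi)$ is a $\xi_{if}$-name of $f$. The same estimate bounds the modulus part of $\mu_{if}(F(\psi))$ by $M(m) + \bigo{1}$ together with the cost of the chosen rounding, and $\norm{f}_\infty$ is unchanged, so the parameter blowup is (second-order) polynomial.

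The main obstacle is this reverse translation, and within it the tension between its two requirements: monotonicity as an interval function forces the intersection to include coarse grid intervals, while convergence of the enclosures needs the fine ones to control the diameter. Intersecting over \emph{all} containing grid intervals reconciles both, provided the infinite intersection genuinely collapses to a polynomially sized one. The subtler issue is that the outward rounding, introduced only for the sake of the time bound, must not destroy monotonicity: this I would handle by rounding to \emph{nested} dyadic grids whose precision is governed by a non-decreasing polynomial of the input index, so that the tighter enclosure used for a smaller input interval is still contained in the looser enclosure used for a larger one. The remaining verifications — that the number of oracle queries and the per-query work are polynomial in the input length and in $\mu_{if}(\psi)$ — are routine adaptations of the corresponding arguments for the interval reals in \Cref{resu: interval reals = irram reals} and for evaluation in \Cref{resu:evaluation}.
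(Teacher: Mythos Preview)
Your proposal is correct, and your forward translation coincides with the paper's. The reverse translation, however, takes a genuinely different route. The paper does not build an explicit monotonisation at all: it simply observes that evaluation $\eval\colon C(I)_{\irram}\times I_i\to\RR_i$ is polytime computable (immediate from the definition of $\xi_{\irram f}$ together with \Cref{resu: interval reals = irram reals}) and then invokes the minimality result \Cref{resu:minimality}, which automatically yields a polytime translation into $(\xi_{if},\mu_{if})$. That is the whole argument.

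Your direct construction via intersection over an overlapping dyadic grid is a valid alternative. Its advantage is concreteness: you obtain an explicit interval realiser whose parameter blowup is visibly $\mu_{if}(\psi)(m)+\bigo{1}$ (modulo rounding), rather than the somewhat opaque realiser produced by the simulation-and-search in the proof of \Cref{resu:minimality}. The price is the care you correctly identify around rounding: outward rounding at a precision that is monotone in the scale of $J$ to nested dyadic grids is needed to keep the output monotone, and this has to be done consistently for every oracle query $\psi(D)$, not just the final intersection. Your handling of this point is right (finer nested grids for smaller $J$ ensure that each rounded factor for $J_1\subseteq J_2$ is contained in the corresponding rounded factor for $J_2$), but it is exactly the kind of bookkeeping that the paper's two-line invocation of minimality avoids entirely.
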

\begin{proof}
It is easy to see that $\xi_{if}$ reduces in polynomial time to $\xi_{\irram f}$, using some polytime realiser of the
identity in $C(I)_i$ together with the fact that $\xi_{if}$-names are precisely the monotone $\xi_{\irram f}$-names.
Conversely, on the space $C(I)_{\irram}$ the evaluation functional
$\eval \colon C(I)_{\irram}\times [0,1]_{i} \to \RR_{i}$
is easily seen to be polytime computable (using \Cref{resu: interval reals = irram reals}).
It follows from \Cref{resu:minimality} that $\xi_{\irram f}$ reduces in polynomial time to $\xi_{if}$.
\end{proof}
\end{appendix}

\end{document}